\documentclass[12pt]{article}
\usepackage{amsmath}
\usepackage{amssymb}
\usepackage{dsfont}
\usepackage{amsthm}
\usepackage{graphicx}
\usepackage{enumitem}
\usepackage{booktabs}
\theoremstyle{definition}
\usepackage{geometry}
\newtheorem{theorem}{Theorem}
\newtheorem{lemma}{Lemma}
\newtheorem{proposition}{Proposition}
\newtheorem{assumption}{Assumption}
\newtheorem{definition}{Definition}
\newtheorem{example}{Example}
\newtheorem{remark}{Remark}
\newtheorem{algorithm}{Algorithm}

\usepackage{makecell}

\usepackage{hyperref}
\usepackage{natbib}
\setcitestyle{authoryear,semicolon,aysep{,}}

\title{\textbf{Fixed effects as generated regressors}}
\author{Jiaqi Huang\\
Department of Economics\thanks{I am grateful to Patrik Guggenberger, Marc Henry, Sung Jae Jun and Andres Aradillas-Lopez for their generous comments on the paper. I would also like to thank my classmates in ECON 555 for constructive discussions, especially Taegyu Yang. All errors are mine.}, Pennsylvania State University}

\begin{document}
\maketitle

\begin{abstract}
Many economic models feature moment conditions that involve latent variables. When the latent variables are individual fixed effects in an auxiliary panel data regression, we construct orthogonal moments that eliminate first-order bias induced by estimating the fixed effects. Machine Learning methods and Empirical Bayes methods can be used to improve the estimate of the nuisance parameters in the orthogonal moments. We establish a central limit theorem based on the orthogonal moments without relying on exogeneity assumptions between panel data residuals and the cross-sectional moment functions. In a simulation study where the exogeneity assumption is violated, the estimator based on orthogonal moments has smaller bias compared with other estimators relying on that assumption.  An empirical application on experimental site selection demonstrates how the method can be used for nonlinear moment conditions. 

\textbf{Keywords:} Debiased/Double Machine Learning, Empirical Bayes, Panel Data Models, Elastic Net, Measurement error

\textbf{JEL classification:} C13, C23, C55
\end{abstract}

\section{Introduction}

In this paper, we will consider a model in which the researcher is interested in a parameter $\mu_0$ that can be characterized by a set of moment conditions:
\begin{equation}
\mathbb{E}[m(W_i,\alpha_i,\mu_0)]=0,
\label{eq:mom mu}
\end{equation}
where $W_i$ will be observables that are specific to an individual $i$, while $\alpha_i$ is unobservable to the researcher.

If the researcher has no information about $\alpha_i$, one can derive observable implications from (\ref{eq:mom mu}) which typically results in partial identification of $\mu_0$ (see \cite{entropic,ot,li2021}).  However, oftentimes the researcher has a panel data model that specifies $\alpha_i$ as fixed effects. Formally, for each individual $i$, the researcher observes $(Y_{it},X_{it}')_{t=1}^T$ that are related to $\alpha_i$ through the linear panel data model:
\begin{equation}
Y_{it}=X_{it}'\beta_0+\alpha_i+u_{it}, 
\end{equation}
with $u_{it}$ satisfying sequential exogeneity to accommodate dynamic panel data models.\footnote{Models with additional time fixed effects can be reduced to the setup by first taking the within transformation for each time period (subtracting the mean of each variable within each time period. See Section \ref{application} for an example.} We do not impose any independence assumption between $(Y_{it},X_{it}')_{t=1}^T$ and $W_i$. 

Many interesting economic problems share the above structure. Typical examples include value-added for teachers \citep{chetty2014a} and the match quality between firm-product pairs \citep{kim2024}. More concretely, \cite{chetty2014} and \cite{jackson} study whether students who are matched with teachers with higher value-added in primary school might also have better long-run outcomes in college attendance and earnings. The latent variable $\alpha_i$ represents the value-added on students' test scores for teacher $i$, and $\mu_0$ is the slope coefficient of a linear regression of students' average (for each teacher $i$) long-run outcomes on $\alpha_i$'s.  
The linear panel data model would be specified using $Y_{it}$ as the test score for student $t$ who has teacher $i$, and $X_{it}$'s are covariates that include lagged test scores and socio-economic status of the student. In the context of firm dynamics, \cite{kim2024} show that the firm-product match quality is an important predictor for firm's product adding and dropping behaviors. In their specification, $\alpha_i$ is the firm-product match quality where $i$ represents a firm-product pair and $t$ represents time. The panel data model would use product-firm  specific value of shipment as $Y_{it}$, product tenure and  firm size as the covariates $X_{it}$. The parameter of interest $\mu_0$ is the average marginal effect of a shift in product-firm match quality on the probability that the firm adds or drops that product.

Using the panel data model, one can easily recover an estimate $\hat{\alpha}_i$ of $\alpha_i$, but directly plugging the estimated $\hat{\alpha}_i$ into the moment condition (\ref{eq:mom mu}) and ignoring the potential measurement error in $\hat{\alpha}_i$ might render the inference results invalid. This paper aims to provide an inference procedure for $\mu_0$ that takes into account the error in estimating $\alpha_i$'s properly. Before describing the approach, we first consider a special case of (\ref{eq:mom mu}) which has drawn a lot of attention in the literature \citep{CGK2025,xie2025,battaglia2024,deeb2021}.

\begin{example}\label{ex:ols}

Consider a simple example, where $\alpha_i$ is defined via $Y_{it}=X_{it}'\beta_0+\alpha_i+u_{it}$. We are interested in the regression coefficient $\mu_{02}$ for the regression
\begin{equation}
W_i = \mu_{01}+\mu_{02} \alpha_i+v_i,
\label{eq:ex}
\end{equation}
where $\mathbb{E}(v_i|\alpha_i)=0$. 

The two moment conditions for identifying $\mu_{01}$ and $\mu_{02}$ are:
\begin{equation}
\mathbb{E}\begin{bmatrix} W_i -\mu_{01}-\mu_{02} \alpha_i \\ \alpha_i(W_i-\mu_{01}-\mu_{02} \alpha_i)\end{bmatrix}= \begin{bmatrix} 0 \\ 0 \end{bmatrix}.
\label{eq:ex mom}
\end{equation}

\end{example}
\bigskip

The teacher value-added  example in \cite{chetty2014} follows the above structure where $W_i$ is the average long-run outcome of the students who have teacher $i$ while $\alpha_i$ is teacher $i$'s value-added.

The need to estimate $\alpha_i$'s and to use the estimates in the moment conditions (\ref{eq:mom mu}) suggests having small average mean squared error (MSE) $\frac{1}{N} \sum_{i=1}^N \mathbb{E}(\hat{\alpha}_i-\alpha_i)^2$ could be important for estimating $\mu_0$. 
Recognizing this fact, Empirical Bayes (EB) corrections of $\hat{\alpha}_i$'s are suggested to alleviate the potential impact of the measurement error \citep{angrist2017,CGK2025}. Results using EB corrected versions of $\alpha_i$'s typically abstract away from the panel data structure that defines $\alpha_i$ and start by assuming that  an estimator $\hat{\alpha}_i$ of $\alpha_i$ is available which has a normal distribution with mean $\alpha_i$ and some variance $\sigma_i^2$ \citep{xie2025,CGK2025}. The EB corrections are then applied to $\hat{\alpha}_i$'s. \cite{xie2025} establishes a central limit theorem where estimates of $\mu_0$ are obtained by replacing $\alpha_i$'s in the moments by the EB corrected estimates in Example \ref{ex:ols}. However, the above result requires i) independence between $u_{it}$ and $v_i$; ii) $\alpha_i$ being independent of $\mbox{Var}_i(\sum_{t=1}^T u_{it})$. \cite{CGK2025} recognizes that the latter independence assumption is restrictive. Instead, they advocate combining the classical correction of error-in-variables regression in \cite{deaton1985} with bootstrapped confidence interval, which they showed  is a valid inference procedure that relaxes the independence between $\alpha_i$ and $\mbox{Var}_i(\sum_{t=1}^T u_{it})$. However, they still need to assume orthogonality condition between $u_{it}$ and $v_i$, and their approach is hard to extend to nonlinear moments.

The required independence assumption $u_{it}$ and $v_i$ may not hold in some applications of interest. For example, in the teacher value-added example, if $W_i$ represents average college attendance rate of the students, it can be correlated with $u_{it}$ for the same set of students. In the firm-product match quality example, an idiosyncratic shock in demand could affect the product value of shipment and also induce the firm to drop the product.

To address this issue, we instead exploit the panel data structure that defines $\alpha_i$'s. Although it can be more restrictive in some settings where estimators of $\alpha_i$ does not come from a panel data model, it enables us to relax the exogeneity assumption between $u_{it}$ and $v_i$.  We borrow insights  from \cite{weidner2024} and consider asymptotics in which both the dimension of individuals $N$ and time periods of panel data $T$ go to infinity. This allows us to use debiased machine learning (DML) techniques in \cite{cher2022} to construct orthogonal moment conditions that eliminate any first order effect of the measurement error in $\hat{\alpha}_i$ as long as $T$ goes to infinity sufficiently fast compared with $N$. More specifically, we require $\lim_{N\rightarrow \infty} \frac{N^{\frac{1}{2}}}{T}=0$ and $\lim_{N\rightarrow \infty} \frac{N}{T}=\infty$. As long as the above limits hold, the asymptotic distribution of the estimator $\hat{\mu}$ using the orthogonalized moments will not depend on the specific rate at which $T$ goes to infinity relative to $N$. Constructing the orthogonal moment conditions requires the sequential exogeneity assumption to derive the adjustment term to the moment function as shown in \cite{ichimura}. This shows that taking the panel data structure into account can offer potential improvements on the estimator $\hat{\mu}$. Furthermore, shrinkage methods like EB and Stein's Unbiased Risk Estimation (SURE) can be readily applied to estimates of $\alpha_i$, because with orthogonal moments how $\alpha_i$'s are estimated does not affect the asymptotic distribution.\footnote{SURE has been studied in \cite{xie2012}, and for applications in econometrics, see \cite{kwon}.} We further derive a central limit theorem that characterizes the limit distribution of $\hat{\mu}$ based on orthogonal moments. The result is the first that unites the cross-sectional moments and linear panel data model without imposing parametric assumptions on the panel data part of the model. \footnote{For the case of the panel data model being parametric, see \cite{weidner2024}. }

One stark difference of our approach from the standard DML literature is in the procedure of cross-fitting. DML requires the nuisance parameter $\alpha_i$ to be learned using samples from other folds and then use it to construct the sample moments $m(W_i,\hat{\alpha}_i,\mu)$ for the current fold. However, this is not possible in our context, because only data about individual $i$ contains information about $\alpha_i$. Hence any reasonable estimator of $\alpha_i$ must depend on the data for individual $i$. We adapt the cross-fitting procedure by first estimating $\beta_0$ in the panel data model using samples from other folds (denote the estimator as $\tilde{\beta}_l$), and then an estimator $\tilde{\alpha}_{il}$ of $\alpha_i$ is constructed as the difference between averages of $Y_{it}$ over time and averages of $X_{it}'\tilde{\beta}_l$ over time. The difference $\tilde{\alpha}_{il}-\alpha_i$ depends on $i$-th sample only through the panel data residuals $u_{it}$. We showed that with suitable controls on the correlation between $u_{it}$ and $\frac{\partial m(W_i,\alpha_i,\mu_0)}{\partial \alpha_i}$, using $i$-th sample to estimate both $\alpha_i$ and $\mu_0$ will not affect the limiting distribution of the orthogonal moments (see condition (ii) of Assumption \ref{as:alpha} for a precise statement).

In a simulation study with linear dynamic panel and linear regression model for (\ref{eq:mom mu}), we showed that when $u_{it}$ and $W_i$ are correlated, the approaches used in \cite{CGK2025} and \cite{xie2025} that ignore the potential endogeneity issue have non-negligible bias and consequently larger root mean squared error (RMSE). In addition, standard test statistics based on their estimators do not properly control the null rejection probabilities. In contrast, estimators that employ orthogonal moment conditions have null rejection probabilities closer to the nominal size, and employing an EB (or SURE) corrections on the $\hat{\alpha}_i$'s will perform slightly better in terms of null rejection probabilities but may incur larger RMSE. However, using orthogonal moment conditions results in larger variance in the estimator in the simulation, and can lose power compared with \cite{CGK2025} if endogeneity concerns are not present.\footnote{The estimator used in \cite{CGK2025} achieves the semi-parametric efficiency bound under their assumptions.}

To demonstrate the power of the method in dealing with non-linear moment conditions, we further consider estimating coefficients of a logit model as an empirical illustration. The moment conditions in this context set the expectation of the score function to zero. Specifically, we focus on the experimental site selection of the agricultural catastrophe insurance (ACI) program in China. \cite{policy} have documented consistent patterns of positive site selection in the universe of policy experiments conducted by Chinese central government, but they did not commit to a particular explanation of the positive selections. As implementation details are relegated to local government and good practices discovered during the experimentation phase are encouraged to be rolled out across the whole country, the central government has incentive to pick experimental sites that are more specialized in the specific policy domain and could serve as ``model sites''. We test the hypothesis in the context of ACI, where the key variable of interest ($\alpha_i$) is the level of specialization in grain production for each county. As the specialization level is unobserved to the researcher, it is captured as fixed effects in a linear panel data model with $Y_{it}$ being the grain production per rural employment.  We then focus on the parameter associated with $\alpha_i$ that enters a linear logit model where the response is whether the county is selected as the experimental site. Results show that the coefficient associated with the level of specialization is positive and significant, but using orthogonal moments reduce the magnitude of the coefficient from around 0.56 (using simple plug-in methods) to around 0.4. EB and SURE corrections have more limited impacts on the estimates. Therefore, the evidence favors the theory that governments are choosing sites that are more experienced in agricultural production, and calls for usage of orthogonal moments when endogeneity between $u_{it}$ and $W_i$ is of concern.

\bigskip

\textbf{Literature}
\bigskip

This paper is related to literature on correcting measurement error for estimated latent variables. In addition to the classical approach like error-in-variables regression as in \cite{deaton1985} and the generated regressor problem in \cite{pagan}, more recent contributions include \cite{CGK2025}, \cite{xie2025}, \cite{battaglia2024} and \cite{deeb2021}. \cite{battaglia2024} uses an asymptotic regime so that the measurement error is similar in magnitude to the sampling error of the model (\ref{eq:mom mu}) and derives formulas for bias correction of the estimator. They focused on the case where $\alpha_i$'s are estimated using Machine Learning Methods. \cite{deeb2021} focuses on the teacher value-added context and derives moment conditions that can help estimate $\mu_0$ in Example \ref{ex:ols}. The results of these papers rely on the linear regression structure that defines $\mu_0$, and generalizations to more general models are not straightforward.

The long panel data asymptotics with both $N$ and $T$ going to infinity has been studied in \cite{hahn2002}. See also \cite{hahn2004} for the non-linear panel data context. The main objective is to derive bias correction methods for $\beta_0$ to combat the incidental parameter problem of estimating infinitely many $\alpha_i$'s \citep{neyman}. Other approaches to circumvent the incidental parameter problem include reparametrization and integrating out the $\alpha_i$'s in a Bayesian framework \citep{lancaster}.

This paper also uses techniques from the DML literature as in \cite{dml,cher2022}. The paper that is most closely related is \cite{weidner2024}, which also features a panel data first stage to recover latent variables that is going to be used in a model like (\ref{eq:mom mu}). However, their focus is on non-linear panel data models that are fully parametric and enables the researcher to exploit the likelihood function and construct higher-order orthogonal moments. This paper extends their approach to a semi-parametric context without making any distributional assumptions on $u_{it}$. The construction of the orthogonal moment functions and estimation procedure follows from \cite{cher2022} and \cite{ichimura}. 

The paper is also related to the literature on EB corrections for a parallel decision problem. Examples that use Empirical Bayes to correct for potential measurement error include \cite{angrist2017}. Standard EB corrections may not work well when the error distribution is far from normal. Recent literature suggests choosing the shrinkage parameter based on the idea of Stein's Unbiased Risk Estimation (SURE) \citep{kwon,xie2012,brown}. In our framework, any corrections using EB or SURE on the estimates of $\alpha_i$ can be incorporated, will not affect the asymptotic distribution of $\hat{\mu}$, and could be beneficial to the performance of the estimator in finite sample.

\bigskip

The plan of the paper is as follows. Section 2 formally introduces the model and the concept of Neyman orthogonality. Section 3 describes the orthogonal moment conditions and the cross-fitting procedure, while Section 4 gives the assumptions that establishes a central limit theorem for the estimator proposed in Section 3. Section 5 conducts a simulation study of the estimator based on orthogonal moments compared with other estimators in the literature. The empirical illustration of experimental site selection is detailed in section 6. All proofs are relegated to the appendices.

\section{Individual fixed effects in cross-sectional moments}

We will consider a long-panel data setting in which the cross-sectional dimension is indexed by $i$ and the time dimension is indexed by $t$. Suppose we have a balanced panel with $N$ number of individuals observed across $T$ time periods. For each individual $i$ at time $t$, the researcher observes a vector of observables $Z_{it}=(Y_{it},X_{it}')$ and for each individual $i$, the researcher observes a vector $W_i$ that only varies at the cross-sectional level.  Assume that in the first-stage, the following fixed effect model is estimated to recover the set of fixed effects $\alpha_i$ for each individual $i$.\footnote{We will not restrict to a particular estimator for the first-stage model, but provide sufficient conditions on the estimates $\beta$ and $\alpha_i$ that can eliminate the estimation bias in the second stage.}  
\begin{equation}
Y_{it}=X_{it}'\beta_0+\alpha_i+u_{it},
\label{eq:1st stage}
\end{equation}
where
$Y_{it}\in \mathbb{R}$ is the scalar outcome variable, and $X_{it}\in \mathbb{R}^p$ is a vector of observables.

\begin{assumption}\label{ass:cross}
The parameter $\mu_0$ solves
    \begin{equation}
		\mathbb{E}[m(W_{i},\alpha_i,\mu_0)]=0,
		\label{eq:2nd cross}
		\end{equation}
		for some moment function $m(W_i,\alpha_i,\mu_0)$ that is individual specific and $\alpha_i$ is unobserved but defined via (\ref{eq:1st stage}).
		
		The idiosyncratic shock $u_{it}$ is assumed to be sequentially exogenous and have finite variance:
		\begin{equation}
		E[u_{it}|X_{it},\alpha_i,\Phi_{it}]=0\quad \forall t=1,\cdots,T,\quad E[u_{it}^2]=\sigma^2. 
		\label{eq:seq exogeneity}
		\end{equation}
		where $\Phi_{it} = \{(Y_{it'},X_{it'})_{t'=1}^{t-1}\}$ is the filtration up to time $t-1$.
\end{assumption}
\bigskip

One approach in dealing with the moment conditions is to use the plug-in principle by replacing $\alpha_i$'s
 with their corresponding estimates $\hat{\alpha}_i$,  which gives us the sample analogue estimating equation for $\hat{\mu}$:

\[\sum_{i=1}^N m(W_i,\hat{\alpha}_i,\hat{\mu})=0\]
under Assumption \ref{ass:cross}.

Assuming the moment conditions $m(\cdot)$ 
are sufficiently smooth, a standard Taylor expansion of the above moment condition around $(\hat{\alpha}_i,\mu_0)$ yields:
\[-\frac{1}{N} \sum_{i=1}^N  m(W_{i},\hat{\alpha}_i,\hat{\mu})=\mathbb{E}\left( \frac{\partial m(W_i,\hat{\alpha}_i,\mu^*)}{\partial \mu})+o_p(1)\right) (\hat{\mu}-\mu_0),
\]
for some intermediate value $\mu^*$. The RHS of the above equation is standard in M-estimation, and the main difficulty lies in establishing the asymptotic distribution of the LHS.

Following insights from \cite{weidner2024}, we can do a further Taylor Expansion for the LHS of the above display (for simplicity drop the minus sign):

\begin{align}
 \frac{1}{N} \sum_{i=1}^N m(W_i,\hat{\alpha}_i,\mu_0) &=\underbrace{\frac{1}{N} \sum_{i=1}^N  m(W_i,\alpha_i,\mu_0)}_{I} \nonumber \\
&+\underbrace{\Big( \frac{1}{N} \sum_{i=1}^N \frac{\partial m(W_i,\alpha_i,\mu_0)}{\partial \alpha_i }-\mathbb{E}\left(\frac{\partial m(W_i,\alpha_i,\mu_0)}{\partial \alpha_i }\right) \Big) (\hat{\alpha}_i-\alpha_i)}_{II} \nonumber \\
&+\underbrace{\frac{1}{N} \sum_{i=1}^N \mathbb{E}\left( \frac{\partial m(W_i,\alpha_i,\mu_0)}{\partial \alpha_i }\right) (\hat{\alpha}_i-\alpha_i)}_{III}+O_p(\sup_i |\hat{\alpha}_i-\alpha_i|^2),
\label{eq:exp1 cross}
\end{align}
under Assumption \ref{ass:cross} and also assuming the Hessian of $m(W_i,\alpha_i,\mu_0)$ with respect to $\alpha_i$'s is uniformly bounded almost surely.

The first term is the sum of moment conditions evaluated at the true parameter values, and a standard central limit theorem applies when we rescale the whole expression by $\sqrt{N}$.  
In standard DML setting, the term $II$ in (\ref{eq:exp1 cross}) will be $o_p(1)$ by using sample splitting and cross-fitting method. For example, by estimating $\hat{\alpha}_i$  on half of the sample and then estimating $\mu_0$ using the other half of the sample. However, in the current context, $\hat{\alpha}_i$ has to depend on the data for individual $i$ and hence could be correlated with $\frac{\partial m(W_i,\alpha_i,\mu_0)}{\partial \alpha_i }-\mathbb{E}(\frac{\partial m(W_i,\alpha_i,\mu_0)}{\partial \alpha_i })$. Therefore, standard cross-fitting techniques do not apply in the current context and they need to be adapted to make $II$ negligible asymptotically.   We will show how to do this in the next section. The term $III$ 
will be zero if
$\mathbb{E}(\frac{\partial m(W_i,\alpha_i,\mu_0)}{\partial \alpha_i })=0$. This condition is called Neyman Orthogonality and is widely used in the literature \citep{newey1994,dml,weidner2024}. Finally, the term $O_p(\sup_i|\hat{\alpha}_i-\alpha_i|^2)=O_p(\frac{1}{T})$, and when $T$ increases sufficiently fast with $N$, it will not affect the asymptotic distribution of $\hat{\mu}$. We now formally introduce the definition of Neyman Orthogonality: 
\begin{definition}\label{def:ney}
If for a moment function $m(W_i,\alpha_i,\mu_0)$ 
\begin{equation}
\mathbb{E}\left( \frac{\partial m(W_i,\alpha_i,\mu_0)}{\partial \alpha_i}\right)=0 ,
\label{eq:neyman orth}
\end{equation}
holds, then  $m(\cdot)$ satisfies the Neyman Orthogonality Condition.
\end{definition}
\bigskip

\section{Inference procedure}
\subsection{Constructing Orthogonal Moment Conditions}
As shown in (\ref{eq:exp1 cross}), Neyman orthogonality is useful in controlling $\frac{1}{\sqrt{N}} \sum_{i=1}^N m(W_i,\hat{\alpha}_i,\mu_0)$ when the true $\alpha_i$'s are replaced by their corresponding estimates. 
 Here we will address the question of how to modify moment conditions $\mathbb{E}(m(W_i,\alpha_i,\mu_0))=0$ to achieve Neyman Orthogonality, but still preserve identification of $\mu_0$.  

Before entering into the details, first note that (\ref{eq:seq exogeneity}) implies the following moment restrictions on $\alpha_i$:
\begin{equation}
E[b(X_i,\alpha_i) (Y_{iT}-X_{iT}'\beta_0-\alpha_i)]=0,
\label{eq:no m1}
\end{equation}
for any function $b(X_i,\alpha_i)$ with $X_i=(X_{i1}',\cdots,X_{iT}')'$.\footnote{One can also exploit sequential exogeneity by choosing a different period $t'$ than the final period $T$ and restricting the set of $X_i'$ to be the $X_{it}'$ up to period $t'-1$. This corresponds to the endogenous orthogonality conditions in \cite{ichimura} which requires additional regularity conditions. For simplicity, we do not pursue further in that direction.}

One way to construct orthogonal moment conditions is through modifying the original moment conditions by adding the influence function of the original moment condition with respect to $\alpha_i$ as an adjustment term like \cite{cher2022}, \cite{ichimura} and \cite{weidner2024}. More specifically, we can choose a function $a(X_i,\alpha_i,\mu)$ and form a modified moment condition $m^*(W_i,\alpha_i,Z_i,\beta,\mu,a)$ as:\footnote{Here $Z_i$ denotes $(Z_{i1}',\cdots,Z_{iT}')'$, with $(Y_{it},X_{it}')$ being a component of $Z_{it}$. }
\[m^*(W_i,\alpha_i,Z_i,\beta_0,\mu,a)=m(W_i,\alpha_i,\mu)+a(X_i,\alpha_i,\mu) (Y_{iT}-X_{iT}'\beta_0-\alpha_i).\]

By the sequential exogeneity assumption, we have for any $\mu$,
\begin{align*}
& \mathbb{E}[a(X_i,\alpha_i,\mu) (Y_{iT}-X_{iT}'\beta_0-\alpha_i)] = \mathbb{E}[ \mathbb{E}(a(X_i,\alpha_i,\mu) (Y_{iT}-X_{iT}'\beta_0-\alpha_i)|X_i,\alpha_i)] \\
&=\mathbb{E}[ a(X_i,\alpha_i,\mu) \mathbb{E}( Y_{iT}-X_{iT}'\beta_0-\alpha_i|X_i,\alpha_i)]=0,
\end{align*}
hence the adjustment term will always have mean zero. Therefore, $\mu$ satisfies \\
$\mathbb{E}[m^*(W_i,\alpha_i,Z_i,\beta_0,\mu,a)]=0$ if and only if $\mathbb{E}[m(W_i,\alpha_i,\mu]=0$. As long as $\mu_0$ can be identified by the original moment conditions $\mathbb{E}[m(W_i,Z_i,\mu)]=0$, then it can also be identified by the modified ones $\mathbb{E}[m^*(W_i,\alpha_i,Z_i,\beta_0,\mu,a)]=0$. This is because $\mathbb{E}[m^*(W_i,\alpha_i,Z_i,\beta_0,\mu,a)]=\mathbb{E}[m(W_i,\alpha_i,\mu]$ for any $\mu$ and $a(X_i,\alpha_i,\mu)$. 

Differentiating with respect to $\alpha_i$, and then taking conditional expectation with respect to $(X_i,\alpha_i)$, we have
\begin{align*}
\mathbb{E}\left( \frac{\partial m^*(W_i,\alpha_i,Z_i,\beta_0,\mu_0,a)}{\partial \alpha_i}|X_i,\alpha_i\right)&=\mathbb{E}\left( \frac{\partial m(W_i,\alpha_i,\mu_0)}{\partial \alpha_i}|X_i,\alpha_i\right)-a(X_i,\alpha_i,\mu_0)\\
&-\frac{\partial a(X_i,\alpha_i,\mu_0)}{\partial \alpha_i} \mathbb{E}(Y_{iT}-X_{iT}'\beta-\alpha_i|X_i,\alpha_i)\\
&= \mathbb{E}\left( \frac{\partial m(W_i,\alpha_i,\mu_0)}{\partial \alpha_i}|X_i,\alpha_i\right)-a(X_i,\alpha_i,\mu_0).
\end{align*}
The last step follows by sequential exogeneity assumption in (\ref{eq:seq exogeneity}). The Neyman Orthogonality Condition can be achieved by choosing $a(X_i,\alpha_i,\mu_0)$ so that the above display equals zero:\footnote{The fact that $\mathbb{E}(m^*(W_i,\alpha_i,Z_i,\beta_0,\mu_0,a))=0$ can be easily checked via law of iterated expectation by conditioning on $(X_i,\alpha_i)$ first.}
 
\[a(X_i,\alpha_i,\mu_0)= \mathbb{E}\left( \frac{\partial m(W_i,\alpha_i,\mu_0)}{\partial \alpha_i}|X_i,\alpha_i \right)\]

In addition, 
\[\mathbb{E}\left( \frac{\partial m^*(W_i,\alpha_i,Z_i,\beta_0,\mu_0,a)}{\partial a} \right) =-\mathbb{E}(Y_{iT}-X_{iT}'\beta-\alpha_i)=0\]
so the new moments $m^*(W_i,\alpha_i,Z_i,\beta_0,\mu_0,a)$ is also orthogonal to the new nuisance function $a(X_i,\alpha_i,\mu_0)$. With some abuse of notation, from now on, $a(X_i,\alpha_i,\mu_0)$ is used exclusively to denote $\mathbb{E}\left( \frac{\partial m(W_i,\alpha_i,\mu_0)}{\partial \alpha_i}|X_i,\alpha_i \right)$.

This is the strategy proposed and implemented for nuisance parameters that are typically non-parametric functions of the DGP (See e.g. \cite{dml,cher2022,ichimura}) rather than fixed effect estimates. The component $X_i$ in $a(\cdot)$ concatenates $X_{it}$'s  which will be a high-dimensional object under the data generating process of $N,T\rightarrow \infty$.\footnote{\cite{weidner2024} use the above construction but in their applications, $W_i$ only involves $X_i$. Consequently, $\mathbb{E}\left( \frac{\partial m(W_i,\alpha_i,\mu_0)}{\partial \alpha_i}|X_i,\alpha_i \right)=\frac{\partial m(W_i,\alpha_i,\mu_0)}{\partial \alpha_i}$, which is a significant simplification. } Some sparsity restriction is needed for the conditional expectation $\mathbb{E}\left( \frac{\partial m(W_i,\alpha_i,\mu_0)}{\partial \alpha_i}|X_i,\alpha_i\right)$ to be estimated with reasonable precision by employing standard high-dimensional regression tools like Elastic Net.
\medskip 

\textbf{Example 1 (continued)}

From the moment conditions in (\ref{eq:ex mom}), let $m_1(W_i,\alpha_i,\mu_0)=W_i -\mu_{01}-\mu_{02} \alpha_i$, and $m_2(W_i,\alpha_i,\mu)=\alpha_i(W_i-\mu_{01}-\mu_{02} \alpha_i)$ be the two moment functions. By taking derivatives, we see that the $a_1(X_i,\alpha_i,\mu_0)= -\mu_{02}$  and $a_2(X_i,\alpha_i,\mu_0) = \mathbb{E}(v_i - \mu_{02} \alpha_i|X_i,\alpha_i)$.

\subsection{Implementation}\label{procedure}

Constructing the orthogonal moment $m^*(W_i,\alpha_i,Z_i,\beta_0,\mu_0,a)$ requires estimates of $\alpha_i$, $\beta_0$, and $a(X_i,\alpha_i,\mu_0)$ which in turn needs a preliminary estimator of $\mu_0$. In order to avoid the recursiveness in estimating $\mu_0$ and control the bias due to using the same data to estimate both the nuisance parameters and the parameter of interest $\mu_0$, cross-fitting procedures are widely adopted in the literature \citep{dml,cher2022}. Although standard cross-fitting cannot deal with the estimation error in $\alpha_i$, it is still useful to control the error in the estimation of $a(X_i,\alpha_i,\mu_0)$ (see Assumption \ref{as:rr} below). In particular, sample splitting and cross-fitting enable us to form an estimate of the function $a(X_i,\alpha_i,\mu_0)$ without using observation $i$. Analysis of the plug-in orthogonal moments $m^*(W_i,\hat{\alpha}_i,Z_i,\hat{\beta},\mu,\hat{a})$ can then be performed by conditioning on the observations used to estimate $a(X_i,\alpha_i,\mu_0)$, and hence one can treat $\hat{a}(X_i,\alpha_i,\mu_0)$ as a fixed function and does not need to account for the effect of individual $i$ on estimating $a(X_i,\alpha_i,\mu_0)$.   Adapting the standard cross-fitting procedure to our context gives the following version:

\begin{algorithm}\label{alg}

\begin{enumerate}
	\item Split the cross-sectional dimension of the data into $L$ groups $\{I_l\}_{l=1}^L$ with $|I_l|\geq \lfloor N/L\rfloor$, for $l=1,\cdots,L$. Here $\lfloor x \rfloor$ denotes the greatest integer smaller than $x$. The number of folds $L$ is fixed and does not change with the sample size. 
	
	\item \textbf{Repeat for each $l=1,\cdots, L$}
	\begin{enumerate}
	
	   \item Goal of this step is to obtain a preliminary estimate $\tilde{\mu}_l$ of $\mu_0$ using data not in fold $l$, this is needed to estimate $a(X_i,\alpha_i,\mu_0)$ which depends on the true $\mu_0$: 
		
		\textbf{Repeat for each $l'\neq l$}
	  
		\begin{itemize}
		\item For observations $i\notin I_l \cup I_{l'}$ with $l'\neq l$ compute the Fixed effect regression (\ref{eq:1st stage}) and obtain $\tilde{\beta}_{ll'}$. 
		\item For each $i\in I_{l'}$, estimate $\tilde{\alpha}_{ill'}$ via $\bar{\bar{Y}}_i-\bar{\bar{X}}_i'\tilde{\beta}_{ll'}$, where $\bar{\bar{Y}}_i=\frac{1}{T-1} \sum_{t=1}^{T-1} Y_{it}$ and similarly for $\bar{\bar{X}}_i$.  Alternatively, one can further modify $\tilde{\alpha}_{ill'}$ using Empirical Bayes methods by shrinking the estimates $\tilde{\alpha}_{ill'}$ for $i\in I_{l'}$. 
	\end{itemize}
	
	\textbf{End for loop of $l'$}
	\medskip
	
			Using observations $i\notin I_l$, obtain an initial estimate $\tilde{\mu}_l$ by solving the GMM moment condition with the original (unorthogonalized) moments using $\{\tilde{\alpha}_{ill'}\}_{l'\neq l}$.
	
		\item Goal of this step is to obtain estimate $\alpha_i$ and also $\beta_0$ for fold $I_l$: 
		\medskip
		
		For observations $i\notin I_l$ , compute the Fixed effect regression (\ref{eq:1st stage}) and obtain $\tilde{\beta}_{l}$. For each $i\in I_l$, estimate $\tilde{\alpha}_{il}$ via $\bar{\bar{Y}}_i-\bar{\bar{X}}_i'\tilde{\beta}_{l}$. Alternatively, one can further modify $\tilde{\alpha}_{il}$ using Empirical Bayes methods by shrinking the estimates $\tilde{\alpha}_{il}$ for $i\in I_l$. 
		
	\item Goal of this step is to obtain an estimate $\tilde{a}_{il}$ of the $a(X_i,\alpha_i,\mu_0)$ using observations NOT in $I_l$.
	\medskip
	
	We assume that $a(X_i,\alpha_i,\mu_0)$ can be well approximated by the linear span of a dictionary of transformations of $(X_i,\alpha_i)$
	\begin{equation}
	\frac{\partial m(W_i,\tilde{\alpha}_{ill'},\tilde{\mu}_l)}{\partial \alpha_i} = b(X_i,\tilde{\alpha}_{ill'})'\pi_0 +\epsilon_i,
	\label{eq:rr}
	\end{equation}
	where $\tilde{\alpha}_{ill'}$ is the estimator for $\alpha_i$ in step (a). As one instance of the basis function $b(X_i,\tilde{\alpha}_{ill'})$, it can be $[X_{i1}',\cdots,X_{iT}',\tilde{\alpha}_{ill'}]'$ and possibly some higher-order interaction terms between $X_i$ and $\tilde{\alpha}_{ill'}$.  
	
	 Typically, $b(X_i,\alpha_i)$ will be a high-dimensional object and regularization is needed to ensure desirable performance. Denote the estimated coefficient $\pi_0$ as $\tilde{\pi}_l$, for $i\in I_l$, $a(X_i,\alpha_i,\mu_0)$ is estimated by $\tilde{a}_{il}(X_i,\tilde{\alpha}_{il})=b(X_i,\tilde{\alpha}_{il})'\tilde{\pi}_l$. 
	
	\end{enumerate}
	
	\textbf{End for loop of $l$}
	\medskip
	
	\item The debiased sample moment functions are: 
	\begin{align}
	\hat{m}^*(\mu)&\equiv \frac{1}{N} \sum_{l=1}^L \sum_{i\in I_l} m(W_i,\tilde{\alpha}_{il},\mu)+\hat{\psi},\nonumber \\
	\hat{\psi} &= \frac{1}{N} \sum_{l=1}^L \sum_{i\in I_l} \tilde{a}_{il} (Y_{iT}-X_{iT}' \tilde{\beta}_l -\tilde{\alpha}_{il}).
	\label{eq:orth sample mom}
	\end{align}
	
	\item The final estimate of $\mu_0$ is:
	\begin{equation}
	\hat{\mu} = \arg_{\mu} \min \hat{m}^*(\mu)' \Upsilon_N \hat{m}^*(\mu),
	\label{eq:mu est}
	\end{equation}
	where $\Upsilon$ is a positive-definite weighting matrix. The optimal weighting matrix is $\Upsilon=\Omega^{-1}$ with $\Omega=E(m^*(W_i,\alpha_i,Z_i,\beta_0,\mu_0,a)m^*(W_i,\alpha_i,Z_i,\beta_0,\mu_0,a)')$, and it can be estimated by 
	\begin{equation}
	\hat{\Upsilon}=(\frac{1}{N} \sum_{l=1}^L \sum_{i\in I_l} (m(W_i,\tilde{\alpha}_{il},\tilde{\mu}_l)+\hat{\psi}_{il})(m(W_i,\tilde{\alpha}_{il},\tilde{\mu}_l)+\hat{\psi}_{il})')^{-1},
	\label{eq:Omegahat}
	\end{equation}
	where $\hat{\psi}_{il}=\tilde{a}_{il} (Y_{iT}-X_{iT} \tilde{\beta}_l -\tilde{\alpha}_{il})$ .
	
	\item The estimator for the asymptotic variance takes the usual sandwich form:
	\begin{equation}
	\hat{V} = (\hat{G}'\hat{\Upsilon}\hat{G})^{-1} \hat{G}'\hat{\Upsilon} \hat{\Omega} \hat{\Upsilon} \hat{G} (\hat{G}'\hat{\Upsilon}\hat{G})^{-1},
	\label{eq:vhat}
	\end{equation}
	with
	\[\hat{G}=\frac{1}{N} \sum_{l=1}^L \sum_{i\in I_l}\frac{\partial m(W_i,\tilde{\alpha}_{il},\hat{\mu})}{\partial \mu},\]
	\[\hat{\Omega}=\frac{1}{N} \sum_{l=1}^L \sum_{i\in I_l} (m(W_i,\tilde{\alpha}_{il},\hat{\mu})+\hat{\psi}_{il})(m(W_i,\tilde{\alpha}_{il},\hat{\mu})+\hat{\psi}_{il})' .\]
	
\end{enumerate}

The user chosen quantities include the number of folds $L$ used in the cross-fitting, the dictionary of transformation $b(X_i,\tilde{\alpha}_{ill'})$ used in estimating $a(X_i,\alpha_i,\mu_0)$, the regularization parameters used in estimating $a(X_i,\alpha_i,\mu_0)$ and also whether shrinkage methods are used in estimating $\alpha_i$'s. 

\end{algorithm}
\bigskip

Before presenting the conditions that guarantee the validity of algorithm \ref{alg}, the following remarks are helpful for implementation.
\begin{remark}
\begin{enumerate}
	\item As in \cite{cher2022}, $\mu$ is not a variable to be minimized in $\hat{\psi}_{il}$. In fact, the $\mu$ that we are minimizing over when solving (\ref{eq:mu est}) only enters the original moment function $m(W_i,\alpha_i,\mu)$. 
	\item Unlike \cite{dml,cher2022} and \cite{ichimura}, the estimate of $\alpha_i$ for $i\in I_l$ cannot be made completely independent of samples in fold $l$. It is one particular feature of the fixed effect model that only data on the $i$-th individual is informative of the fixed effect $\alpha_i$. The strategy of constructing estimator of $\alpha_i$ is to use cross-fitting to obtain an estimator of $\tilde{\beta}_l$ not using sample $i$ first, and then use $\bar{\bar{Y}}_i-\bar{\bar{X}}_i'\tilde{\beta}_{l}$ to estimate $\alpha_i$. Under this construction, the difference $\tilde{\alpha}_{il}-\alpha_i$ depends on sample $i$ only through the panel data residuals $(u_{i1},\cdots,u_{iT-1})$.   With suitable controls on the correlation between $u_{it}$ and $\frac{\partial m(W_i,\alpha_i,\mu_0)}{\partial \alpha_i} -a(X_i,\alpha_i,\mu_0)$, the fact that $\tilde{\alpha}_{il}$ depends on $u_{it}$ will not affect first-order asymptotics (see Assumption \ref{as:alpha} below). 
	\item Estimating $a(X_i,\alpha_i,\mu_0)$ requires running a high-dimensional regression as in (\ref{eq:rr}). Machine learning methods that exploit sparsity of the function should be used to recover an estimate of $a(X_i,\alpha_i,\mu_0)$. Although lasso is the standard option in the literature \citep{dml,cher2022}, it may not be suitable in the current context. Specifically, \cite{EN} and \cite{AEN} document that lasso performs poorly when the regressors are highly correlated. Here the correlation between $X_{it}$'s and $\alpha_i$ could be high and make lasso undesirable. Instead, elastic net or adaptive elastic net incorporates an additional $L_2$ penalty which helps to deal with the high correlation between the regressors and could perform better than lasso \citep{AEN,EN}. 
	\item One can employ different estimators for $\tilde{\alpha}_{il}$ and $\tilde{\alpha}_{ill'}$ instead of what one typically obtains using standard fixed effect dummy variable regression. In particular, using Empirical Bayes to improve the estimator precision has become quite popular in the literature \citep{kwon,angrist2017,reviewteacher}. The orthogonal moment construction is important for Empirical Bayes estimates to be used in the context. Otherwise, the shrinkage will introduce bias in the estimates of $\alpha_i$ which may not disappear in the asymptotic distribution.  Moreover, simulation evidence suggests that using Empirical Bayes estimates leads to improvement in size control especially when the time dimension is relatively short.
	
	There are different ways to implement Empirical Bayes (EB) regularization of the estimates $\tilde{\alpha}_{il}$. The case of $\tilde{\alpha}_{ill'}$ is almost the same and is omitted. The basic idea is to assume that $\alpha_i \sim \mathcal{N}(\bar{\alpha},\sigma_{\alpha}^2)$, and employ a normal approximation to the usual dummy variable regression  $\tilde{\alpha}_{il}\sim \mathcal{N}(\alpha_i,u_i^2)$ with $u_i^2= \mbox{Var}(\frac{1}{T-1} \sum_{t=1}^{T-1} u_{it})$.\footnote{I ignore the bias of $\bar{X}_i'(\beta-\tilde{\beta}_{l})$ which converges to zero at rate $\sqrt{NT}$ as in \cite{kwon,angrist2017}. } The Bayes estimate of $\alpha_i$ with the prior $\alpha_i \sim \mathcal{N}(\bar{\alpha},\sigma_{\alpha}^2)$ is then
	\[\tilde{\alpha}_{il}^{ Bayes} = \frac{u_i^2}{u_i^2+ \sigma_{\alpha}^2} \bar{\alpha} +\frac{\sigma_{\alpha}^2}{\sigma_{\alpha}^2+u_i^2} \tilde{\alpha}_{il}\]
	
	We can then replace the unknown quantities by estimates of themselves: when $u_{it}$ are iid across time, $\hat{u}_i^2$ can be obtained by $\hat{u}_i^2=\frac{1}{(T-1)^2} \sum_{t=1}^{T-1} \hat{u}_{it}^2$ with $\hat{u}_{it}=Y_{it}-X_{it}' \tilde{\beta}_l -\tilde{\alpha}_{il}$. When $u_{it}$'s are serially correlated across time, standard Newey-West estimator that is robust to autocorrelation can be used. $\bar{\alpha}$ can be estimated by taking the sample mean of $\tilde{\alpha}_{il}$ for $i\in I_l$: $\hat{\bar{\alpha}}_{l} = \frac{1}{|I_l|} \sum_{i\in I_l}  \tilde{\alpha}_{il}$. Using the assumption that the cross-sectional draws are iid, $\frac{1}{|I_l|-1} \sum_{i\in I_l} u_i^2+ \sigma_{\alpha}^2$ can be estimated by $\frac{1}{|I_l|-1} \sum_{i\in I_l} (\tilde{\alpha}_{il}-\hat{\bar{\alpha}}_{l})^2$, which leads us to an estimate of $\sigma_{\alpha}^2$ as 
	\[\frac{1}{|I_l|-1} \sum_{i\in I_l} (\tilde{\alpha}_{il}-\hat{\bar{\alpha}}_{l})^2- \frac{1}{|I_l|-1} \sum_{i\in I_l} \hat{u}_i^2\]
	
	Another approach is suggested by \cite{kwon} and \cite{xie2012} who use Stein's Unbiased Risk Estimation (SURE) to select the best performing $\sigma_{\alpha}^2$ in terms of risk for a fixed estimator $\hat{\bar{\alpha}}_{l} = 
	\frac{1}{|I_l|} \sum_{i\in I_l}  \tilde{\alpha}_{il}$ of $\bar{\alpha}_l$.\footnote{\cite{kwon} also considers shrinkage to another constant or a pre-estimated function. Here I focus on shrinkage to the sample mean in the ith fold for simplicity. } In my context, using the squared loss, the risk of employing the empirical Bayes estimator $\tilde{\alpha}_{il}^{EB}$ for the fold $l$ is 
	\[R(\{\tilde{\alpha}_{il}^{EB}\}_{i\in I_l},\{\alpha_i\}_{i\in I_l})=\frac{1}{|I_l|} \sum_{i\in I_l}\mathbb{E}(\tilde{\alpha}_{il}^{EB}-\alpha_i)^2\]
	
	It turns out that the risk can be estimated via SURE for any choice of $\sigma_{\alpha}^2$. An unbiased estimator (conditional on $\alpha_i$ for $i\in I_l$) can be defined as $URE(\{\alpha_i\}_{i\in I_l},\sigma_{\alpha}^2)=\frac{1}{|I_l|} \sum_{i\in I_l} URE_i(\alpha_i,\sigma_{\alpha}^2)$ for (see pp.11 of \cite{kwon})
	\[URE_i(\alpha_i,\sigma_{\alpha}^2)=\hat{u}_i^2 + \frac{\hat{u}_i^4 (\hat{\alpha}_i-\hat{\bar{\alpha}})^2}{(\sigma_{\alpha}^2+\hat{u}_i^2)^2}-2\frac{\hat{u}_i^4}{\sigma_{\alpha}^2+\hat{u}_i^2}\]
	
	\cite{kwon} and \cite{xie2012} show that under general conditions, choosing the $\sigma_{\alpha}^2$ that minimizes $URE(\{\alpha_i\}_{i\in I_l},\sigma_{\alpha}^2)$ can perform asymptotically as good as minimizing the true loss $\frac{1}{|I_l|}  \sum_{i\in I_l} (\tilde{\alpha}_{il}^{EB}-\alpha_i)^2$. Let $\sigma_{\alpha,SURE}^2$ be the minimizer of $URE(\{\alpha_i\}_{i\in I_l},\sigma_{\alpha}^2)$. Then the SURE estimator of $\tilde{\alpha}_{il}^{ SURE}$ is defined as:
	\begin{equation}
	\tilde{\alpha}_{il}^{ SURE} = \frac{\hat{u}_i^2}{\hat{u}_i^2+ \sigma_{\alpha,SURE}^2} \hat{\bar{\alpha}}_i +\frac{\sigma_{\alpha,SURE}^2}{\sigma_{\alpha,SURE}^2+\hat{u}_i^2} \tilde{\alpha}_{il}
	\label{eq:alpha SURE}
	\end{equation}
	
	As both $\tilde{\alpha}_{il}^{EB}$ and $\tilde{\alpha}_{il}^{SURE}$ applies shrinkage that depends on all the data in fold $l$, the modifications are going to introduce dependence in the estimates of $\alpha_i$'s even with cross-fitting techniques. When dealing with the introduced dependence, I will use $\tilde{\alpha}_{il}^*$ to represent both cases where either the EB method or the SURE method is used to obtain more precise estimates of $\alpha_i$'s. 
	
\end{enumerate}

\end{remark}

\section{Asymptotic Theory}\label{as theory}

In this section, we give some high-level conditions that guarantee that the estimator $\hat{\mu}$ proposed in Section \ref{procedure} will be $\sqrt{N}$-asymptotically normal under the asymptotics that both $N,T \rightarrow \infty$, and $\hat{V}$ will be a consistent estimator of the asymptotic variance. Importantly, as in \cite{cher2022}, I will first give conditions such that the orthogonal moment $\hat{m}^*(\mu_0)$ defined in (\ref{eq:orth sample mom}) satisfies
\begin{equation}
\frac{1}{\sqrt{N}}\hat{m}^*(\mu_0) =\frac{1}{\sqrt{N}} \sum_{i=1}^N m^*(W_i,\alpha_i,Z_i,\beta_0,\mu_0,a) +o_p(1)
\label{eq:as equiv}
\end{equation}

\begin{assumption}\label{as:rr}
Suppose $T \propto N^g$ for some $1/2<g<1$. There exists some constant $1/2>\zeta \geq 1-g$ such that
\begin{enumerate}[label=(\roman*)]
	\item $\tilde{\beta}_l-\beta = O_p(\frac{1}{\sqrt{NT}})$ for all $l=1,\cdots,L$
	\item $\mathbb{E}[(\tilde{a}_{il}-a(X_i,\alpha_i,\mu_0))^2| \mathcal{W}_l^C]= o_p(N^{-\zeta})$ for all $l=1,\cdots,L$ where $\mathcal{W}_l^C$ denotes the data that are not in $I_l$. 
	\item The estimated function $\tilde{a}_{il}$ is square integrable. 
	\item $\frac{1}{|I_l|} \sum_{i\in I_l} (\tilde{a}_{il}-a(X_i,\alpha_i,\mu_0))^2=o_p(N^{-\zeta})$ for all $l=1,\cdots,L$. 
\end{enumerate}

\end{assumption}
\medskip

Condition (i) imposes a rate of convergence for the parameter $\beta$ in the first-stage panel data regression. The rate of convergence $O_p(\frac{1}{\sqrt{NT}})$ is standard for the panel data setting. Condition (ii) is a high-level convergence rate for prediction error of  the function $a(X_i,\alpha_i,\mu_0)$. The restriction $\zeta\geq 1-g$ is needed to control $(\tilde{\alpha}_{il}-\alpha_i)(\tilde{a}_{il}-a(X_i,\alpha_i,\mu_0)$. As $\tilde{\alpha}_{il}-\alpha_i = O_p(\frac{1}{\sqrt{T}})=O_p(N^{-\frac{g}{2}})$, the condition $\zeta\geq 1-g$ ensures that $(\tilde{\alpha}_{il}-\alpha_i)(\tilde{a}_{il}-a(X_i,\alpha_i,\mu_0)=o_p(\frac{1}{\sqrt{N}})$.  More primitive conditions can be specified depending on different types of learning algorithms that are used to learn the function $a(X_i,\alpha_i,\mu_0)$. Condition (iii) focuses attention on functions that are square-integrable which is standard in non-parametric regressions. 
Condition (iv) is implied by condition (ii) by observing that conditional on $\mathcal{W}_l^C$, we have

\[\frac{1}{|I_l|} \sum_{i\in I_l} [(\tilde{a}_{il}-a(X_i,\alpha_i,\mu_0))^2-\mathbb{E}((\tilde{a}_{il}-a(X_i,\alpha_i,\mu_0))^2|\mathcal{W}_l^C)]=O_p(N^{-1/2})=o(N^{-\zeta}),\]
where the first equality observes that it is an average of iid zero mean random variables and uses lemma 6.1 of \cite{dml} to convert conditional convergence to unconditional convergence. The second one uses $\zeta<1/2$ in Assumption \ref{as:rr}.

\begin{assumption}\label{as:alpha}
Let $\tilde{\alpha}_{il}^*$ be the estimator of $\alpha_i$ used in step 2(b) of Algorithm \ref{alg} with EB or SURE shrinkage. Assume $T \propto N^g$ for some $1/2<g<1$, and
\begin{enumerate}[label=(\roman*)]
	\item $\frac{1}{|I_l|} \sum_{i\in I_l} (\tilde{\alpha}_{il}^*-\alpha_i)^2 = O_p(T^{-1})$ for all $l=1,\cdots,L$. 
	\item $\mathbb{E}((\frac{\partial m(W_i,\alpha_i,\mu_0) }{\partial \alpha_i}- a(X_i,\alpha_i,\mu_0))^2\bar{\bar{u}}_i^2)=o(1)$, $\mathbb{E}((\frac{\partial m(W_i,\alpha_i,\mu_0) }{\partial \alpha_i}- a(X_i,\alpha_i,\mu_0))\bar{\bar{u}}_i)=o_p(N^{-1/2})$ for $\bar{\bar{u}}_i=\frac{1}{T-1} \sum_{t=1}^{T-1} u_{it}$. 
	\item $\mbox{Var}(\alpha_i)=\sigma_{\alpha}^2>0$
	\item $\mathbb{E}[(T \hat{u}_{i}^2)^{2+\delta}|\mathcal{W}_l^C]<\infty$ for $\delta$ that satisfies $g>\frac{1}{2}+\frac{1}{2+\delta}$
\end{enumerate}
Also assume the first two conditions hold with $\tilde{\alpha}_{il}^*$ replaced with $\tilde{\alpha}_{il}=\bar{\bar{Y}}_i-\bar{\bar{X}}_i'\tilde{\beta}_{l}$. \footnote{Condition (iii) and (iv) are not needed if no shrinkage method is used for estimating $\alpha_i$: $\tilde{\alpha}_{il}=\bar{\bar{Y}}_i-\bar{\bar{X}}_i'\tilde{\beta}_{l}$.} 

\end{assumption}
\medskip

Condition (i) of Assumption \ref{as:alpha} imposes a rate of convergence of the loss for the parallel estimation problem for all $\alpha_i$'s with $i\in I_l$.  Although asymptotically using the original $\tilde{\alpha}_{il}$ based on  averages over time leads to the same asymptotic results, for finite sample performance, using the Empirical Bayes variants and SURE can lead to null rejection probabilities that are closer to the nominal size. In addition, taking square of condition (i) on both sides, we see that
\[\frac{1}{N} \sum_{i,j\in I_l} (\tilde{\alpha}_{il}^*-\alpha_i)^2 (\tilde{\alpha}_{jl}^*-\alpha_i)^2=O_p(\frac{N}{T^2}),\]
which implies 
\[\frac{1}{N} \sum_{i\in I_l} (\tilde{\alpha}_{il}^*-\alpha_i)^4=O_p(\frac{N}{T^2})=o_p(1).\]
When $T \propto N^g$ for $g>1/2$, then $O_p(\frac{N}{T^2})=o_p(1)$.  

Condition (ii) is needed to ensure that $\frac{1}{\sqrt{N}} \sum_{i\in I_l} (\frac{\partial m(W_i,\alpha_i,\mu_0) }{\partial \alpha_i}- a(X_i,\alpha_i,\mu_0))(\tilde{\alpha}_{il}^*-\alpha_i)=o_p(1)$. Even if $\tilde{\alpha}_{il}$ is constructed using data on individual $i$, the condition makes the correlation of the estimation error with the mean zero term $\frac{\partial m(W_i,\alpha_i,\mu_0) }{\partial \alpha_i}- a(X_i,\alpha_i,\mu_0)$ negligible. 
The moment conditions are not restrictive as it involves the time averages $\bar{\bar{u}}_i$. Typically, under standard assumptions on weakly dependent data $\sqrt{T} \bar{\bar{u}}_i=O_p(1)$, hence one would expect $T \mathbb{E}((\frac{\partial m(W_i,\alpha_i,\mu_0) }{\partial \alpha_i}- a(X_i,\alpha_i,\mu_0))^2\bar{\bar{u}}_i^2)<\infty$, justifying the first part of condition (ii). The second part is similar to condition (ii) in assumption 1 of \cite{battaglia2024}. A sufficient condition is $\frac{\partial m(W_i,\alpha_i,\mu_0) }{\partial \alpha_i}- a(X_i,\alpha_i,\mu_0)$ being mean independent of the time varying idiosyncratic shocks $u_{it}$, which guarantees that $\mathbb{E}((\frac{\partial m(W_i,\alpha_i,\mu_0) }{\partial \alpha_i}- a(X_i,\alpha_i,\mu_0))\bar{\bar{u}}_i)=0$.  This is the case when (\ref{eq:1st stage}) is a dynamic panel data model ($X_{it}$ includes $Y_{i,t-1}$): for any $t<T$,

\begin{align*}
& \mathbb{E}\left( u_{it} \frac{\partial m(W_i,\alpha_i,\mu_0) }{\partial \alpha_i} \right)= \mathbb{E}[\mathbb{E}\left(u_{it} \frac{\partial m(W_i,\alpha_i,\mu_0) }{\partial \alpha_i} | \alpha_i, X_i \right)]\\
&= \mathbb{E}[u_{it} \mathbb{E}\left(\frac{\partial m(W_i,\alpha_i,\mu_0) }{\partial \alpha_i} | \alpha_i, X_i\right)]= \mathbb{E}(u_{it} a(X_i,\alpha_i,\mu_0)),
\end{align*}
where the second last line follows because $X_i$ includes $Y_{it}$ and $X_{it}$, so $u_{it}$ for $t<T$ is a deterministic function of $X_i$ and $\alpha_i$. If one constructs the estimator $\tilde{\alpha}_{il}$ using only data up to period $T-1$, then the averages will be orthogonal to the deviation $\frac{\partial m(W_i,\alpha_i,\mu_0) }{\partial \alpha_i}- a(X_i,\alpha_i,\mu_0)$. When (\ref{eq:1st stage}) does not include a dynamic specification but the researcher is willing to assume that $u_{it}$'s are mean independent across $t$, one can simply add $(Y_{i1},\cdots,Y_{it-1})$ into $X_i$. We do this in the empirical application in Section \ref{application}. A similar law of iterated expectation argument can show that condition (ii) still holds.

 The last two conditions are only needed when one uses data in fold $l$ to perform shrinkage of $\tilde{\alpha}_{il}$. Condition (iii) requires that there exists non-trivial individual heterogeneity in $\alpha_i$'s, the assumption is needed to ensure that in constructing the EB or SURE type estimators the shrinkage factor $\frac{u_{i}^2}{u_{i}^2+ \sigma_{\alpha}^2}$ will not involve division by a number close to zero. Combined with condition (iii), condition (iv) is a mild moment existence condition that guarantees that $\sup_{i\in I_l}\frac{\hat{u}_{i}^2}{\hat{u}_{i}^2+ \hat{\sigma}_{\alpha}^2} =o_p(N^{-1/2})$. For instance, under the assumption that $u_{it}$ is iid across both time and individuals, an estimator for $u_{i}^2=\mbox{Var}(\frac{1}{T-1} \sum_{t=1}^{T-1} u_{it})$ is $\frac{1}{(T-1)^2} \sum_{t=1}^{T-1} (\hat{u}_{it})^2$ for $\hat{u}_{it}=Y_{it}-X_{it}'\tilde{\beta}_l-\tilde{\alpha}_{il}$. Condition (iv) then requires that 
\[\mathbb{E}\{[\frac{1}{T-1}\sum_{t=1}^{T-1} (\hat{u}_{it})^2]^{2+\delta}\}<\infty ,\]
which holds under general moment existence conditions on $(X_{it},u_{it},\alpha_i)$. A similar argument can be constructed for the case where $u_{it}$ might be serially correlated across time. To control $\sup_{i\in I_l}\hat{u}_{i}^2$, there exists a trade-off between the existence of higher moments and the growth rate of the time dimension $T$ with $N$, as manifested in the condition $g>\frac{1}{2}+\frac{1}{2+\delta}$. In general, a smaller growth rate implies that we are taking supremum over a larger group of individuals for the same level of randomness due to averaging across time. Hence higher-order moments are needed to ensure that the quantity is $o_p(N^{-1/2})$. 

In addition to the rate conditions imposed above, we also collect the needed conditions on existence of moments:
\begin{assumption}\label{as:mom}

\begin{enumerate}[label=(\roman*)]
  \item $\sup_{\alpha,W_i} \| \frac{\partial^2 m(W_i,\alpha_i,\mu_0) }{\partial \alpha_i^2} \|=C_m <\infty$, $\mathbb{E}(u_{iT}^2|X_i,\bar{\bar{Y}}_i,\alpha_i)\leq C<\infty$ with probability approaching 1
	\item $\mathbb{E}(\|X_{it}\|^2)< \infty$ for all $t=1,\cdots,T$
	\item $\mathbb{E}(a(X_i,\alpha_i,\mu_0)^2 \| X_{iT}\|^2)< \infty$, $\mathbb{E}(\tilde{a}_{il}^2 \|X_{iT}\|^2)< \infty$,  
	$\mathbb{E}(\frac{\partial m(W_i,\alpha_i,\mu_0) }{\partial \alpha_i}- a(X_i,\alpha_i,\mu_0))^4<\infty$
	\item $\mathbb{E}(\|(\frac{\partial m(W_i,\alpha_i,\mu_0) }{\partial \alpha_i}- a(X_i,\alpha_i,\mu_0))\bar{\bar{X}}_i'\|)<\infty$ for $\bar{X}_i=\frac{1}{T} \sum_{t=1}^T X_{it}$. 
\end{enumerate}
\end{assumption}
\medskip

Condition (i) imposes a bound on the second-order derivative of $m(W_i,\alpha_i,\mu_0)$ with respect to $\alpha_i$. In Example \ref{ex:ols}, the moment function is quadratic in $\alpha_i$, and it will be satisfied automatically. In addition, $\mathbb{E}(u_{iT}^2|X_i,\bar{\bar{Y}}_i,\alpha_i)$ is a restriction on the conditional variance of $u_{iT}$. Conditions (ii)-(iv) guarantee the existence of moments so that suitable Law of Large Numbers can be applied.  

Assumption \ref{as:mom eb} lists additional moment conditions that are needed when $\tilde{\alpha}_{il}^*$ uses the EB or SURE variant. 

\begin{assumption}\label{as:mom eb}
The following assumption is needed for estimates of $\alpha_i$ that applies Empirical Bayes shrinkage or SURE:

	\[\mathbb{E}(|(\frac{\partial m(W_i,\alpha_i,\mu_0) }{\partial \alpha_i}- a(X_i,\alpha_i,\mu_0))(\alpha_i+\bar{\bar{u}}_i-\mathbb{E}(\alpha))|)<\infty\]
\end{assumption}
\medskip

\begin{proposition}\label{prop:omega}
Under Assumption \ref{ass:cross}-\ref{as:mom eb}, and Assumption \ref{as:mom eb} is needed only if EB or SURE variants of estimators $\tilde{\alpha}_{il}^*$ for $\alpha_i$ are used, for $\zeta$ and $g$ satisfying $\zeta+g\geq 1$, we have
\begin{align}
& \frac{1}{\sqrt{N}} \sum_{l=1}^L \sum_{i\in I_l} m(W_i,\tilde{\alpha}_{il},\mu_0)+\tilde{a}_{il} (Y_{iT}-X_{iT}' \tilde{\beta}_l -\tilde{\alpha}_{il})\nonumber \\
& =\frac{1}{\sqrt{N}} \sum_{i=1}^N m(W_i,\alpha_i,\mu_0)+ a(X_i,\alpha_i,\mu_0)(Y_{iT}-X_{iT}'\beta_0-\alpha_i) +o_p(1)
\label{eq:omega}
\end{align}

\end{proposition}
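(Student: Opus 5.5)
Fix a fold $l$; since $L$ is fixed, it suffices to show that the difference between the fold-$l$ contributions to the two sides of (\ref{eq:omega}) is $o_p(1)$. Write $\delta_i=\tilde{\alpha}_{il}-\alpha_i$, $\Delta_i=\tilde{a}_{il}-a_i$ with $a_i=a(X_i,\alpha_i,\mu_0)$, and $d_i=\partial m(W_i,\alpha_i,\mu_0)/\partial\alpha_i$. Also note $Y_{iT}-X_{iT}'\tilde{\beta}_l-\tilde{\alpha}_{il}=u_{iT}-X_{iT}'(\tilde{\beta}_l-\beta_0)-\delta_i$. A second-order Taylor expansion in $\alpha_i$, using Assumption \ref{as:mom}(i), gives the decomposition
\begin{align*}
\frac{1}{\sqrt N}\sum_{i\in I_l}\big[m^*_{\text{plug-in}}-m^*_{\text{true}}\big]
&=\frac{1}{\sqrt N}\sum_{i\in I_l}(d_i-a_i)\delta_i+\frac{1}{\sqrt N}\sum_{i\in I_l}\Delta_i u_{iT}
-\frac{1}{\sqrt N}\sum_{i\in I_l}\Delta_i\delta_i\\
&\quad-\frac{1}{\sqrt N}\sum_{i\in I_l}\tilde{a}_{il}X_{iT}'(\tilde{\beta}_l-\beta_0)+R_l,
\end{align*}
with $|R_l|\le \frac{C_m}{2\sqrt N}\sum_{i\in I_l}\delta_i^2$. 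I will show that each term is $o_p(1)$, in the following order.

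\textbf{Remainder.} By Assumption \ref{as:alpha}(i), $R_l=O_p(\sqrt N/T)=O_p(N^{1/2-g})=o_p(1)$ because $g>1/2$.

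\textbf{Cross term.} By Cauchy--Schwarz together with Assumption \ref{as:rr}(iv) and Assumption \ref{as:alpha}(i),
\[
\frac{1}{\sqrt N}\Big|\sum_{i\in I_l}\Delta_i\delta_i\Big|\le \sqrt N\, o_p(N^{-\zeta/2})\,O_p(N^{-g/2})=o_p(N^{(1-\zeta-g)/2})=o_p(1),
\]
where the last step uses $\zeta+g\ge1$.

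\textbf{$\beta$-term.} By Assumption \ref{as:rr}(i) and Assumption \ref{as:mom}(iii) with a LLN, this term is $\sqrt N\,O_p(1)\,O_p((NT)^{-1/2})=o_p(1)$.

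\textbf{$\Delta_i u_{iT}$ term.} Condition on $\mathcal{W}_l^C$, so that $\tilde{\pi}_l$ and $\tilde{\beta}_l$ are fixed. The function $\tilde a_{il}$ depends on $(X_i,\bar{\bar Y}_i)$, and by sequential exogeneity $\mathbb{E}(u_{iT}\mid X_i,\bar{\bar Y}_i,\alpha_i)=0$, since $\bar{\bar Y}_i$ uses only periods $<T$. Here we need $\Delta_i$ to be treated as a function of $(X_i,\alpha_i,\bar{\bar Y}_i)$; this also covers the EB/SURE case up to the common shrinkage quantities. The summands are therefore conditionally mean zero and independent across $i$, and their conditional second moment is bounded by $C\,\mathbb{E}[\Delta_i^2\mid\mathcal{W}_l^C]=o_p(N^{-\zeta})$ using Assumption \ref{as:mom}(i) and \ref{as:rr}(ii). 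Chebyshev's inequality and Lemma 6.1 of \cite{dml} then give $o_p(1)$.

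\textbf{Main obstacle: the term $\frac{1}{\sqrt N}\sum_{i}(d_i-a_i)\delta_i$.} Cross-fitting cannot help here because $\delta_i$ uses unit $i$'s data. I would decompose
\[
\delta_i=\bar{\bar u}_i-\bar{\bar X}_i'(\tilde\beta_l-\beta_0).
\]
The $\beta$ part is $\sqrt N\,O_p((NT)^{-1/2})$ times an average that is $O_p(1)$ by Assumption \ref{as:mom}(iv). For the $\bar{\bar u}_i$ part, I center it: its mean is $\sqrt N\,o(N^{-1/2})=o(1)$ by the second half of Assumption \ref{as:alpha}(ii), and its variance is at most $\mathbb{E}[(d_i-a_i)^2\bar{\bar u}_i^2]=o(1)$ by the first half, so Chebyshev applies.

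\textbf{EB/SURE extension.} For $\tilde\alpha^*_{il}$, write
\[
\tilde\alpha^*_{il}-\tilde\alpha_{il}=-s_i(\tilde\alpha_{il}-\hat{\bar\alpha}_l),\qquad s_i=\frac{\hat u_i^2}{\hat u_i^2+\hat\sigma_\alpha^2}.
\]
Using Assumption \ref{as:alpha}(iii)--(iv), I would show $\sup_i s_i=o_p(N^{-1/2})$: a union bound over $N$ units with $2+\delta$ moments of $T\hat u_i^2$ works when $g>\frac12+\frac1{2+\delta}$. Combining this with Assumption \ref{as:mom eb} and a LLN, the extra contribution $\frac{1}{\sqrt N}\sum_{i\in I_l}(d_i-a_i)s_i(\alpha_i+\bar{\bar u}_i-\hat{\bar\alpha}_l)$ is at most $\sqrt N\sup_i s_i\cdot O_p(1)=o_p(1)$. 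The same bound handles the $\Delta_i$ terms. The delicate point I expect is that the shrinkage breaks conditional independence across $i$ within fold $l$; bounding via $\sup_i s_i$ rather than term by term is what avoids this.
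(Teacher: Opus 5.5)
Your proposal is correct and follows essentially the same route as the paper, which proves the proposition by chaining three lemmas: replacing $\tilde a_{il}$ by $a(X_i,\alpha_i,\mu_0)$ via the $u_{iT}$, $\tilde\beta_l$ and cross terms; removing $\tilde\beta_l$; and a second-order Taylor expansion in $\alpha_i$ with the split $\tilde\alpha_{il}-\alpha_i=\bar{\bar{u}}_i+\bar{\bar{X}}_i'(\beta_0-\tilde\beta_l)$, plus the $\sup_i s_{in}=o_p(N^{-1/2})$ union-bound argument for EB/SURE. The only differences are presentational: you collect everything into one decomposition, and you bound the combined term $\tilde a_{il}X_{iT}'(\tilde\beta_l-\beta_0)$ directly through the $\tilde a_{il}$ moment in Assumption \ref{as:mom}(iii) rather than splitting it into an $a$-part and a $(\tilde a_{il}-a)$-part.
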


\medskip

Proposition \ref{prop:omega} is the key result that establishes that under the orthogonal moment construction and cross-fitting, the asymptotic distribution is the same as if the true $\alpha_i$ and the true function $a(X_i,\alpha_i,\mu_0)$ are known at $\mu=\mu_0$. The fact that both $\alpha_i$ and the conditional mean function $a(X_i,\alpha_i,\mu_0)$ are estimated will not affect the asymptotic distribution of the estimator $\hat{\mu}$ other than accounting for the addition of the orthogonal correction term $a(X_i,\alpha_i,\mu_0)(Y_{iT}-X_{iT}'\beta_0-\alpha_i)$. For $m^*(W_i,\alpha_i,Z_i,\beta_0,\mu_0,a)= m(W_i,\alpha_i,\mu_0)+ a(X_i,\alpha_i,\mu_0)(Y_{iT}-X_{iT}'\beta_0-\alpha_i)$, the second line of (\ref{eq:omega}) is a normalized sum of iid mean zero random variables, under the condition that $\Omega\equiv \mathbb{E}(m^*(W_i,\alpha_i,Z_i,\beta_0,\mu_0,a) m^*(W_i,\alpha_i,Z_i,\beta_0,\mu_0,a)')$ (with $\mathbb{E}(\|m^*(W_i,\alpha_i,Z_i,\beta_0,\mu_0,a)\|^2)<\infty$), standard Central Limit Theorem will imply
\[\frac{1}{\sqrt{N}} \sum_{i=1}^N m^*(W_i,\alpha_i,Z_i,\beta_0,\mu_0,a) \xrightarrow{d} \mathcal{N}(0,\Omega)\]

The following condition ensures that $\Omega$ can be consistently estimated by $\hat{\Upsilon}^{-1}$ in (\ref{eq:Omegahat}).

\begin{assumption}\label{as:mom omega}

\begin{enumerate}[label=(\roman*)]
	\item $(\frac{1}{N} \sum_{i\in I_l} (\tilde{a}_{il}-a(X_i,\alpha_i,\mu_0))^4)(\frac{1}{N} \sum_{i\in I_l} (\alpha_i-\tilde{\alpha}_{il})^4)=o_p(1)$
	\item $\mathbb{E}(\|m^*(W_i,\alpha_i,Z_i,\beta_0,\mu_0,a)\|^2)<\infty$
	\item $\tilde{\mu}_l \xrightarrow{p} \mu_0$
	\item $\frac{\partial m(W_i,\alpha_i,\mu)}{\partial \mu'}$, $\frac{\partial^2 m(W_i,\alpha_i,\mu)}{\partial \alpha_i \partial \mu'}$ and $\frac{\partial^3 m(W_i,\alpha_i,\mu)}{\partial \alpha_i^2 \partial \mu'}$ exists and is continuous in a neighborhood $\mathcal{N}(\mu_0)$ of $\mu_0$ and for almost all $\alpha_i$. In addition, $\sup_{\alpha_i,W_i} \sup_{\mu\in \mathcal{N}(\mu_0)} \frac{\partial^3 m(W_i,\alpha_i,\mu)}{\partial \alpha_i^2 \partial \mu'} <C_m<\infty$. 
	\item $\mathbb{E}(\sup_{\mu\in \mathcal{N}(\mu_0)}\|\frac{\partial^2 m(W_i,\alpha_i,\mu)}{\partial \alpha_i \partial \mu'}\|^4)<\infty$,
	
\end{enumerate}

\end{assumption}
\medskip

Condition (i) is not restrictive given $\frac{1}{N} \sum_{i\in I_l} (\alpha_i-\tilde{\alpha}_{il})^4=o_p(1)$ and condition (ii) in Assumption \ref{as:rr}. Conditions (ii) and (v) are suitable regularity conditions in order to apply Law of Large Numbers. 
Condition (iii) assumes consistency of the preliminary estimator of $\tilde{\mu}_l$, which is established in Appendix \ref{ap:orth}. Condition (iv) assumes that the function $m(W_i,\alpha_i,\mu)$ is sufficiently smooth in both $\alpha_i$ and $\mu$, which is satisfied in Example \ref{ex:ols} with $m(W_i,\alpha_i,\mu)$ being quadratic in $\alpha_i$ and linear in $\mu$.

\begin{proposition}\label{prop:omegahat}
Under Assumption \ref{ass:cross}-\ref{as:mom omega} (Assumption \ref{as:mom eb} is needed only when EB or SURE variants are used to estimate $\alpha_i$), we have
\begin{equation}
\frac{1}{N} \sum_{l=1}^L \sum_{i\in I_l} (m(W_i,\tilde{\alpha}_{il},\tilde{\beta}_l,\tilde{\mu}_l,\tilde{a}_{il})+\hat{\psi}_{il})(m(W_i,\tilde{\alpha}_{il},\tilde{\beta}_l,\tilde{\mu}_l,\tilde{a}_{il})+\hat{\psi}_{il})' \xrightarrow{p} \Omega
\label{eq:con omega}
\end{equation}

\end{proposition}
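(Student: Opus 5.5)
We have to show that the sample second moment of the estimated orthogonal moments, evaluated at $(\tilde\alpha_{il},\tilde\beta_l,\tilde\mu_l,\tilde a_{il})$, converges in probability to $\Omega$. The plan is to compare it with the infeasible average $\frac1N\sum_i m^*_i m^{*\prime}_i$, where $m^*_i\equiv m^*(W_i,\alpha_i,Z_i,\beta_0,\mu_0,a)$. That infeasible average converges to $\Omega$ by the weak law of large numbers, using iid sampling across $i$ and Assumption \ref{as:mom omega}(ii).

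Because $L$ is fixed, it is enough to work fold by fold. For each $l$ we write
\[\hat m^*_{il}\equiv m(W_i,\tilde\alpha_{il},\tilde\mu_l)+\hat\psi_{il}=m^*_i+\Delta_{il}.\]
The identity $\hat m^*_{il}\hat m^{*\prime}_{il}-m^*_im^{*\prime}_i=\Delta_{il}m^{*\prime}_i+m^*_i\Delta_{il}'+\Delta_{il}\Delta_{il}'$ and Cauchy--Schwarz then reduce the claim to
\[\frac1N\sum_{i\in I_l}\|\Delta_{il}\|^2=o_p(1),\]
since $\frac1N\sum_i\|m^*_i\|^2=O_p(1)$.

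Next we split $\Delta_{il}$ into pieces, each to be shown negligible in mean square.
\begin{enumerate}[label=(\alph*)]
\item The $\mu$-piece $m(W_i,\tilde\alpha_{il},\tilde\mu_l)-m(W_i,\tilde\alpha_{il},\mu_0)$. A mean value expansion in $\mu$ bounds it by $\sup_{\mu\in\mathcal N(\mu_0)}\|\partial m/\partial\mu'\|\cdot\|\tilde\mu_l-\mu_0\|$. The factor $\|\tilde\mu_l-\mu_0\|$ is $o_p(1)$ by Assumption \ref{as:mom omega}(iii). For the sup factor, expand $\partial m/\partial\mu'$ in $\alpha$ around $\alpha_i$ and use Assumption \ref{as:mom omega}(iv)--(v); this gives an average of squares that is $O_p(1)$ plus a term controlled by $\frac1N\sum(\tilde\alpha_{il}-\alpha_i)^4=o_p(1)$. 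I take an envelope of $\partial m/\partial\mu'$ with finite second moment as part of the regularity in (iv)--(v).
\item The $\alpha$-piece $m(W_i,\tilde\alpha_{il},\mu_0)-m(W_i,\alpha_i,\mu_0)$. It is bounded by
\[\Big|\tfrac{\partial m}{\partial\alpha_i}\Big||\tilde\alpha_{il}-\alpha_i|+C_m(\tilde\alpha_{il}-\alpha_i)^2 .\]
Its mean square is at most $\big(\frac1N\sum(\partial m/\partial\alpha_i)^4\big)^{1/2}\big(\frac1N\sum(\tilde\alpha_{il}-\alpha_i)^4\big)^{1/2}+O_p(N/T^2)$. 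Both terms are $o_p(1)$ by Assumption \ref{as:mom}(iii) and the fourth-moment consequence of Assumption \ref{as:alpha}(i) derived after that assumption.
\item The correction piece $\hat\psi_{il}-a_i(Y_{iT}-X_{iT}'\beta_0-\alpha_i)$. Writing $\hat\psi_{il}=\tilde a_{il}\big(u_{iT}-X_{iT}'(\tilde\beta_l-\beta_0)-(\tilde\alpha_{il}-\alpha_i)\big)$, this splits into four parts:
\begin{enumerate}[label=(c\arabic*)]
\item $(\tilde a_{il}-a_i)u_{iT}$, whose mean square is $o_p(N^{-\zeta})$ by Assumption \ref{as:rr}(ii),(iv) and the conditional variance bound in Assumption \ref{as:mom}(i);
\item $\tilde a_{il}X_{iT}'(\tilde\beta_l-\beta_0)$, which is $O_p((NT)^{-1})$ by Assumption \ref{as:rr}(i) and Assumption \ref{as:mom}(iii);
\item $(\tilde a_{il}-a_i)(\tilde\alpha_{il}-\alpha_i)$, which is handled by Cauchy--Schwarz together with Assumption \ref{as:mom omega}(i);
\item $a_i(\tilde\alpha_{il}-\alpha_i)$, whose mean square is bounded by $\big(\frac1N\sum a_i^4\big)^{1/2}\big(\frac1N\sum(\tilde\alpha_{il}-\alpha_i)^4\big)^{1/2}=o_p(1)$, using a fourth moment of $a$ implied by Assumption \ref{as:mom}(iii) and Jensen.
\end{enumerate}
\end{enumerate}

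When the EB or SURE variants are used, the same bounds go through with $\tilde\alpha^*_{il}$ in place of $\tilde\alpha_{il}$, since Assumption \ref{as:alpha}(i) is stated for $\tilde\alpha^*_{il}$.

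I expect piece (a) to be the main obstacle. Two things need care there: $\tilde\mu_l$ is random and is estimated from the other folds, and the derivative in $\mu$ is evaluated at the estimated $\tilde\alpha_{il}$ rather than at $\alpha_i$. My first approach is to condition on $\mathcal W_l^C$, so that $\tilde\mu_l$ and $\tilde a_{il}$ act as fixed within fold $l$, and to work on the event $\{\tilde\mu_l\in\mathcal N(\mu_0)\}$, which has probability approaching one. On that event the uniform third-derivative bound in Assumption \ref{as:mom omega}(iv) controls the replacement of $\alpha_i$ by $\tilde\alpha_{il}$ uniformly in $\mu$. Lemma 6.1 of \cite{dml} then converts the conditional $o_p(1)$ statements into unconditional ones.
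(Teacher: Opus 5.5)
Your architecture matches the paper's. You compare with the infeasible average $\frac1N\sum_i m^*_i m^{*\prime}_i$, reduce via the outer-product/Cauchy--Schwarz argument to $\frac1N\sum_{i\in I_l}\|\Delta_{il}\|^2=o_p(1)$, and split $\Delta_{il}$ into $\mu$-, $\alpha$-, $\beta$- and $\tilde a$-errors. Pieces (c1)--(c3) are fine and parallel Lemmas \ref{lm:no rrhat}--\ref{lm:no betahat}.

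\textbf{The gap is in (b) and (c4).} Because you bound $m(W_i,\tilde\alpha_{il},\mu_0)-m(W_i,\alpha_i,\mu_0)$ and $a_i(\tilde\alpha_{il}-\alpha_i)$ separately, you need $\mathbb{E}\|\partial m(W_i,\alpha_i,\mu_0)/\partial\alpha_i\|^4<\infty$ and $\mathbb{E}\|a(X_i,\alpha_i,\mu_0)\|^4<\infty$. Neither is assumed. Assumption \ref{as:mom}(iii) only bounds the fourth moment of the residual $\partial m/\partial\alpha_i-a(X_i,\alpha_i,\mu_0)$. Jensen cannot turn that into a fourth moment of $a$: when $\partial m/\partial\alpha_i$ is itself a function of $(X_i,\alpha_i)$, the residual is identically zero and the assumption says nothing about the tails of $a$. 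Assumption \ref{as:alpha}(i) cannot rescue (c4) through a $\max_i|\tilde\alpha_{il}-\alpha_i|$ bound either, since it only gives $\sum_{i\in I_l}(\tilde\alpha_{il}-\alpha_i)^2=O_p(N/T)\to\infty$.

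The remedy is to keep the two pieces together, exactly as in (\ref{eq:omega no alphahat}) of Lemma \ref{lm:no alphahat}:
\[
m(W_i,\tilde\alpha_{il},\mu_0)-m(W_i,\alpha_i,\mu_0)-a_i(\tilde\alpha_{il}-\alpha_i)=\Big(\frac{\partial m(W_i,\alpha_i,\mu_0)}{\partial\alpha_i}-a_i\Big)(\tilde\alpha_{il}-\alpha_i)+\frac12\frac{\partial^2 m(W_i,\check\alpha_i,\mu_0)}{\partial\alpha_i^2}(\tilde\alpha_{il}-\alpha_i)^2 .
\]
Its mean square is at most
\[
2\Big(\frac1N\sum_{i\in I_l}\|\partial m/\partial\alpha_i-a_i\|^4\Big)^{1/2}\Big(\frac1N\sum_{i\in I_l}(\tilde\alpha_{il}-\alpha_i)^4\Big)^{1/2}+C_m^2\,\frac1N\sum_{i\in I_l}(\tilde\alpha_{il}-\alpha_i)^4=o_p(1),
\]
using only Assumption \ref{as:mom}(iii) and the fourth-moment consequence of Assumption \ref{as:alpha}(i). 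The moment conditions are placed on the residual precisely so that this grouping works, so splitting it is not harmless.

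On piece (a), by contrast, you are more careful than the paper. Its expansion of $\partial m(W_i,\tilde\alpha_{il},\check\mu_l)/\partial\mu'$ around $\alpha_i$ keeps only the $\partial^2 m/\partial\alpha_i\partial\mu'$ and $\partial^3 m/\partial\alpha_i^2\partial\mu'$ terms. It drops the zeroth-order term $\frac1N\sum_{i\in I_l}\frac{\partial m(W_i,\alpha_i,\check\mu_l)}{\partial\mu'}(\tilde\mu_l-\mu_0)$.

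With only consistency of $\tilde\mu_l$ assumed, controlling that term in mean square does need something like $\mathbb{E}\sup_{\mu\in\mathcal N(\mu_0)}\|\partial m(W_i,\alpha_i,\mu)/\partial\mu'\|^2<\infty$. This is not contained in Assumption \ref{as:mom omega}(iv)--(v). Assumption \ref{as:G}, which is not among the hypotheses anyway, controls $\partial m/\partial\mu'$ only through first moments. So state this envelope as an explicit extra condition rather than reading it into (iv)--(v). With that condition added and (b) and (c4) regrouped, your proof goes through.
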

\medskip

To establish the validity of (\ref{eq:vhat}) for estimating the asymptotic variance of $\hat{\mu}$, we still need to show that $\hat{G}$ consistently estimates
$G=\mathbb{E}(\frac{\partial m(W_i,\alpha_i,\mu_0)}{\partial \mu})$. The following conditions ensure the convergence of the estimator for the Jacobian matrix $G$, and is similar to assumption 5 of \cite{cher2022}.

\begin{assumption}\label{as:G}
The limit Jacobian matrix $G$ exists and is full-rank. Moreover, for some function $F(W_i)$

$\sup_{\alpha} \|\frac{\partial^2 m(W_i,\alpha_i,\mu)}{\partial \alpha_i \partial \mu'}\|^2\leq F(W_i)$ for all $\mu$ within a neighborhood $\mathcal{N}(\mu_0)$ of $\mu_0$ and $\mathbb{E}(F(W_i))<\infty$. 

In addition, for $\mu\in \mathcal{N}(\mu_0)$, some constant $C>0$ and some function $F_{\alpha}(W_i,\alpha_i)$
 \begin{equation}
 \|\frac{\partial m(W_i,\alpha_i,\mu)}{\partial \mu}-\frac{\partial m(W_i,\alpha_i,\mu_0)}{\partial \mu}\|\leq F_{\alpha}(W_i,\alpha_i) \|\mu-\mu_0\|^{1/C}, \quad \mathbb{E}(F_{\alpha}(W_i,\alpha_i))<\infty. 
\label{eq:G no mu}
\end{equation}

\end{assumption}
\medskip

Assumption \ref{as:G} is easy to verify in Example \ref{ex:ols}. With $m(W_i,\alpha_i,\mu)$ being quadratic in $\alpha_i$ and linear in $\mu$, we can take $F(W_i)=0$ and also $F_{\alpha}(W_i,\alpha_i)= 0$ with $C=1$.

\begin{proposition}\label{prop:G}
Under Assumptions \ref{as:alpha}, Assumption \ref{as:mom omega} and Assumption \ref{as:G}, if $\hat{\mu}\xrightarrow{p}\mu_0$, then 
\begin{equation}
\frac{1}{N} \sum_{l=1}^L \sum_{i\in I_l}\frac{\partial m(W_i,\tilde{\alpha}_{il},\hat{\mu})}{\partial \mu'} \xrightarrow{p} G
\label{eq:con G}
\end{equation} 

\end{proposition}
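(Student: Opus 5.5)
The plan is to split the averaged Jacobian into three pieces and show each is $o_p(1)$:
\begin{align*}
\frac{1}{N}\sum_{l=1}^L\sum_{i\in I_l}\frac{\partial m(W_i,\tilde{\alpha}_{il},\hat{\mu})}{\partial \mu'}-G
&=\underbrace{\frac{1}{N}\sum_{l}\sum_{i\in I_l}\Big(\frac{\partial m(W_i,\tilde{\alpha}_{il},\hat{\mu})}{\partial \mu'}-\frac{\partial m(W_i,\alpha_i,\hat{\mu})}{\partial \mu'}\Big)}_{A}\\
&\quad+\underbrace{\frac{1}{N}\sum_{i=1}^N\Big(\frac{\partial m(W_i,\alpha_i,\hat{\mu})}{\partial \mu'}-\frac{\partial m(W_i,\alpha_i,\mu_0)}{\partial \mu'}\Big)}_{B}\\
&\quad+\underbrace{\frac{1}{N}\sum_{i=1}^N\frac{\partial m(W_i,\alpha_i,\mu_0)}{\partial \mu'}-G}_{C}.
\end{align*}
Term $C$ is an average of iid variables with mean $G$, whose existence is part of Assumption \ref{as:G}. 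It therefore vanishes by the weak law of large numbers.

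Term $B$ is controlled by the H\"older-type condition (\ref{eq:G no mu}). On the event $\hat{\mu}\in\mathcal{N}(\mu_0)$, whose probability tends to one because $\hat{\mu}\xrightarrow{p}\mu_0$, we get
\[
\|B\|\leq \Big(\frac{1}{N}\sum_{i}F_{\alpha}(W_i,\alpha_i)\Big)\|\hat{\mu}-\mu_0\|^{1/C}.
\]
The first factor is $O_p(1)$ by the law of large numbers, since $\mathbb{E}F_{\alpha}<\infty$. The second factor is $o_p(1)$.

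Term $A$ uses a mean value expansion in $\alpha$, which is allowed by the smoothness in Assumption \ref{as:mom omega}(iv). For each $i$ there is an intermediate $\bar{\alpha}_i$ with
\[
\frac{\partial m(W_i,\tilde{\alpha}_{il},\hat{\mu})}{\partial \mu'}-\frac{\partial m(W_i,\alpha_i,\hat{\mu})}{\partial \mu'}=\frac{\partial^2 m(W_i,\bar{\alpha}_i,\hat{\mu})}{\partial \alpha_i\partial \mu'}(\tilde{\alpha}_{il}-\alpha_i).
\]
Because $\bar{\alpha}_i$ is data dependent, we bound the derivative uniformly in $\alpha$, which is exactly what the envelope in Assumption \ref{as:G} provides: $\sup_\alpha\|\partial^2 m/\partial\alpha\partial\mu'\|^2\leq F(W_i)$ for $\mu$ near $\mu_0$. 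Cauchy--Schwarz then gives
\[
\|A\|\leq \sum_{l}\Big(\frac{1}{N}\sum_{i\in I_l}F(W_i)\Big)^{1/2}\Big(\frac{1}{N}\sum_{i\in I_l}(\tilde{\alpha}_{il}-\alpha_i)^2\Big)^{1/2}=O_p(1)\cdot O_p(T^{-1/2}),
\]
using $\mathbb{E}F<\infty$ and Assumption \ref{as:alpha}(i), stated both for $\tilde{\alpha}_{il}$ and for its shrinkage versions. This is $o_p(1)$, since $L$ is fixed. If one prefers not to rely on the uniform-in-$\alpha$ envelope, there is an alternative: expand to second order and use the bounded third derivative in Assumption \ref{as:mom omega}(iv) for the remainder, together with the fourth-moment bound in (v) evaluated at $\alpha_i$ for the linear term.

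The main obstacle is term $A$. The difficulty is that $\tilde{\alpha}_{il}$ depends on individual $i$'s data, so no conditional-independence argument is available. The argument has to rely on the uniform envelope and on the averaged squared-error rate, not on cross-fitting. Some care is also needed because $\hat{\mu}$ is random, so the envelope bounds must hold uniformly over $\mathcal{N}(\mu_0)$. We handle this by restricting to the event $\{\hat{\mu}\in\mathcal{N}(\mu_0)\}$, whose probability tends to one.
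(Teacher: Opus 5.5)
Your proposal is correct and follows essentially the same route as the paper. The paper likewise uses a mean-value expansion in $\alpha$ bounded by the uniform envelope $F(W_i)$ of Assumption \ref{as:G} and the averaged squared-error rate of Assumption \ref{as:alpha}, then the H\"older condition (\ref{eq:G no mu}) with $\frac{1}{N}\sum_i F_{\alpha}(W_i,\alpha_i)=O_p(1)$ and consistency of $\hat{\mu}$, and finally the law of large numbers at $(\alpha_i,\mu_0)$; your explicit restriction to the event $\{\hat{\mu}\in\mathcal{N}(\mu_0)\}$ is a bit of care the paper leaves implicit.
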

\medskip

The next theorem combines the above propositions to establish the validity of inference for $\hat{\mu}$:

\begin{theorem}\label{thm1}
Under Assumptions \ref{ass:cross}-\ref{as:G} (Assumption \ref{as:mom eb} is needed only when $\alpha_i$'s are estimated using either EB or SURE corrections), and if $\hat{\mu}$ defined in (\ref{eq:mu est}) is consistent for $\mu_0$: $\hat{\mu}\xrightarrow{p} \mu_0$, then it satisfies the following asymptotic expansion:
\begin{equation}
\sqrt{N} (\hat{\mu}-\mu_0)\xrightarrow{d} \mathcal{N}(0,V),\quad V= (G'\Upsilon G)^{-1} G'\Upsilon \Omega \Upsilon G (G'\Upsilon G)^{-1}.
\label{eq:mu asym}
\end{equation}
with $G= \mathbb{E}(\frac{\partial m(W_i,\alpha_i,\mu_0)}{\partial \mu})$, $\Omega=\mathbb{E}(m^*(W_i,\alpha_i,Z_i,\beta_0,\mu_0,a) m^*(W_i,\alpha_i,Z_i,\beta_0,\mu_0,a)')$ and $\Upsilon$ being the probability limit of the weighting matrix. 

In addition, a consistent estimator for the asymptotic variance $V$ is given by (\ref{eq:vhat}):
\begin{equation}
\hat{V} = (\hat{G}'\hat{\Upsilon}\hat{G})^{-1} \hat{G}'\hat{\Upsilon} \hat{\Omega} \hat{\Upsilon} \hat{G} (\hat{G}'\hat{\Upsilon}\hat{G})^{-1}\xrightarrow{p} V
\label{con:v}
\end{equation}

\end{theorem}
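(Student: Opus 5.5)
The proof will follow the standard GMM argument, with the orthogonal sample moment $\hat{m}^*(\mu)$ from (\ref{eq:orth sample mom}) in place of the usual sample moment and Propositions \ref{prop:omega}--\ref{prop:G} supplying the ingredients. The key structural fact is that $\mu$ enters $\hat m^*(\mu)$ only through $m(W_i,\tilde\alpha_{il},\mu)$; the correction $\hat\psi$ does not depend on $\mu$. Hence $\partial \hat m^*(\mu)/\partial\mu' = \frac{1}{N}\sum_l\sum_{i\in I_l}\partial m(W_i,\tilde\alpha_{il},\mu)/\partial\mu'$, which is exactly the object controlled by Proposition \ref{prop:G}.

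First, I would write the first-order condition of (\ref{eq:mu est}), $\hat G(\hat\mu)'\Upsilon_N \hat m^*(\hat\mu)=0$, where $\hat G(\mu)$ is the Jacobian above. I would then expand $\hat m^*(\hat\mu)$ around $\mu_0$ by the mean value theorem, applied row by row with intermediate points $\bar\mu$ between $\hat\mu$ and $\mu_0$. This gives $\hat m^*(\hat\mu)=\hat m^*(\mu_0)+\hat G(\bar\mu)(\hat\mu-\mu_0)$, and rearranging yields
\[\sqrt N(\hat\mu-\mu_0) = -\bigl(\hat G(\hat\mu)'\Upsilon_N\hat G(\bar\mu)\bigr)^{-1}\hat G(\hat\mu)'\Upsilon_N\sqrt N\hat m^*(\mu_0).\]
Because $\hat\mu\xrightarrow{p}\mu_0$ and each $\bar\mu$ is sandwiched between $\hat\mu$ and $\mu_0$, the argument in Proposition \ref{prop:G} applies to any consistent sequence, so both $\hat G(\hat\mu)$ and $\hat G(\bar\mu)$ converge in probability to $G$. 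I expect this to be the main point needing care: Proposition \ref{prop:G} is stated at $\hat\mu$, so I would re-read its proof to confirm it uses only consistency. That proof combines the Hölder-type bound (\ref{eq:G no mu}) with the bound $F(W_i)$ on the cross-derivative to absorb $\tilde\alpha_{il}-\alpha_i$, and neither step uses any property of $\hat\mu$ beyond $\hat\mu\xrightarrow{p}\mu_0$. Since $G$ has full rank and $\Upsilon_N\xrightarrow{p}\Upsilon$ is positive definite, the inverse exists with probability approaching one and converges to $(G'\Upsilon G)^{-1}$.

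Next, I would apply Proposition \ref{prop:omega}, which gives $\sqrt N\hat m^*(\mu_0)=N^{-1/2}\sum_i m^*(W_i,\alpha_i,Z_i,\beta_0,\mu_0,a)+o_p(1)$. The summands are iid with mean zero: the mean of $m$ is zero by Assumption \ref{ass:cross}, and the correction term has mean zero by sequential exogeneity. Their second moment is finite by Assumption \ref{as:mom omega}(ii), so the Lindeberg--Lévy CLT gives convergence in distribution to $\mathcal N(0,\Omega)$. Slutsky's theorem then yields (\ref{eq:mu asym}) with the sandwich variance $V$.

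For (\ref{con:v}), I would combine three facts. Proposition \ref{prop:G} gives $\hat G\xrightarrow{p} G$, and Proposition \ref{prop:omegahat} gives $\hat\Upsilon^{-1}\xrightarrow{p}\Omega$. The remaining fact is $\hat\Omega\xrightarrow{p}\Omega$. Here $\hat\Omega$ differs from the matrix in Proposition \ref{prop:omegahat} only in that it is evaluated at $\hat\mu$ rather than $\tilde\mu_l$. The proof of Proposition \ref{prop:omegahat} relies only on consistency of the $\mu$ argument (Assumption \ref{as:mom omega}(iii)) together with the smoothness in Assumption \ref{as:mom omega}(iv)--(v), so the same argument goes through with $\hat\mu$. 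Continuity of matrix inversion and multiplication, using that $G'\Upsilon G$ is nonsingular, then gives $\hat V\xrightarrow{p}V$ by the continuous mapping theorem.
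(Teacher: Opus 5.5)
Your proposal is correct and follows essentially the same route as the paper. It uses a mean-value expansion of the GMM first-order condition and Proposition \ref{prop:omega} with the Lindeberg--L\'evy CLT for $\sqrt{N}\hat m^*(\mu_0)$. It applies Proposition \ref{prop:G} at both $\hat\mu$ and the intermediate point, and uses Propositions \ref{prop:omegahat} and \ref{prop:G} for $\hat V$. Your version is, if anything, slightly more careful than the paper's, which works with an approximate first-order condition of order $o_p(N^{-1/2})$ and does not remark that $\hat\Omega$ is evaluated at $\hat\mu$ rather than $\tilde\mu_l$; your check that those proofs use only consistency of the $\mu$ argument is exactly what is needed.
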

\medskip

Theorem \ref{thm1} requires consistency of $\hat{\mu}$ for $\mu_0$, which follows under standard assumptions. The conditions required are specified in Appendix \ref{ap:orth}.

\section{Monte-Carlo Simulation}\label{simulation}

We will compare the estimator and inference procedure in \ref{procedure} with those in \cite{xie2025} and \cite{CGK2025} through the lens of example \ref{ex:ols}. More specifically, we will consider a panel auto-regression model as the first stage:
\begin{equation}
Y_{it}=\beta_0 Y_{it-1}+\alpha_i+u_{it}, \text{ for } t=1,\cdots,T
\label{eq:1st dynamic}
\end{equation}

The fixed effects $\alpha_i$'s are independent $\mathcal{N}(0,\frac{1}{2})$ draws, and $u_{it}$'s (unobserved) are also $\mathcal{N}(0,\frac{1}{2})$ constructed from the sum of two independent random variables $u_{1it}+u_{2it}$ each distributed as $\mathcal{N}(0,\frac{1}{4})$. Both $u_{1it}$'s and $u_{2it}$'s are also independent across time and individuals. The initial condition $Y_{i0}$ is generated from the stationary distribution as $Y_{i0}\sim \mathcal{N}(\frac{\alpha_i}{1-\beta_0},\frac{1}{2(1-\beta_0^2)})$. 

The second stage is specified as in (\ref{eq:ex}) with a slight modification:

\begin{equation}
W_i = \alpha_i + v_i - c \frac{1}{ \lfloor \frac{T}{2} \rfloor -1 } \sum_{t=1}^{ \lfloor \frac{T}{2} \rfloor -1 } u_{1it},
\label{eq:2nd ols}
\end{equation}
where $\lfloor x \rfloor$ denotes the largest integer that is smaller than or equal to $x$. 
In the notation of example \ref{ex:ols}, $\mu_{01}=0$ and $\mu_{02}=1$, and the parameter of interest is $\mu_{02}$. 
We draw $v_i$ from $\mathcal{N}(0,1)$ independent from all other variables and $c$ is a scaling factor that gauges the level of correlation between $\bar{\bar{u}}_i$ and $v_i-c\frac{1}{ \lfloor \frac{T}{2} \rfloor -1 } \sum_{t=1}^{ \lfloor \frac{T}{2} \rfloor -1 } u_{1it}$. For $c\neq 0$, then $\bar{\bar{u}}_i$ and $v_i-c \frac{1}{ \lfloor \frac{T}{2} \rfloor -1 } \sum_{t=1}^{ \lfloor \frac{T}{2} \rfloor -1 } u_{1it}$ are correlated, which violates the conditions for inference in both \cite{xie2025} and \cite{CGK2025}. In contrast, as explained in Section \ref{as theory}, we expect the orthogonal moment condition to be insensitive to the potential correlation between them and have null rejection probability close to the nominal size. 

As \cite{xie2025} and \cite{CGK2025} abstracts away from the problem of estimating $\beta_0$, we employ estimators from \cite{BB} by constructing the following moment conditions:
\[\mathbb{E}\begin{bmatrix} \Delta y_{it}-\beta \Delta y_{it-1}\\
y_{it-2} (\Delta y_{it}-\beta \Delta y_{it-1}) \\
y_{it-3} (\Delta y_{it}-\beta \Delta y_{it-1})\\
\Delta y_{it-1} (y_{it}-\beta y_{it-1})
\end{bmatrix}  =0 \quad \text{for } t=4,\cdots,T. \]
 $\beta_0$ is then estimated via two-step GMM.\footnote{I only included moments $t>3$ for the simulation results with $T=12$ and $t>5$ for simulation results with $T=22$ in order to avoid multicollinearity in constructing the optimal weighting matrix. The cutoffs are chosen based on finite sample performance in the simulations.}

To implement procedures in  \cite{xie2025}, I consider both $\hat{\alpha}_i=\frac{1}{T} \sum_{t=1}^{T} Y_{it}- \frac{1}{T} \sum_{t=0}^{T-1} Y_{it} \tilde{\beta}$ and also its EB ($\alpha_i^{EB}$) and SURE ($\alpha_i^{SURE}$) variants. \cite{xie2025} estimates the standard error of $\mu_{02}$  with the usual heteroskedastic robust standard error formula by plugging the EB versions of $\alpha_i^{EB}$. 

In addition, I also consider the estimator recommended in \cite{CGK2025} which builds on the observation that 
\[\mu_{02} = \frac{Cov(\alpha_i,W_i)}{\mbox{Var}(\alpha_i)}=\frac{Cov(\hat{\alpha}_i,W_i)}{ \mbox{Var}(\hat{\alpha}_i)} \frac{\mbox{Var}(\hat{\alpha}_i) }{\mbox{Var}(\alpha_i)},\]
exploiting independence between $\bar{u}_i$ and $W_i$ in their assumptions (which does not hold when $c\neq 0$). They suggest replacing each population quantity by its sample analogue. In particular, $\mbox{Var}(\alpha_i)$ can be estimated by $\hat{\mbox{Var}}(\hat{\alpha}_i)-\hat{\mbox{Var}}(\bar{u}_i)$, where $\hat{\mbox{Var}}(\bar{u}_i)$ is a consistent estimator of the variance of $\bar{u}_i$ and can be constructed as $\frac{1}{N} \sum_{i=1}^N \frac{1}{T^2} \sum_{t=1}^T \hat{u}_{it}^2 $. Inference can be conducted using percentile bootstrap. 

For the orthogonal moment constructions, I followed the procedures in Section \ref{procedure} and uses $\tilde{\alpha}_{il}=\frac{1}{T-1} \sum_{t=1}^{T-1} Y_{it}-\frac{1}{T-1} \sum_{t=0}^{T-2} Y_{it}$ so that condition (ii) of assumption \ref{as:alpha} is guaranteed to hold, and also for its EB and SURE variants $\tilde{\alpha}_{il}^{EB}$ and $\tilde{\alpha}_{il}^{SURE}$. A preliminary estimator of $\tilde{\mu}_l=(\tilde{\mu}_{1l},\tilde{\mu}_{2l})$ is then obtained via linear regression. The estimator of conditional mean function $a_1(X_i,\alpha_i,\mu_0)=-\mu_{02}$ for $i\in I_l$ is constructed by $-\tilde{\mu}_{2l}$ and the conditional mean function $a_2(X_i,\alpha_i,\mu_0) = \mathbb{E}(v_i - \mu_{02} \alpha_i|X_i,\alpha_i)$ is constructed via an adaptive elastic net regression of $\hat{v}_j-\tilde{\mu}_{2l} \tilde{\alpha}_{jl} $ on $\tilde{\alpha}_{jl} $ and $(Y_{j0},Y_{j1},\cdots, Y_{jT-1})$ for $j\notin I_l$. Here $\hat{v}_j=W_j-\tilde{\mu}_{1l}-\tilde{\mu}_{2l} \tilde{\alpha}_{jl}$. The penalty factors for the L1 regularization and L2 regularization terms in the Elastic Net regression are chosen using cross-validation. To alleviate potential noise introduced in the sample splitting process, we computed $\hat{\mu}$ and $\hat{V}$ using orthogonal moments for 20 different realizations of sample splits (randomness coming from reshuffling of the sample along the cross-sectional dimension) and then averaged $\hat{\mu}$ and $\hat{V}$ across them.

Table \ref{tab:rej} reports the simulation results for $N=100$, $T=12$ and $\beta_0=0$ and three different choices of $c$ across 1,000 simulations. We also compare performance of the inference procedure with different number of folds $L$.\footnote{As explained in \cite{BB}, larger $\beta_0$ will make the estimates of $\beta$ very imprecise, resembling a weak instrument problem. Our asymptotic theory essentially ignores any noise in estimating $\beta_0$, and poor estimates of $\beta_0$ will propagate and affect the inference results for $\beta_0$ that is too large.} 

\begin{table}[ht]
\scriptsize

\centering
\begin{tabular}{c c  c c c  || c c c|| c c c}
\toprule
~ & CGK & Naive & Xie-EB & Xie-SURE & Orth-mean & Orth-EB & Orth-SURE & Orth-mean & Orth-EB & Orth-SURE\\
\multicolumn{5}{l}{} & \multicolumn{3}{c}{$L=5$} & \multicolumn{3}{c}{$L=10$}\\
\hline
\multicolumn{11}{l}{Panel A: $c=0$}\\
Bias & 0.006 & -0.057 & 0.006 & 0.011 & 0.029 & 0.038 & 0.038 & 0.029 & 0.037 & 0.038 \\
Std & 0.158 & 0.146 & 0.158 & 0.160 & 0.185 & 0.193 & 0.194 & 0.185 & 0.193 & 0.195 \\
RMSE & 0.158 & 0.157 & 0.158 & 0.160 & 0.187 & 0.197 & 0.198 & 0.187 & 0.197 & 0.198 \\
Rej Prob & 0.054 & 0.212 & 0.215 & 0.217 & 0.087 & 0.078 & 0.075 & 0.087 & 0.080 & 0.078 \\
\\
\midrule
\multicolumn{11}{l}{Panel B: $c=4$}\\
Bias & -0.167 & -0.219 & -0.167 & -0.163 & -0.016 & 0.015 & 0.019 & -0.014 & 0.021 & 0.027 \\
Std & 0.214 & 0.199 & 0.214 & 0.215 & 0.253 & 0.269 & 0.269 & 0.253 & 0.268 & 0.270 \\
RMSE & 0.271 & 0.296 & 0.271 & 0.269 & 0.253 & 0.269 & 0.270 & 0.253 & 0.269 & 0.272 \\
Rej Prob & 0.143 & 0.404 & 0.342 & 0.335 & 0.077 & 0.065 & 0.064 & 0.077 & 0.065 & 0.065 \\
\\
\midrule
\multicolumn{11}{l}{Panel C: $c=5$}\\
Bias & -0.211 & -0.260 & -0.210 & -0.207 & -0.027 & 0.010 & 0.015 & -0.028 & 0.015 & 0.023 \\
Std & 0.238 & 0.223 & 0.238 & 0.239 & 0.283 & 0.301 & 0.303 & 0.283 & 0.301 & 0.304 \\
RMSE & 0.318 & 0.342 & 0.317 & 0.316 & 0.284 & 0.301 & 0.303 & 0.284 & 0.301 & 0.304 \\
Rej Prob & 0.170 & 0.435 & 0.370 & 0.362 & 0.074 & 0.062 & 0.061 & 0.078 & 0.067 & 0.061 \\
\\
\bottomrule
\end{tabular}

\caption{Comparisons in performance of different estimators and Inference procedures}
\label{tab:rej}
\vspace{0.2cm}

\begin{minipage}{15cm}
\scriptsize Note: This table reports the performance of various estimators for $\mu_{02}$. I report the bias, standard error of the estimates $\hat{\mu}_2$, root mean squared error (RMSE) and also the null rejection probability with target size 0.05 across three different DGPs. The sample size is $N=100$, $T=12$, and the auto-regressive coefficient $\beta_0=0$. Panel A considers $c=0$ with no correlation between $\bar{\bar{u}}_i$ and $W_i-\alpha_i$, while Panel B and C sets $c=4$ and $c=5$ respectively. CGK corresponds to the measurement error correction estimator suggested in \cite{CGK2025}. Naive, Xie-EB and Xie-SURE uses heteroskedastic robust standard error for $\hat{\alpha}_i$, $\hat{\alpha}_i^{EB}$ and $\hat{\alpha}_i^{SURE}$ respectively. Orth-mean, Orth-EB and Orth-SURE uses the orthogonal moment constructions proposed in this paper but uses $\tilde{\alpha}_{il}$, $\tilde{\alpha}_{il}^{EB}$ and $\tilde{\alpha}_{il}^{SURE}$ respectively. Columns 6-8 report results when the data is split into 5 equal-sized folds, while column 9-11 reports the results with 10 folds. 
\end{minipage}

\end{table}

The first column reports the results for estimators suggested by \cite{CGK2025} and the next three columns report results where heteroskedastic standard error for linear regression is used to compute asymptotic variance with different estimators of $\alpha_i$. Finally, the last six columns employs the orthogonal moment construction with different estimators of $\alpha_i$ and with different number of folds. Comparing the bias of the estimators across different panels, the results show that when the endogeneity increases ($c$ is large), there is significant bias in the estimates using \cite{CGK2025} and \cite{xie2025}. Although these estimators typically has smaller standard error, it does not compensate for the bias. Therefore, estimators using orthogonal moments enjoy smaller Root Mean Squared Error (RMSE) for larger values of $c$. 

  We also consider testing the null hypothesis of $\mu_{02}=1$ using t-test based on different estimators and their estimated standard errors. The null rejection probability is reported in the last row of each panel with target size 0.05. We see that the endogneity between the measurement error and the residuals in (\ref{eq:2nd ols}) made the estimators of \cite{CGK2025} and \cite{xie2025} suffer from severe size distortion. In contrast, estimators that uses orthogonal moments have size control within acceptable levels.  In addition, using the EB or SURE correction delivers better size control than the original $\tilde{\alpha}_{il}$. Nevertheless, the shrinkage correction inflates the RMSE compared with the case where no correction is applied. The simulation results suggest that in general, using orthogonal moment conditions is more important than using EB or SURE corrections to mitigate the concerns of measurement error especially when endogeneity is a concern. In addition, comparing results from column 6-8 with the results from column 9-11, we see that the number of folds $L$ does not have large impacts on either the estimates or the null rejection probabilities. One can choose a smaller fold $L=5$ for computational reasons in the current DGP. 

Figure \ref{fig:power 100 13} plots the power function of  the CGK estimator, the estimator of \cite{xie2025} using SURE and estimator using orthogonal moment conditions and SURE corrections for the DGP in panel A where all estimators are expected to have null rejection probability close to 0.05. We see that a finite-sample distortion is present for the estimator proposed in \cite{xie2025}, while the estimator from \cite{CGK2025} and estimator based on orthogonal moments have smaller null rejection probabilities. In addition, the estimator proposed in \cite{CGK2025} has higher power in rejecting the alternatives compared with the estimator using orthogonal moments. The phenomenon reflects the fact that estimators from \cite{CGK2025} achieves the semi-parametric efficiency bound when endogeneity concern is not present, and in this case their estimator should be preferred for power purposes. 

\begin{figure}
\includegraphics[width=.9\textwidth]{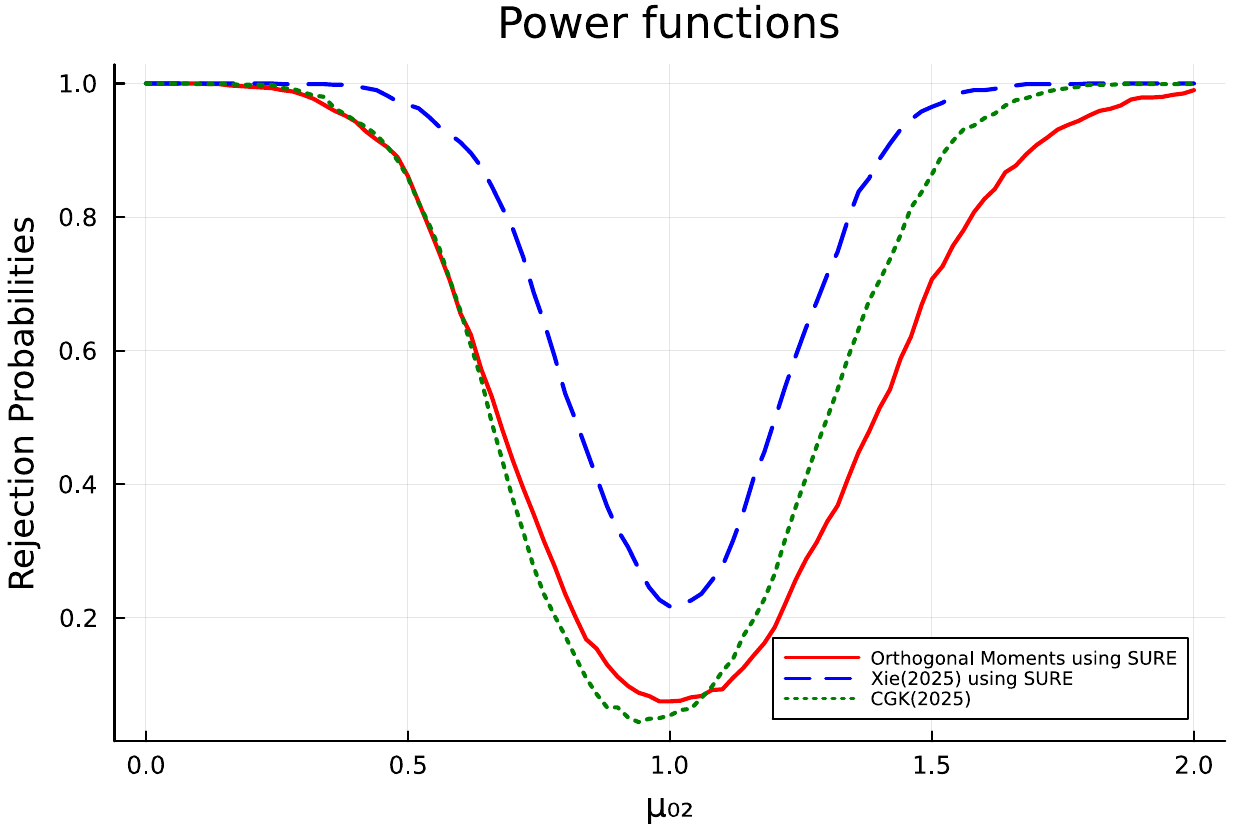}
\caption{Power function for Panel A}
\label{fig:power 100 13}

\vspace{0.2cm}

\begin{minipage}{15cm}
\footnotesize Note: This figure plots the power function of the CGK estimator, the estimator of \cite{xie2025} using SURE and estimator using orthogonal moment conditions and SURE corrections. The other estimators behaves similarly, and are omitted.
\end{minipage}

\end{figure}

Table \ref{tab:rej 2} reports the results with $N=500$, $T=22$ and $\beta_0=0.5$ with different levels of endogeneity represented by $c$. As adaptive elastic net conducts variable selection, the values of $c$ is set to be 5 and 6 respectively to avoid potential local-to-zero coefficients in $a(X_i,\alpha_i,\mu_0)$. As results in Table \ref{tab:rej} shows that the number of folds has little impact on the estimator performance and size control, we report results only with 5-fold cross-fitting in Table \ref{tab:rej 2}. Table \ref{tab:rej 2} shows patterns consistent with Table \ref{tab:rej}. As $c$ increases, the bias of estimators from both \cite{CGK2025} and \cite{xie2025} increases and the tests over-reject the null. In contrast, estimators based on orthogonal moment conditions have better size control and smaller RMSE. Comparing column 6 with column 7 and column 8, we see that using EB or SURE shrinkage still delivers null rejection probabilities closer to the nominal size, albeit with a larger RMSE.

\begin{table}[ht]
\footnotesize
\centering
\begin{tabular}{c c  c c c  || c c c}
\toprule
~ & CGK & Naive & Xie-EB & Xie-SURE & Orth-mean & Orth-EB & Orth-SURE\\
\multicolumn{5}{l}{} & \multicolumn{3}{c}{$L=5$} \\
\hline
\multicolumn{8}{l}{Panel A: $c=0$}\\
Bias & 0.001 & -0.038 & 0.002 & 0.002 & 0.004 & 0.006 & 0.006 \\
Std & 0.076 & 0.072 & 0.076 & 0.076 & 0.086 & 0.088 & 0.088 \\
RMSE & 0.076 & 0.081 & 0.076 & 0.076 & 0.086 & 0.088 & 0.088 \\
Rej Prob (size=0.05) & 0.059 & 0.271 & 0.236 & 0.238 & 0.093 & 0.083 & 0.083 \\
\\
\midrule
\multicolumn{8}{l}{Panel B: $c=5$}\\
Bias & -0.115 & -0.149 & -0.114 & -0.114 & -0.012 & 0.001 & 0.002 \\
Std & 0.088 & 0.084 & 0.088 & 0.088 & 0.098 & 0.102 & 0.102 \\
RMSE & 0.145 & 0.171 & 0.144 & 0.144 & 0.098 & 0.102 & 0.102 \\
Rej Prob (size=0.05) & 0.250 & 0.666 & 0.508 & 0.500 & 0.069 & 0.061 & 0.062 \\
\\
\midrule
\multicolumn{8}{l}{Panel C: $c=6$}\\
Bias & -0.138 & -0.171 & -0.137 & -0.137 & -0.008 & 0.008 & 0.008 \\
Std & 0.093 & 0.089 & 0.093 & 0.093 & 0.105 & 0.109 & 0.110 \\
RMSE & 0.167 & 0.193 & 0.166 & 0.166 & 0.105 & 0.110 & 0.110 \\
Rej Prob (size=0.05) & 0.304 & 0.706 & 0.571 & 0.568 & 0.068 & 0.062 & 0.063 \\
\\
\bottomrule
\end{tabular}

\caption{Comparisons in performance of different estimators and Inference procedures}
\label{tab:rej 2}
\vspace{0.2cm}

\begin{minipage}{15cm}
\footnotesize Note: This table reports the performance of various estimators for $\mu_{02}$. I report the bias, standard error of the estimates $\hat{\mu}_2$, root mean squared error (RMSE) and also the null rejection probability with target size 0.05 across three different DGPs. The sample size is $N=500$, $T=22$, and the auto-regressive coefficient $\beta_0=0.5$. Panel A considers $c=0$ with no correlation between $\bar{\bar{u}}_i$ and $W_i-\alpha_i$, while Panel B and C sets $c=5$ and $c=6$ respectively. CGK corresponds to the measurement error correction estimator suggested in \cite{CGK2025}. Naive, Xie-EB and Xie-SURE uses heteroskedastic robust standard error for $\hat{\alpha}_i$, $\hat{\alpha}_i^{EB}$ and $\hat{\alpha}_i^{SURE}$ respectively. Orth-naive, Orth-EB and Orth-SURE uses the orthogonal moment constructions proposed in this paper but uses $\tilde{\alpha}_{il}$, $\tilde{\alpha}_{il}^{EB}$ and $\tilde{\alpha}_{il}^{SURE}$ respectively. Columns 6-8 report results when the data is split into 5 equal-sized folds. 
\end{minipage}

\end{table}

\section{Empirical Application}\label{application}

We will consider an empirical application that studies the site selection of a policy experiment on agriculture catastrophe insurance in China. \cite{policy} built a comprehensive database for past policy experiments conducted by Chinese central government, and they documented consistent patterns of positive selection of experiment sites. Specifically, the $t$-statistics that compare the average fiscal revenue between experimental sites and non-experimental ones have a significant proportion that exceeds the usual 95\% quantile of the $t$-distribution. The result is robust when they compare agriculture outputs between sites for agricultural related experiments. \cite{policy} also listed potential reasons for positive selection of the experimental sites, but did not commit to a particular explanation of the phenomenon. However, understanding the way that experimental sites are positively selected can be important in interpreting the experimental results and determining whether the new policy should be rolled out to the entire country. 

Policy experiments are typically implemented when there is substantial uncertainty over the effects of the proposed policy, and as a means to encourage local government to find optimal policy instruments for promoting the central government's policy objective \citep{heilmann1}.\footnote{For background and case studies on China's policy experimentation before 2008, see \cite{heilmann1,heilmann2}.}  The central government usually sets the policy agenda, but leaves the particular implementation details to the experimental sites. The successful experimental sites will be designated as ``model sites'', and the less successful ones are encouraged to learn from the experience of the ``model sites''. In such context, positively selecting experimental sites that are better at adjusting policy details and organizing policy reforms can potentially yield valuable experience for non-experimental sites. Therefore, positive selection under this form can be desirable from the view of the central government. 

As an illustration, we focus on a particular policy experiment -- the promotion of agriculture catastrophe insurance (ACI) -- that has taken place in 2017. Prior to the experiment, the insured liability of the agriculture insurance typically only covered direct input costs like fertilizer and seeds, which made the farmers vulnerable to natural disasters. In response to the deficiency of the insurance products, the central government would like to develop insurance that has higher coverage to help farmers guard against risks and encourage uptakes of more advanced agricultural production techniques. One major difference under the new ACI is that it covers land costs in addition to direct input costs.  Under the new ACI, with the same premium, the insured liability increases from 450 CNY per acre to 900 CNY per acre in Shandong Province for instance \citep{insurance}. To facilitate adoption of ACI, the premiums that farmers pay were highly subsidized by both the central and local government.  The focus of the ACI experiment in 2017 was on moderately scaled farmers who were more susceptible to climate risks and also on strategic agriculture products that consist of rice, corn and wheat. The experiment was conducted at county-level, with in total 200 counties participating in the experiment. However, instead of directly choosing counties, the central government hand-picked 13 major grain production provinces and delegated the choice of specific counties to each provincial government with quotas. The guidelines issued by the central government required the experimental counties to have well-established insurance foundations, a sufficient number of moderately scaled farmers and also to be representative in the province.\footnote{See \url{https://jcs.moa.gov.cn/trzgl/201705/t20170527_5626557.htm} } In addition, policy details like the criteria of moderately scaled farmers, the formula for evaluating land costs and means of introducing the insurance to farmers were delegated to local governments. This suggests that selecting a (random) representative sample was not the sole objective of experimental site selection. Furthermore, the central government would also want to learn the best practice of implementation by enrolling counties that were more specialized in agriculture insurance and agriculture production. We would like to test whether the provincial governments were indeed choosing counties that were more specialized in agriculture production as experimental sites. 

\subsection{Data and measurements}

The data comes from the replication package of \cite{policy}. In addition, the selected counties that participated in the ACI experiment are hand-collected from the provincial governments' websites with the exception of the Inner Mongolia province whose information is not directly available. We restricted the sample to counties that are in the remaining 12 major grain producing provinces and dropped counties whose administrative divisions have changed. The total sample consists of 961 counties, among which 168 of them were selected as experiment sites. For each county, we observe its GDP, population, fiscal income, fiscal expenditure, total grain production, agricultural mechanical power, total rural employment and the total number of policy experiment that the county has from 1997-2016. The variable total rural employment is only observed up until 2012. As a preliminary comparison, Table \ref{tab:bt} reports the means of the above measures in 2016 for both experimental and non-experimental counties. The value of rural employment is computed using 2012 data. In addition, we also compared the GDP and fiscal revenue growth rates over the period 2013-2016.   

\begin{table}[ht]
\footnotesize
\centering
\begin{tabular}{p{4cm}| c c c c c}
\toprule
~ & Mean (Experiment=0) & Mean (Experiment=1) & Diff & p-value & Obs\\
\hline
 GDP\newline(100 million) & 225.260 & 312.566 & 87.306 & 0.000 & 961 \\
 Population \newline (10,000) & 55.335 & 88.488 & 33.153 & 0.000 & 961 \\
 Agricultural Mechanical \newline Power (10 million Watts) & 51.256 & 98.084 & 46.829 & 0.000 & 956 \\
 Grain Output \newline (10 thousand tons) & 30.678 & 77.366 & 46.688 & 0.000 & 894 \\
 Fiscal Expense \newline (100 million) & 34.204 & 45.986 & 11.782 & 0.000 & 961 \\
 Fiscal Income \newline (100 million) & 14.938 & 18.694 & 3.756 & 0.022 & 961 \\
 GDP growth \newline & 0.062 & 0.054 & -0.008 & 0.273 & 961 \\
 Fiscal Revenue \newline growth  & 0.077 & 0.061 & -0.016 & 0.188 & 961 \\
 Number of \newline Experiments & 8.851 & 9.607 & 0.756 & 0.071 & 961 \\
 Rural Employment \newline (10,000) & 24.857 & 39.855 & 14.998 & 0.000 & 960 \\
\\
\bottomrule
\end{tabular}

\caption{Balance Table comparing experiment sites and non-experiment sites}
\label{tab:bt}

\vspace{0.2cm}

\begin{minipage}{15cm}
\footnotesize Note: The table reports the mean of the measures for both experimental and non-experimental counties, along with the p-value associated with the t statistics. GDP, Population, Agricultural Mechanical Power, Grain Output, Fiscal Expenditure, Fiscal Income and number of ongoing policy experiments are measured in 2016, while Rural Employment is measured in 2012. In addition, Fiscal revenue growth rate and GDP growth rate are computed from the period between 2013-2016.
\end{minipage}
\end{table}

Compared with the non-experimental counties, the experimental ones tend to be larger in terms of population, GDP and fiscal expense. In addition, they also produced more grains and used more agricultural machinery. This serves as preliminary evidence that when choosing counties, the provincial governments not only aimed at obtaining a representative sample, but also wanted to enroll counties that were more developed and counties that had a larger share of agricultural industry. These counties can potentially lay out a better implementation plan of ACI, and thus establish ``model practices'' for other non-experimental ones. We now turn to more formal econometric analysis of the problem. 

\subsection{Econometric Specification}

As the outcome is binary, we will specify a simple logit model:
\begin{align}
\text{Experiment}_i = \mathds{1}(& \mu_{01} \alpha_i+ \mu_{02}\text{GDP\_per\_cap}_i+ \mu_{03} \text{GDP\_growth}_i\nonumber \\
&+\mu_{04} \text{FR\_growth}_i +\mu_{05} \text{N\_policy}_i+\delta_p \geq v_i ),
\label{eq:logit}
\end{align}
where GDP per Capita and number of policy experiments are measured in 2016, and GDP\_growth and FR\_growth are GDP and fiscal revenue growth rate between 2013 and 2016 for county $i$. $\delta_p$ is province fixed effect that captures heterogeneity across provinces, and $v_i$'s are iid logit error. We also consider a version that omits the province fixed effects. in which case it will be replaced with an intercept. The key variable is $\alpha_i$ which measures the degree of comparative advantage of county $i$ in grain production. If the provincial governments were selecting experimental counties that were more specialized in grain production, then we should expect the coefficient $\mu_{01}$ to be positive. However, the degree of specialization is not directly observable to the researcher, and we propose a panel data model that can help us recover $\alpha_i$'s.

\begin{align}
\text{grain\_per\_emp}_{it} = &\beta_{01} \text{GDP\_per\_cap}_{it} + \beta_{02} \text{Mech\_per\_emp}_{it}\nonumber \\
 & +\beta_{03} \text{FE\_per\_cap}_{it} +\alpha_i + \delta_{pt} +u_{it},
\label{eq:panel grain}
\end{align}
where we regress grain output per rural employment for each county $i$ and year $t$ on that county-year pair's GDP per capita, mechanical power per rural employment and fiscal expenditure per capita, along with county fixed effects $\alpha_i$ and province by time fixed effects $\delta_{pt}$. A larger value of $\alpha_i$ indicates that county $i$ is more efficient in grain production compared with its fellow counties that have similar fiscal support, income level and amount of agricultural machinery. Therefore, $\alpha_i$ captures the level of specialization of grain production for different counties. 

Although the presence of $\delta_{pt}$ may cause some difficulties for directly implementing the procedure in Section \ref{procedure}. In practice, we first transform (\ref{eq:panel grain}) by demeaning the province by time average of each variable to eliminate the province by time fixed effects.
\begin{align*}
 \dot{\text{grain\_per\_emp}}_{it} = & \beta_{01} \dot{\text{GDP\_per\_cap}}_{it} + \beta_{02} \dot{\text{Mech\_per\_emp}}_{it} \nonumber\\
& +\beta_{03} \dot{\text{FE\_per\_cap}}_{it} +\dot{\alpha}_i  + \dot{u}_{it},
\end{align*}
where $\dot{A_{it}}=A_{it}-\bar{A}_{p(i)t}$ is the demeaned version of $A_{it}$ with $\bar{A}_{p(i)t}$ being the mean of $A_{it}$ across province $p$ to which $i$ belongs and time $t$. The $X_{it}$ in (\ref{eq:1st stage}) would correspond to $(\dot{\text{GDP\_per\_cap}}_{it},\dot{\text{Mech\_per\_emp}}_{it},\dot{\text{FE\_per\_cap}}_{it})$. Applying procedure in Section \ref{procedure} then identifies $\dot{\alpha}_i$. As $\alpha_i$ is time invariant, $\dot{\alpha}_i=\alpha_i-\bar{\alpha}_{p(i)}$. Note that $\dot{\alpha}_i$ and $\alpha_i$ differ by a constant that only varies by province. Therefore, rewriting the logit model in (\ref{eq:logit}), we have
\begin{align*}
\text{Experiment}_i = \mathds{1}(&\mu_{01} \dot{\alpha}_i+ \mu_{02}\text{GDP\_per\_cap}_i+ \mu_{03} \text{GDP\_growth}_i \\
&+\mu_{04} \text{FR\_growth}_i +\mu_{05} \text{N\_policy}_i+(\delta_p+\mu_{01} \bar{\alpha}_p) \geq v_i ).
\end{align*}
Therefore, $\mu_0$ can still be identified from the demeaned version of $\alpha_i$ as long as the province fixed effects are included in the cross-sectional logit model. 

To fit into the framework of (\ref{eq:mom mu}), we will work with the moment conditions that are associated with the score of the likelihood function define by (\ref{eq:logit}). With some abuse of notation define $W_i =(\text{GDP\_per\_cap}_i, \text{GDP\_growth}_i,\text{FR\_growth}_i,\text{N\_policy}_i,\delta_p)'$ where $\delta_p$'s are understood as 12 dummy variables. Let $\mu_{0,-}=(\mu_{02},\mu_{03},\mu_{04},\mu_{05},\mu_{0,p})'$ where $\mu_{0,p}$ are fixed effects associated with the 12 provinces. Then we are solving the estimating equations
\begin{equation}
\mathbb{E} \begin{bmatrix}  (\text{Experiment}_i- \Lambda(W_i'\mu_{0,-}+\alpha_i \mu_{01}) ) \alpha_i \\ (\text{Experiment}_i- \Lambda(W_i'\mu_{0,-}+\alpha_i \mu_{01}) ) W_i \end{bmatrix} =\begin{bmatrix}0  \\ \mathbf{0}  \end{bmatrix},
\label{eq:mom logit}
\end{equation}
where $\Lambda(x)=\frac{\exp(x)}{1+\exp(x)}$ is the logistic function. With the moment condition \\
$\mathbb{E}[m(W_i,\text{Experiment}_i,\alpha_i,\mu_0)]=0$ defined as in (\ref{eq:mom logit}), the corresponding $a(X_i,\alpha_i,\mu_0)$ for $X_i=\{X_{it}\}_{t=1997}^{2012} \cup \{\text{grain\_per\_emp}_{it}\}_{t=1997}^{2011} $ is
\begin{equation}
a(X_i,\alpha_i,\mu_0)=\mathbb{E} \left( \begin{bmatrix} 
-\mu_{01} \Lambda_i (1-\Lambda_i) \alpha_i + \text{Experiment}_i- \Lambda_i \\
-\mu_{01} \Lambda_i (1-\Lambda_i) W_i   \end{bmatrix} | X_i,\alpha_i\right),
\label{eq:a logit}
\end{equation} 
where $\Lambda_i= \Lambda(W_i'\mu_{0,-}+\alpha_i \mu_{0,1})$. In practice, we estimate $a(X_i,\alpha_i,\mu_0)$ by performing adaptive elastic net regression on $X_i$ and $\alpha_i$. 

Estimation of the Jacobian matrix of $\mu_0$ is similar to the classical maximum likelihood theory:
\begin{equation}
\hat{G}= \frac{1}{N} \sum_{l=1}^L \sum_{i\in I_l} \Lambda_i (1-\Lambda_i) (\tilde{\alpha}_{il},W_i)'(\tilde{\alpha}_{il}, W_i)
\end{equation}

As in the simulation of Section \ref{simulation}, the estimates of $\mu_0$ and the asymptotic variance (\ref{eq:vhat}) are averaged across 20 different sample splits with $L=5$. 

\subsection{Results}

Before presenting estimates of $\mu_0$, it is useful to show the results of the panel regression (\ref{eq:panel grain}) first.

\begin{table}[ht]
\centering
\begin{tabular}{ c c c c}
\toprule
~ & (1) & (2) & (3)\\
\hline
cons & 0.595*** &  &  \\
 & (0.196) &  &  \\
GDP per cap & -0.15 & -0.041 & -0.016 \\
 & (0.132) & (0.04) & (0.018) \\
Mech per emp & 0.611** & 0.678** & 0.676** \\
 & (0.305) & (0.33) & (0.314) \\
FE per cap & 0.265 & -0.423 & -0.395 \\
 & (0.786) & (0.264) & (0.234) \\
\\
\hline
County Fixed Effects & No & Yes & Yes \\
Province-by-time Fixed Effects & No & No & Yes\\
\bottomrule
\end{tabular}

\caption{Panel regression results for (\ref{eq:panel grain})}
\label{tab:panel results}

\vspace{0.2cm}

\begin{minipage}{15cm}
\footnotesize Note: The table reports estimates of $\beta_0$ for (\ref{eq:panel grain}). Specification (1) is a simple linear regression. Specification (2) only includes county fixed effects, while specification (3) is the preferred specification which includes both county fixed effects and province-by-time fixed effects.
\end{minipage}

\end{table}

We see that the coefficients on both GDP per capita and fiscal expenditure per capita change significantly when county fixed effects are included, while inclusion of province-by-time fixed effects have more limited impacts on the magnitude and sign of all coefficients. The mechanical power per rural employment turns out to be a salient predictor of grain production per rural employment, while the coefficients on GDP per capita and fiscal expenditure per capita are not significantly different from zero. Now we turn to results on $\mu_0$.

\begin{table}[htbp]
\centering
\begin{tabular}{l c c c c}
\toprule
\multicolumn{5}{l}{Panel A: Estimates of $\mu_0$ without county fixed effects}\\
~ & (1) & (2) & (3) & (4) \\
~ & Plug-in & Orth-mean & Orth-EB & Orth-SURE\\
cons & -1.826 & -1.829 & -1.884 & -1.855 \\
 & (0.209) & (0.203) & (0.212) & (0.208) \\
$\alpha_i$ & 0.615*** & 0.399*** & 0.47*** & 0.41*** \\
 & (0.118) & (0.123) & (0.165) & (0.138) \\
GDP per Capita & -0.016 & -0.019 & -0.014 & -0.019 \\
 & (0.03) & (0.028) & (0.03) & (0.029) \\
GDP growth & -0.333 & -0.318 & -0.045 & -0.144 \\
 & (1.394) & (1.406) & (1.487) & (1.457) \\
FR growth & -0.611 & -0.62 & -0.786 & -0.685 \\
 & (0.729) & (0.654) & (0.716) & (0.679) \\
N policy & 0.036** & 0.039** & 0.038** & 0.038** \\
 & (0.018) & (0.018) & (0.019) & (0.018) \\
\\
\hline
County Fixed Effects & No & No & No & No\\
\midrule
\multicolumn{5}{l}{Panel B: Estimates of $\mu_0$ with county fixed effects}\\
~ & (1) & (2) & (3) & (4) \\
~ & Plug-in & Orth-mean & Orth-EB & Orth-SURE\\
$\alpha_i$ & 0.562*** & 0.362*** & 0.467*** & 0.378*** \\
 & (0.12) & (0.122) & (0.152) & (0.13) \\
GDP per Capita & -0.049 & -0.06 & -0.052 & -0.064 \\
 & (0.036) & (0.037) & (0.04) & (0.039) \\
GDP growth & 2.479 & 2.047 & 2.342 & 2.208 \\
 & (2.114) & (2.107) & (2.246) & (2.175) \\
FR growth & 0.175 & 0.209 & 0.134 & 0.247 \\
 & (0.776) & (0.576) & (0.633) & (0.57) \\
N policy & 0.034* & 0.031 & 0.032 & 0.031 \\
 & (0.02) & (0.019) & (0.021) & (0.02) \\
\\
\hline
County Fixed Effects & Yes & Yes & Yes & Yes\\
\bottomrule
\end{tabular}
\caption{Estimates of $\mu_0$ in (\ref{eq:logit})}
\label{tab:logit results}

\vspace{0.2cm}

\begin{minipage}{15cm}
\footnotesize Note: The table reports estimates of $\mu_0$ for (\ref{eq:logit}). Panel A estimates the version without the province fixed effects, while panel B includes the province fixed effects, which is the preferred specification. Across columns, we report estimates of $\mu_0$ using the plug-in $\hat{\alpha}_i$ from specification (3) of Table \ref{tab:panel results}, and also estimates that employs orthogonal moments proposed in this paper with and without shrinkage corrections.
\end{minipage}

\end{table}
 
Panel A of Table \ref{tab:logit results} report the results of estimating (\ref{eq:logit}) without province fixed effects, while panel B report the ones with province fixed effects. Comparing results across panels, the coefficients on GDP growth rate and fiscal revenue growth rate switch signs. Even though the coefficients associated with the two variables are not statistically significant, it demonstrates the importance of including province fixed effects in (\ref{eq:logit}) when (\ref{eq:panel grain}) is estimated with province-by-time fixed effects. Now we turn to the comparison between the plug-in estimates and the estimates based on orthogonal moments. The coefficients are similar in sign and magnitude when we compare the plug-in estimates with estimates based on orthogonal moments except the coefficient $\mu_{01}$ on $\alpha_i$. In both panels, $\mu_{01}$ is estimated to be larger (around 0.6) when we use the simple plug-in methods, while the estimates based on orthogonal moments give estimates that range from 0.36 to 0.47. Therefore, correcting errors in estimating $\alpha_i$ could be important in certain applications. Nevertheless, in our setting, the use of orthogonal moments still gives consistent evidence that there is indeed positive selection of experimental counties in terms of their comparative advantage in grain production. The provincial governments were leaning towards more specialized counties to learn from their potentially better implementation practice and make the ongoing experiment more successful.

\section{Conclusion}

This paper employs debiased machine learning techniques to study inference of parameters characterized by cross-sectional moments involving latent variables that are fixed effects in an auxiliary panel data regression. Under the asymptotic regime that both the cross-sectional dimension and time dimension go to infinity, the asymptotic variance of the estimator based on orthogonal moments is the same as if the true latent variables are known to the researcher. In addition, shrinkage methods based on Empirical Bayes or Stein's Unbiased Risk Estimation can be readily incorporated for estimating the latent variables. The approach enables researchers to deal with non-linearity in the cross-sectional moments and relaxes independence assumption between the panel data residuals and the cross-sectional moment functions.  Simulation results show that using orthogonal moments have better null rejection probabilities compared with existing methods when the panel data residuals are correlated with the cross-sectional moment functions. In an empirical application, we examine the experimental site selection of the agricultural catastrophe insurance experiment in China. Interpreting the score function of a logit model as the targeted moments, we show that specialization in grain production is an important determinant on whether a county is selected into the policy experiment where the level of specialization is captured as county fixed effects in an auxiliary linear panel data model. The estimates using orthogonal moments are smaller in magnitude compared with the simple plug-in approach, while SURE and EB corrections deliver similar results. In general, using orthogonal moments can be important for valid inference especially when researchers are in doubt that the residuals in the panel data model are uncorrelated with the cross-sectional moment functions.

\bibliography{fixedeffects}

\appendix
\counterwithin{lemma}{section} 
\renewcommand{\thelemma}{\thesection.\arabic{lemma}} 

\counterwithin{assumption}{section} 
\renewcommand{\theassumption}{\thesection.\arabic{assumption}} 

\counterwithin{remark}{section} 
\renewcommand{\theremark}{\thesection.\arabic{remark}} 

\section{Proof of results in section \ref{as theory}}\label{ap:orth}

The proof of proposition \ref{prop:omega} and the following lemmas are going to use the following result repeatedly,
\begin{lemma}\label{lm:2square}
For random variables $A_i$ and $B_i$ that satisfies $\frac{1}{N} \sum_{i=1}^N A_i^2=o_p(1)$ and $\frac{1}{N} \sum_{i=1}^N B_i^2=o_p(1)$, then
\[\frac{1}{N} \sum_{i=1}^N (A_i+B_i)^2=o_p(1)\]

\end{lemma}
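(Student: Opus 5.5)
The plan is to reduce the claim to a pointwise inequality and then average. For any reals $a,b$ we have $(a+b)^2\le 2a^2+2b^2$. This follows from $0\le (a-b)^2 = a^2+b^2-2ab$, which gives $2ab\le a^2+b^2$ and hence $(a+b)^2=a^2+b^2+2ab\le 2a^2+2b^2$. Applying this with $a=A_i$, $b=B_i$ for each $i$ and averaging over $i=1,\dots,N$ yields the deterministic bound
\[
0\le \frac{1}{N}\sum_{i=1}^N (A_i+B_i)^2 \le 2\,\frac{1}{N}\sum_{i=1}^N A_i^2 + 2\,\frac{1}{N}\sum_{i=1}^N B_i^2 .
\]
This holds for every realization, with no independence or moment assumptions on $A_i$ and $B_i$.

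The second step transfers the $o_p(1)$ property. By hypothesis both averages on the right are $o_p(1)$. Since sums and fixed scalar multiples of $o_p(1)$ sequences are $o_p(1)$, the right-hand side is $o_p(1)$. To be explicit: for any $\varepsilon>0$,
\[
P\Big(\tfrac1N\textstyle\sum_i (A_i+B_i)^2>\varepsilon\Big)\le P\Big(\tfrac1N\textstyle\sum_i A_i^2>\varepsilon/4\Big)+P\Big(\tfrac1N\textstyle\sum_i B_i^2>\varepsilon/4\Big)\to 0 .
\]
The left-hand quantity is nonnegative and dominated by an $o_p(1)$ term, so it is itself $o_p(1)$, which is the claim.

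No step is a genuine obstacle. The only point needing care is that the domination holds pathwise, so no expectations or uniform integrability are required. The same argument extends to any finite number of summands: use $(\sum_{k=1}^K a_k)^2\le K\sum_{k=1}^K a_k^2$ and iterate the lemma. This form is convenient when the lemma is later applied to the several remainder terms arising in the proofs of Proposition \ref{prop:omega} and Proposition \ref{prop:omegahat}.
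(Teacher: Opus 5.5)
Your proof is correct and matches the paper's argument: the pointwise bound $(A_i+B_i)^2\le 2(A_i^2+B_i^2)$, averaged over $i$, places the target between $0$ and an $o_p(1)$ quantity. Your explicit $\varepsilon/4$ union bound only spells out the final step, which the paper leaves implicit.
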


\begin{proof}
Elementary inequality has $(A_i+B_i)^2\leq 2(A_i^2+B_i^2)$, hence
\[0\leq \frac{1}{N} \sum_{i=1}^N (A_i+B_i)^2\leq \frac{2}{N} \sum_{i=1}^N A_i^2 +B_i^2 =o_p(1)\]

\end{proof}

Throughout the proof, $\tilde{\alpha}_{il}$ will be used to represent an estimator of $\alpha_i$ which can be based on time average, EB or SURE corrections. For cases where distinguishing the type of estimators are important, $\tilde{\alpha}_{il}^*$ will be used to denote estimators that incorporate the shrinkage idea in EB or SURE.

\begin{lemma}\label{lm:no rrhat}
Under assumption \ref{ass:cross}-\ref{as:mom}, and we have
\begin{equation}
\frac{1}{\sqrt{N}} \sum_{i\in I_l} (\tilde{a}_{il}-a(X_i,\alpha_i,\mu_0))(Y_{iT}-X_{iT}'\tilde{\beta}_l-\tilde{\alpha}_{il}) = o_p(1).
\label{eq:no rrhat}
\end{equation}

If further, assumption \ref{as:mom omega} holds, then
\begin{equation}
\frac{1}{N} \sum_{i\in I_l} (\tilde{a}_{il}-a(X_i,\alpha_i,\mu_0))^2 (Y_{iT}-X_{iT}'\tilde{\beta}_l-\tilde{\alpha}_{il})^2 =o_p(1).
\label{eq:omega no rrhat}
\end{equation}

The conclusion also hold when $\tilde{\alpha}_{il}$'s are replaced with their EB or SURE shrinkage versions $\tilde{\alpha}_{il}^*$.
\end{lemma}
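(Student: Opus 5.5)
The plan is to decompose the period-$T$ residual evaluated at the estimated nuisances,
\[
Y_{iT}-X_{iT}'\tilde{\beta}_l-\tilde{\alpha}_{il} = u_{iT} - X_{iT}'(\tilde{\beta}_l-\beta_0) - (\tilde{\alpha}_{il}-\alpha_i),
\]
and write $\delta_{il}=\tilde{a}_{il}-a(X_i,\alpha_i,\mu_0)$. This splits the left side of (\ref{eq:no rrhat}) into three sums, which I would treat separately:
\[
\frac{1}{\sqrt{N}}\sum_{i\in I_l}\delta_{il}u_{iT},\qquad
\frac{1}{\sqrt{N}}\sum_{i\in I_l}\delta_{il}X_{iT}'(\tilde{\beta}_l-\beta_0),\qquad
\frac{1}{\sqrt{N}}\sum_{i\in I_l}\delta_{il}(\tilde{\alpha}_{il}-\alpha_i).
\]

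\emph{First term.} This is the usual cross-fitting argument. Conditional on $\mathcal{W}_l^C$, $\tilde{\pi}_l$ is fixed, so $\delta_{il}$ is a function of $(X_i,\tilde{\alpha}_{il})$ only, hence of $X_i$, $\bar{\bar{Y}}_i$ and $\alpha_i$. Sequential exogeneity gives $\mathbb{E}(u_{iT}\mid X_i,\alpha_i)=0$; since $\bar{\bar{Y}}_i$ uses only periods before $T$, $\bar{\bar{Y}}_i$ is determined by $(X_i,\alpha_i)$ together with the earlier $u_{it}$, which lie in the conditioning filtration. So the summands are conditionally mean zero and independent across $i\in I_l$. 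Their conditional second moment is at most $C\,\mathbb{E}(\delta_{il}^2\mid\mathcal{W}_l^C)=o_p(N^{-\zeta})$ by Assumption \ref{as:mom}(i) and Assumption \ref{as:rr}(ii). Chebyshev's inequality then makes the term $o_p(1)$ conditionally, and Lemma 6.1 of \cite{dml} converts this to unconditional convergence.

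\emph{Second term.} I would bound it by Cauchy--Schwarz:
\[
\sqrt{N}\,\|\tilde{\beta}_l-\beta_0\|\Bigl(\tfrac{1}{N}\textstyle\sum_{i\in I_l}\delta_{il}^2\Bigr)^{1/2}\Bigl(\tfrac{1}{N}\textstyle\sum_{i\in I_l}\|X_{iT}\|^2\Bigr)^{1/2}.
\]
By Assumption \ref{as:rr}(i), (iv) and Assumption \ref{as:mom}(ii) this is $\sqrt{N}\,O_p((NT)^{-1/2})\,o_p(N^{-\zeta/2})\,O_p(1)=o_p(1)$.

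\emph{Third term.} This is the main obstacle, because $\tilde{\alpha}_{il}$ depends on observation $i$ and no cross-fitting argument is available. I would again use Cauchy--Schwarz:
\[
\sqrt{N}\Bigl(\tfrac{1}{N}\textstyle\sum_{i\in I_l}\delta_{il}^2\Bigr)^{1/2}\Bigl(\tfrac{1}{N}\textstyle\sum_{i\in I_l}(\tilde{\alpha}_{il}-\alpha_i)^2\Bigr)^{1/2}=\sqrt{N}\,o_p(N^{-\zeta/2})\,O_p(T^{-1/2}).
\]
The second factor uses Assumption \ref{as:alpha}(i). With $T\propto N^g$ the bound is $o_p(N^{(1-\zeta-g)/2})=o_p(1)$ exactly when $\zeta+g\geq1$, which is the requirement in Assumption \ref{as:rr}. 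The point to check is that Assumption \ref{as:alpha}(i) is stated for both $\tilde{\alpha}_{il}$ and $\tilde{\alpha}^*_{il}$. The first two terms never used how $\alpha_i$ was estimated, except that $\tilde{\alpha}^*_{il}$ must still be a function of pre-$T$ data. EB/SURE shrinkage uses within-fold data $\hat{u}_{it}$ for $t<T$, which keeps $u_{iT}$ conditionally mean zero. If the shrinkage weights involve other units' $u_{jT}$, I would instead bound the first term using $\tilde{a}_{il}$ computed at the unshrunk $\tilde{\alpha}_{il}$ plus a Lipschitz remainder, but I expect this is unnecessary.

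For (\ref{eq:omega no rrhat}) I would expand the squared residual with Lemma \ref{lm:2square}-type inequalities:
\[
(Y_{iT}-X_{iT}'\tilde{\beta}_l-\tilde{\alpha}_{il})^2\leq 3\bigl(u_{iT}^2+\|X_{iT}\|^2\|\tilde{\beta}_l-\beta_0\|^2+(\tilde{\alpha}_{il}-\alpha_i)^2\bigr).
\]
The $u_{iT}^2$ part has conditional expectation bounded by $C\,\mathbb{E}(\delta_{il}^2\mid\mathcal{W}_l^C)=o_p(1)$ by Assumption \ref{as:mom}(i), so it is $o_p(1)$ by Markov's inequality. The $\beta$ part is $O_p((NT)^{-1})$ times $\frac{1}{N}\sum_{i\in I_l}\delta_{il}^2\|X_{iT}\|^2$. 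That average is $O_p(1)$ because $\mathbb{E}(\tilde{a}_{il}^2\|X_{iT}\|^2)$ and $\mathbb{E}(a^2\|X_{iT}\|^2)$ are finite by Assumption \ref{as:mom}(iii). The $\alpha$ part is bounded by Cauchy--Schwarz by the square root of the product in Assumption \ref{as:mom omega}(i), which is $o_p(1)$.
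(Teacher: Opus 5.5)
Your proposal is correct and follows the paper's proof essentially step by step: the same three-way split of the period-$T$ residual, the conditional mean-zero and variance argument given $\mathcal{W}_l^C$ (plus Lemma 6.1 of \cite{dml}) for the $u_{iT}$ term, Cauchy--Schwarz for the $\beta$ and $\alpha$ terms with $\zeta+g\geq 1$, and Lemma \ref{lm:2square} for the squared version. The only real difference is the $\alpha$-part of (\ref{eq:omega no rrhat}), where you use a fourth-moment Cauchy--Schwarz bound that invokes Assumption \ref{as:mom omega}(i) directly, whereas the paper bounds $\sup_{i}(\alpha_i-\tilde{\alpha}_{il})^2$ by $N\cdot\frac{1}{N}\sum_{i}(\alpha_i-\tilde{\alpha}_{il})^2$ and reuses the rate condition $\zeta+g\geq 1$; both routes work.
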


\begin{remark}
As the number of folds $L$ is held fixed as sample size grows, the lemma will also imply
\[\frac{1}{\sqrt{N}} \sum_{l=1}^L \sum_{i\in I_l} (\tilde{a}_{il}-a(X_i,\alpha_i,\mu_0))(Y_{iT}-X_{iT}'\tilde{\beta}_l-\tilde{\alpha}_{il}) = o_p(1)\]
In particular, under lemma \ref{lm:no rrhat} we will be able to replace $\tilde{a}_{il}$ in (\ref{eq:orth sample mom}) with the true $a(X_i,\alpha_i,\mu_0)$ without changes in the asymptotic distribution:
\begin{align*}
& \frac{1}{\sqrt{N}} \sum_{l=1}^L \sum_{i\in I_l} \tilde{a}_{il}(Y_{iT}-X_{iT}'\tilde{\beta}_l-\tilde{\alpha}_{il})\\
&=\frac{1}{\sqrt{N}} \sum_{l=1}^L \sum_{i\in I_l} a(X_i,\alpha_i,\mu_0)(Y_{iT}-X_{iT}'\tilde{\beta}_l-\tilde{\alpha}_{il}) +o_p(1)
\end{align*}
\end{remark}

\begin{proof}
Notice that 
\begin{align*}
&Y_{iT}-X_{iT}'\tilde{\beta}_l-\tilde{\alpha}_{il}\\
&=\underbrace{Y_{iT} -X_{iT}'\beta-\alpha_i}_{u_{iT}} +X_{iT}'(\beta-\tilde{\beta}_l)+(\alpha_i-\tilde{\alpha}_{il})
\end{align*}

Hence, we can expand the LHS of (\ref{eq:no rrhat}) into the three terms:
\begin{align*}
& \frac{1}{\sqrt{N}} \sum_{i\in I_l} (\tilde{a}_{il}-a(X_i,\alpha_i,\mu_0))(Y_{iT}-X_{iT}'\tilde{\beta}_l-\tilde{\alpha}_{il})\\
&=\underbrace{\frac{1}{\sqrt{N}} \sum_{i\in I_l} (\tilde{a}_{il}-a(X_i,\alpha_i,\mu_0)) u_{iT}}_{I} \\
&+\underbrace{\frac{1}{\sqrt{N}} \sum_{i\in I_l} (\tilde{a}_{il}-a(X_i,\alpha_i,\mu_0)) X_{iT}'(\beta-\tilde{\beta}_l)}_{II}\\
&+\underbrace{\frac{1}{\sqrt{N}} \sum_{i\in I_l} (\tilde{a}_{il}-a(X_i,\alpha_i,\mu_0))(\alpha_i-\tilde{\alpha}_{il})}_{III}
\end{align*}

We will deal with the three terms separately and show that all of them are $o_p(1)$. For all three terms, notice that $\tilde{a}_{il}$ and $\tilde{\beta}_l$ are estimated using samples not in $I_l$, hence $\tilde{a}_{il}$ is a deterministic function of $X_i,\bar{\bar{Y}}_i$ and hence $\tilde{a}_{il}-a(X_i,\alpha_i,\mu_0)$ are iid distributed conditional on data not in fold $I_l$. Let $\mathcal{W}_l^C$ denote the data in the complement of $I_l$. 

For (I), we have \footnote{Recall that $\bar{\bar{Y}}_i=\frac{1}{T-1}\sum_{t=1}^{T-1} Y_{it}$ which is pre-determined at period $T$}
\begin{align*}
& \mathbb{E}(\frac{1}{\sqrt{N}} \sum_{i\in I_l} (\tilde{a}_{il}-a(X_i,\alpha_i,\mu_0)) u_{iT}|\mathcal{W}_l^C)\\
&=\frac{|I_l|}{\sqrt{N}} \mathbb{E}[\mathbb{E}((\tilde{a}_{il}-a(X_i,\alpha_i,\mu_0)) u_{iT}|\mathcal{W}_l^C,X_i,\bar{\bar{Y}}_i),\alpha_i|\mathcal{W}_l^C] \\
&=\frac{|I_l|}{\sqrt{N}} \mathbb{E}[(\tilde{a}_{il}-a(X_i,\alpha_i,\mu_0))\mathbb{E}( u_{iT}|\mathcal{W}_l^C,X_i, \bar{\bar{Y}}_i,\alpha_i)|\mathcal{W}_l^C] \\
&=0,
\end{align*}
hence by law of iterated expectation $\mathbb{E}\left( \frac{1}{\sqrt{N}} \sum_{i\in I_l} (\tilde{a}_{il}-a(X_i,\alpha_i,\mu_0)) u_{iT}\right)=0$,
and 
\begin{align*}
& \mathbb{E}( \{\frac{1}{\sqrt{N}} \sum_{i\in I_l} (\tilde{a}_{il}-a(X_i,\alpha_i,\mu_0)) u_{iT}\}^2|\mathcal{W}_l^C)\\
&=\frac{|I_l|}{N} \mathbb{E}((\tilde{a}_{il}-a(X_i,\alpha_i,\mu_0))^2 u_{iT}^2|\mathcal{W}_l^C)\\
&=\frac{|I_l|}{N} \mathbb{E}[\mathbb{E}(\tilde{a}_{il}-a(X_i,\alpha_i,\mu_0))^2 u_{iT}^2|\mathcal{W}_l^C,X_i,\bar{\bar{Y}}_i,\alpha_i)|\mathcal{W}_l^C]\\
&=\frac{|I_l|}{N} \mathbb{E}[\mathbb{E}(u_{iT}^2|X_i,\bar{\bar{Y}}_i,\alpha_i) \mathbb{E}(\tilde{a}_{il}-a(X_i,\alpha_i,\mu_0))^2|\mathcal{W}_l^C,X_i,\bar{\bar{Y}}_i,\alpha_i)|\mathcal{W}_l^C]\\
&\leq \frac{C |I_l|}{N} \mathbb{E}((\tilde{a}_{il}-a(X_i,\alpha_i,\mu_0))^2|\mathcal{W}_l^C)\\
&= o_p(1)
\end{align*}
where the last line follows from condition (ii) of assumption \ref{as:rr}. I also used condition (i) of assumption \ref{as:mom} to bound $\mathbb{E}(u_{iT}^2|X_i,\bar{\bar{Y}}_i,\alpha_i)$. 

For (II), recall that conditional on $\mathcal{W}_l^C$, $\tilde{\beta}_{l}$ is a constant. In addition, $\tilde{\beta}_l-\beta = O_p(1/\sqrt{N})$, so it suffices to show that 

$\frac{1}{N} \sum_{i\in I_l} (\tilde{a}_{il}-a(X_i,\alpha_i,\mu_0)) X_{iT} =o_p(1)$

Notice that 
\begin{align*}
& \frac{1}{N} \sum_{i\in I_l} (\tilde{a}_{il}-a(X_i,\alpha_i,\mu_0)) X_{iT}\\
&\leq \sqrt{\frac{1}{N} \sum_{i\in I_l} (\tilde{a}_{il}-a(X_i,\alpha_i,\mu_0))^2} \sqrt{\frac{1}{N} \sum_{i\in I_l} \|X_{iT}\|^2}\\
&= o_p(1) O_p(1)
\end{align*}
here the first term under the square root is $o_p(1)$ by condition (iv) of assumption \ref{as:rr} and the second term is $O_p(1)$ via law of large numbers and condition (ii) of assumption \ref{as:mom}.

For (III), we can apply Cauchy-Schwartz to get for some constant $C$ depending only on $L$
\begin{align*}
& \frac{1}{\sqrt{N}} \sum_{i\in I_l} (\tilde{a}_{il}-a(X_i,\alpha_i,\mu_0))(\alpha_i-\tilde{\alpha}_{il})\\
& \leq C \sqrt{N} \sqrt{\frac{1}{|I_l|}\sum_{i\in I_l} (\tilde{a}_{il}-a(X_i,\alpha_i,\mu_0))^2} \sqrt{\frac{1}{|I_l|} \sum_{i\in I_l}(\alpha_i-\tilde{\alpha}_{il})^2}\\
& = C \sqrt{N} o_p(N^{-\zeta/2}) O_p(T^{-1/2})\\
& =o_p(1),
\end{align*}
where the second last line uses condition (ii) of assumption \ref{as:alpha} and condition (iv) of assumption \ref{as:rr}.

Now we will show (\ref{eq:omega no rrhat}) by using lemma \ref{lm:2square} applied to the terms (I), (II) and (III).
For (I), we need to show that $\frac{1}{N} \sum_{i\in I_l} (\tilde{a}_{il}-a(X_i,\alpha_i,\mu_0))^2 u_{iT}^2=o_p(1)$. Each term inside the summation is iid conditional on $\mathcal{W}_l^C$. In addition,
\begin{align*}
& \mathbb{E}((\tilde{a}_{il}-a(X_i,\alpha_i,\mu_0))^2 u_{iT}^2|\mathcal{W}_l^C)\\
&=\mathbb{E}[(\tilde{a}_{il}-a(X_i,\alpha_i,\mu_0))^2 \mathbb{E}(u_{iT}^2|X_i,\bar{\bar{Y}}_i,\alpha_i)|\mathcal{W}_l^C]\\
&\leq  C \mathbb{E}((\tilde{a}_{il}-a(X_i,\alpha_i,\mu_0))^2|\mathcal{W}_l^C)\\
&=o_p(1)
\end{align*} 
Conditional convergence (conditioning on $\mathcal{W}_l^C$) implies unconditional convergence by lemma 6.1 of \cite{dml}.

For (II), we need $\|\beta-\tilde{\beta}_l\|^2 \frac{1}{N} \sum_{i\in I_l} (\tilde{a}_{il}-a(X_i,\alpha_i,\mu_0))^2 \|X_{iT}\|^2=o_p(1)$ . As $\|\beta-\tilde{\beta}_l\|^2 =O_p(\frac{1}{N})$, we only need $\frac{1}{N} \sum_{i\in I_l} (\tilde{a}_{il}-a(X_i,\alpha_i,\mu_0))^2 \|X_{iT}\|^2=O_p(1)$. This is guaranteed by condition (iii) of assumption \ref{as:mom}:

\begin{align*}
& \mathbb{E}((\tilde{a}_{il}-a(X_i,\alpha_i,\mu_0))^2 \|X_{iT}\|^2)\\
&\leq 2 \mathbb{E}(a(X_i,\alpha_i,\mu_0)^2 \|X_{iT}\|^2 )+\mathbb{E}(\tilde{a}_{il}^2 \|X_{iT}\|^2 )\\
&<\infty
\end{align*}

For (III), we need $1/N \sum_{i\in I_l} (\tilde{a}_{il}-a(X_i,\alpha_i,\mu_0))^2 (\alpha_i-\tilde{\alpha}_{il})^2 =o_p(1)$. Using Cauchy-Schwartz,
\begin{align*}
& \frac{1}{N} \sum_{i\in I_l} (\tilde{a}_{il}-a(X_i,\alpha_i,\mu_0))^2 (\alpha_i-\tilde{\alpha}_{il})^2\\
&\leq \frac{1}{N} \sum_{i\in I_l} (\tilde{a}_{il}-a(X_i,\alpha_i,\mu_0))^2 \sup_{i\in I_l} (\alpha_i-\tilde{\alpha}_{il})^2\\
&\leq N (\frac{1}{N} \sum_{i\in I_l} (\tilde{a}_{il}-a(X_i,\alpha_i,\mu_0))^2 \frac{1}{N} \sum_{i\in I_l} (\alpha_i-\tilde{\alpha}_{il})^2)\\
&= N o_p(N^{-\zeta}) O_p(T^{-1})\\
&=o_p(1)
\end{align*}
which holds by assumption \ref{as:mom omega}. 

Therefore, (\ref{eq:omega no rrhat}) holds as well.

\end{proof}

\begin{lemma}\label{lm:no betahat}
Under assumption \ref{as:rr}, and assumption \ref{as:mom}, we have
\begin{equation}
\frac{1}{\sqrt{N}} \sum_{i\in I_l} a(X_i,\alpha_i,\mu_0) X_{iT}'(\tilde{\beta}_l-\beta) = o_p(1) 
\label{eq:no betahat}
\end{equation}

In addition,
\begin{equation}
\frac{1}{N} \sum_{i\in I_l} a(X_i,\alpha_i,\mu_0)^2 \|X_{iT}\|^2 \|\tilde{\beta}_l-\beta\|^2=o_p(1)
\label{eq:omega no betahat}
\end{equation}

\end{lemma}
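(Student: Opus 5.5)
Both claims follow by pulling out the scalar factor $\tilde{\beta}_l-\beta$, which is $O_p((NT)^{-1/2})$ by condition (i) of Assumption \ref{as:rr}, and controlling the remaining cross-sectional average with a law of large numbers.

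For (\ref{eq:no betahat}), write
\[
\frac{1}{\sqrt{N}} \sum_{i\in I_l} a(X_i,\alpha_i,\mu_0) X_{iT}'(\tilde{\beta}_l-\beta)=\sqrt{N}\,\Big(\frac{1}{N}\sum_{i\in I_l} a(X_i,\alpha_i,\mu_0) X_{iT}\Big)'(\tilde{\beta}_l-\beta).
\]
By Cauchy--Schwarz, the bracketed vector has norm at most $\big(\frac{1}{N}\sum_{i\in I_l} a(X_i,\alpha_i,\mu_0)^2\|X_{iT}\|^2\big)^{1/2}(|I_l|/N)^{1/2}$. The summands $a(X_i,\alpha_i,\mu_0)^2\|X_{iT}\|^2$ are iid across $i$ with finite mean by condition (iii) of Assumption \ref{as:mom}. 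Markov's inequality then gives $\frac{1}{N}\sum_{i\in I_l} a(X_i,\alpha_i,\mu_0)^2\|X_{iT}\|^2=O_p(1)$. Here I use Markov rather than an LLN because the distribution may change with $T$; a uniform-in-$T$ moment bound is what I take condition (iii) to supply. Combining these,
\[
\Big|\frac{1}{\sqrt{N}} \sum_{i\in I_l} a(X_i,\alpha_i,\mu_0) X_{iT}'(\tilde{\beta}_l-\beta)\Big| \leq \sqrt{N}\cdot O_p(1)\cdot O_p\big((NT)^{-1/2}\big)=O_p(T^{-1/2})=o_p(1),
\]
since $T\to\infty$.

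Two remarks on this step. First, the argument does not need the mean-zero structure of $a(X_i,\alpha_i,\mu_0)$. Second, the rate $(NT)^{-1/2}$ is essential: with only a $N^{-1/2}$ rate the same bound would give $O_p(1)$, not $o_p(1)$. That is the one place where some care is needed. If one insisted on only using $\tilde\beta_l-\beta=O_p(N^{-1/2})$, one would instead need the centred average to vanish, which fails because $\mathbb{E}(aX_{iT})\neq 0$ in general. So I rely on the full $(NT)^{-1/2}$ rate from Assumption \ref{as:rr}(i).

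For (\ref{eq:omega no betahat}), factor out the scalar $\|\tilde{\beta}_l-\beta\|^2=O_p((NT)^{-1})$, which is $o_p(1)$. The remaining factor $\frac{1}{N}\sum_{i\in I_l} a(X_i,\alpha_i,\mu_0)^2\|X_{iT}\|^2$ is $O_p(1)$ by the same Markov argument. Their product is therefore $o_p(1)$, which completes the plan.
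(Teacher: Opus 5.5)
Your proof is correct and takes essentially the same route as the paper: factor out $\tilde{\beta}_l-\beta=O_p((NT)^{-1/2})$ from Assumption~\ref{as:rr}(i), then show the remaining cross-sectional averages are $O_p(1)$ using Assumption~\ref{as:mom}(iii). The only difference is cosmetic: the paper applies a law of large numbers to $\frac{1}{N}\sum_{i\in I_l}|a(X_i,\alpha_i,\mu_0)|\,\|X_{iT}\|$, whereas you use Cauchy--Schwarz and Markov on the squared average, which needs only a moment bound uniform in $T$ rather than a triangular-array LLN.
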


\begin{remark}
Lemma \ref{lm:no betahat} combined with lemma \ref{lm:no rrhat} implies:
\begin{align*}
& \frac{1}{\sqrt{N}}\sum_{l=1}^L \sum_{i\in I_l} \tilde{a}_{il}(Y_{iT}-X_{iT}'\tilde{\beta}_l-\tilde{\alpha}_{il})\\
&=\frac{1}{\sqrt{N}} \sum_{l=1}^L \sum_{i\in I_l} a(X_i,\alpha_i,\mu_0)(Y_{iT}-X_{iT}'\tilde{\beta}_l-\tilde{\alpha}_{il}) +o_p(1)\\
&=\frac{1}{\sqrt{N}} \sum_{l=1}^L \sum_{i\in I_l} a(X_i,\alpha_i,\mu_0)(Y_{iT}-X_{iT}'\beta-\tilde{\alpha}_{il}) +o_p(1)
\end{align*}

\end{remark}

\begin{proof}

Let $\mathcal{W}_l^C$ be the data that are in the complement of fold $l$. We first establish (\ref{eq:no betahat}). Condition (i) of assumption \ref{as:rr} implies that $\tilde{\beta}_l-\beta=O_p(1/\sqrt{NT})$, so it suffices to show that 

$\frac{1}{N} \sum_{l=1}^L \sum_{i\in I_l} |a(X_i,\alpha_i,\mu_0)| \|X_{iT}\| =O_p(1)$. 

Condition (iii) of assumption \ref{as:mom} implies that $\mathbb{E}(|a(X_i,\alpha_i,\mu_0)|\|X_{iT}\|)\leq \infty$. As the samples are iid across individuals, one can apply standard law of large numbers for  

$\frac{1}{N} \sum_{l=1}^L \sum_{i\in I_l} |a(X_i,\alpha_i,\mu_0)|\|X_{iT}\|$ to conclude

\[\frac{1}{N} \sum_{l=1}^L \sum_{i\in I_l} |a(X_i,\alpha_i,\mu_0)|\|X_{iT}\| \xrightarrow{p} \mathbb{E}(|a(X_i,\alpha_i,\mu_0)|\|X_{iT}\|)<\infty\]

Therefore, $\frac{1}{N} \sum_{l=1}^L \sum_{i\in I_l} |a(X_i,\alpha_i,\mu_0)|\|X_{iT}\|=O_p(1)$

To prove (\ref{eq:omega no betahat}), condition (iii) of assumption \ref{as:mom} implies $\frac{1}{N} \sum_{i\in I_l} a(X_i,\alpha_i,\mu_0)^2 \|X_{iT}\|^2=O_p(1)$. Along with the fact that $\|\tilde{\beta}_l-\beta\|^2=O_p(1/(NT))$, we have
\[\frac{1}{N} \sum_{i\in I_l} a(X_i,\alpha_i,\mu_0)^2 \|X_{iT}\|^2 \|\tilde{\beta}_l-\beta\|^2=o_p(1)\]

\end{proof}

\begin{lemma}\label{lm:no alphahat}

Assume that assumption \ref{ass:cross}, assumption \ref{as:rr}, assumption \ref{as:alpha} and assumption \ref{as:mom} hold, then
\begin{align}
& \frac{1}{\sqrt{N}} \sum_{i\in I_l} [m(W_i,\tilde{\alpha}_{il},\mu_0)+a(X_i,\alpha_i,\mu_0)(Y_{iT}-X_{iT}'\beta_0-\tilde{\alpha}_{il})] \nonumber\\
&=\frac{1}{\sqrt{N}} \sum_{i\in I_l} [m(W_i,\alpha_i,\mu_0)+a(X_i,\alpha_i,\mu_0)(Y_{iT}-X_{iT}'\beta_0-\alpha_i)] +o_p(1)
\label{eq:no alphahat}
\end{align}

In addition, if one imposes further assumption \ref{as:mom omega},
\begin{equation}
\frac{1}{N} \sum_{i\in I_l} [m(W_i,\tilde{\alpha}_{il},\mu_0)-m(W_i,\alpha_i,\mu_0) + a(X_i,\alpha_i,\mu_0)(\alpha_i-\tilde{\alpha}_{il}))]^2 =o_p(1)
\label{eq:omega no alphahat}
\end{equation}

\end{lemma}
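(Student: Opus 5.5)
The argument is a second-order Taylor expansion in $\alpha_i$, followed by an orthogonality argument that kills the linear term. Write $\Delta_i=\tilde{\alpha}_{il}-\alpha_i$. The difference between the two sides of (\ref{eq:no alphahat}) equals
\[
\frac{1}{\sqrt{N}} \sum_{i\in I_l}\Big[m(W_i,\tilde{\alpha}_{il},\mu_0)-m(W_i,\alpha_i,\mu_0)-a(X_i,\alpha_i,\mu_0)\Delta_i\Big].
\]
Expanding $m$ to second order and using condition (i) of Assumption \ref{as:mom}, this is
\[
\frac{1}{\sqrt{N}} \sum_{i\in I_l}\Big(\frac{\partial m(W_i,\alpha_i,\mu_0)}{\partial\alpha_i}-a(X_i,\alpha_i,\mu_0)\Big)\Delta_i \;+\; R_N,
\qquad |R_N|\le \frac{C_m}{2}\frac{1}{\sqrt{N}}\sum_{i\in I_l}\Delta_i^2 .
\]

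The remainder is the easy part. By condition (i) of Assumption \ref{as:alpha}, $|R_N| = O_p(\sqrt{N}/T)=O_p(N^{1/2-g})=o_p(1)$, since $g>1/2$.

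The main work is the linear term. Set $D_i=\partial m(W_i,\alpha_i,\mu_0)/\partial\alpha_i-a(X_i,\alpha_i,\mu_0)$. For the time-average estimator,
\[
\Delta_i=\bar{\bar{u}}_i-\bar{\bar{X}}_i'(\tilde{\beta}_l-\beta_0).
\]

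The $\beta$ part is handled by independence of the fold. Because $\tilde{\beta}_l$ is computed from $\mathcal{W}_l^C$, it factors out of the sum:
\[
\frac{1}{\sqrt{N}}\sum_{i\in I_l} D_i\bar{\bar{X}}_i'(\tilde{\beta}_l-\beta_0)=\sqrt{N}\,O_p\big((NT)^{-1/2}\big)\cdot\frac{1}{N}\sum_{i\in I_l} D_i\bar{\bar{X}}_i .
\]
The average is $O_p(1)$ by condition (iv) of Assumption \ref{as:mom} and the LLN, so the whole term is $O_p(T^{-1/2})=o_p(1)$.

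The $\bar{\bar{u}}_i$ part is handled by Assumption \ref{as:alpha}(ii). The summands $D_i\bar{\bar{u}}_i$ are iid across $i$. The centered sum $N^{-1/2}\sum_{i\in I_l}(D_i\bar{\bar{u}}_i-\mathbb{E}D_i\bar{\bar{u}}_i)$ has variance at most $\mathbb{E}(D_i^2\bar{\bar{u}}_i^2)=o(1)$, so it is $o_p(1)$ by Chebyshev. The mean contributes $\sqrt{N}\,\mathbb{E}(D_i\bar{\bar{u}}_i)=\sqrt{N}\,o(N^{-1/2})=o(1)$.

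For the EB/SURE version $\tilde{\alpha}^*_{il}$, write
\[
\tilde{\alpha}^*_{il}-\tilde{\alpha}_{il}=s_i(\hat{\bar{\alpha}}_l-\tilde{\alpha}_{il}),\qquad s_i=\frac{\hat{u}_i^2}{\hat{u}_i^2+\hat{\sigma}_\alpha^2}.
\]
Then $\tilde{\alpha}_{il}-\hat{\bar{\alpha}}_l = \alpha_i+\bar{\bar{u}}_i-\mathbb{E}\alpha+o_p(1)$ uniformly enough. The extra linear term is therefore bounded by
\[
\sqrt{N}\,\sup_i s_i\cdot\frac{1}{N}\sum_{i\in I_l}\big|D_i(\alpha_i+\bar{\bar{u}}_i-\mathbb{E}\alpha)\big| .
\]
Conditions (iii)–(iv) of Assumption \ref{as:alpha} give $\sup_i s_i=o_p(N^{-1/2})$: a maximal inequality with $2+\delta$ moments on $T\hat{u}_i^2$ controls the maximum over $N$ units. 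The average is $O_p(1)$ by Assumption \ref{as:mom eb}. The $\hat{\bar{\alpha}}_l-\mathbb{E}\alpha$ and $\tilde{\beta}_l$ pieces are lower order by the same factorization. I expect this shrinkage step to be the main obstacle, because the shrinkage weights depend on all of fold $l$ and break the iid structure. The plan is to avoid any conditional argument and rely on the uniform bound on $s_i$, accepting that this is exactly what conditions (iii)–(iv) are designed for.

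For (\ref{eq:omega no alphahat}), the same expansion writes the bracket as $D_i\Delta_i+r_i$ with $|r_i|\le C_m\Delta_i^2/2$. Apply Lemma \ref{lm:2square}:
\begin{itemize}
\item $N^{-1}\sum r_i^2\lesssim N^{-1}\sum\Delta_i^4=O_p(N/T^2)=o_p(1)$, using the display following Assumption \ref{as:alpha}.
\item By Cauchy–Schwarz, $N^{-1}\sum D_i^2\Delta_i^2\le (N^{-1}\sum D_i^4)^{1/2}(N^{-1}\sum\Delta_i^4)^{1/2}=O_p(1)\,o_p(1)$, using the fourth moment of $D_i$ from Assumption \ref{as:mom}(iii).
\end{itemize}
Both terms vanish, which gives (\ref{eq:omega no alphahat}).
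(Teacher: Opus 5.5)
Your proposal is correct and follows the paper's proof essentially step for step. Both use the same second-order expansion in $\alpha_i$, with the linear term split into a $\bar{\bar{u}}_i$ piece (controlled by a second-moment bound under Assumption \ref{as:alpha}(ii)) and a $\tilde{\beta}_l$ piece, and the same $O_p(\sqrt{N}/T)$ remainder. Both reduce the EB/SURE case to $\sup_i s_{in}=o_p(N^{-1/2})$ via a union bound with $2+\delta$ moments, and handle the squared statement with Lemma \ref{lm:2square} plus Cauchy--Schwarz. The only cosmetic difference is the $\tilde{\beta}_l$ piece: you plug in the $(NT)^{-1/2}$ rate directly, whereas the paper uses the cruder $N^{-1/2}$ rate together with the fact that $(\partial m/\partial\alpha_i - a)\bar{\bar{X}}_i$ has mean zero.
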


\begin{proof}

We first note that 
\begin{align*}
& \frac{1}{\sqrt{N}} \sum_{i\in I_l} [m(W_i,\tilde{\alpha}_{il},\mu_0)+a(X_i,\alpha_i,\mu_0)(Y_{iT}-X_{iT}'\beta_0-\tilde{\alpha}_{il})]\\
& = \frac{1}{\sqrt{N}} \sum_{i\in I_l} [m(W_i,\alpha_i,\mu_0)+a(W_i,\alpha_i,\mu_0)(Y_{iT}-X_{iT}'\beta_0-\alpha_i)]\\
&+  \frac{1}{\sqrt{N}} \sum_{i\in I_l} [m(W_i,\tilde{\alpha}_{il},\mu_0)-m(W_i,\alpha_i,\mu_0) + a(X_i,\alpha_i,\mu_0)(\alpha_i-\tilde{\alpha}_{il})]
\end{align*}

Therefore, it suffices to show that the term in the last line is $o_p(1)$. 

Using the assumption that $m(W_i,\alpha_i,\mu_0)$ is twice continuously differentiable in $\alpha_i$, a standard second-order Taylor expansion around $\alpha_i$ gives 
\begin{align*}
& \frac{1}{\sqrt{N}} \sum_{i\in I_l} [m(W_i,\tilde{\alpha}_{il},\mu_0)-m(W_i,\alpha_i,\mu_0)]\\
& = \frac{1}{\sqrt{N}} \sum_{i\in I_l} \frac{\partial m(W_i,\alpha_i,\mu_0) }{\partial \alpha_i} (\tilde{\alpha}_{il}-\alpha_i)\\
&+ \frac{1}{\sqrt{N}} \sum_{i\in I_l} \frac{\partial^2 m(W_i,\check{\alpha}_i,\mu_0) }{\partial \alpha_i^2} (\tilde{\alpha}_{il}-\alpha_i)^2,
\end{align*}
for some $\check{\alpha}_i$ that lies between $\alpha_i$ and $\tilde{\alpha}_{il}$. 

Combining the above display with $\frac{1}{\sqrt{N}} \sum_{i\in I_l} a(W_i,\alpha_i,\mu_0)(\alpha_i-\tilde{\alpha}_{il})$, we have

\begin{align*}
& \frac{1}{\sqrt{N}} \sum_{i\in I_l} [m(W_i,\tilde{\alpha}_{il},\mu_0)-m(W_i,\alpha_i,\mu_0) + a(X_i,\alpha_i,\mu_0)(\alpha_i-\tilde{\alpha}_{il})]\\
& =\underbrace{ \frac{1}{\sqrt{N}} \sum_{i\in I_l} (\frac{\partial m(W_i,\alpha_i,\mu_0) }{\partial \alpha_i}- a(X_i,\alpha_i,\mu_0))(\tilde{\alpha}_{il}-\alpha_i)}_{I}\\
&+\underbrace{\frac{1}{\sqrt{N}} \sum_{i\in I_l} \frac{\partial^2 m(W_i,\check{\alpha}_i,\mu_0) }{\partial \alpha_i^2} (\tilde{\alpha}_{il}-\alpha_i)^2 }_{II}
\end{align*}

For (I), I will first prove the result under the assumption that $\tilde{\alpha}_{il}$ is constructed as the time averages: $\bar{\bar{Y}}_i-\bar{\bar{X}}_i'\tilde{\beta}_{l}$. Under this construction $(\frac{\partial m(W_i,\alpha_i,\mu_0) }{\partial \alpha_i}- a(X_i,\alpha_i,\mu_0))(\tilde{\alpha}_{il}-\alpha_i)$ is independent of $(\frac{\partial m(W_j,\alpha_j,\mu_0) }{\partial \alpha_j}- a(X_j,\alpha_j,\mu_0))(\tilde{\alpha}_{jl}-\alpha_j)$ for $i\neq j$. Note that $\tilde{\alpha}_{il}-\alpha_i = \bar{\bar{u}}_i+\bar{\bar{X}}_i'(\beta-\tilde{\beta}_l)$, we can further decompose (I) into two terms:

\begin{align*}
& \frac{1}{\sqrt{N}} \sum_{i\in I_l}  (\frac{\partial m(W_i,\alpha_i,\mu_0) }{\partial \alpha_i}- a(X_i,\alpha_i,\mu_0))(\tilde{\alpha}_{il}-\alpha_i)\\
&=\frac{1}{\sqrt{N}}\sum_{i\in I_l} (\frac{\partial m(W_i,\alpha_i,\mu_0) }{\partial \alpha_i}- a(X_i,\alpha_i,\mu_0)) \bar{\bar{u}}_i\\
&+\frac{1}{\sqrt{N}}\sum_{i\in I_l}  (\frac{\partial m(W_i,\alpha_i,\mu_0) }{\partial \alpha_i}- a(X_i,\alpha_i,\mu_0)) \bar{\bar{X}}_i'(\beta-\tilde{\beta}_l)
\end{align*}

The first term can be shown to be $o_p(1)$ by mean square convergence 
\begin{align*}
& \mathbb{E}(\{1/\sqrt{N} \sum_{i\in I_l}  (\frac{\partial m(W_i,\alpha_i,\mu_0) }{\partial \alpha_i}- a(X_i,\alpha_i,\mu_0))\bar{\bar{u}}_i\}^2|\mathcal{W}_l^C)\\
&=\frac{1}{N} \sum_{i\in I_l} \mathbb{E}((\frac{\partial m(W_i,\alpha_i,\mu_0) }{\partial \alpha_i}- a(X_i,\alpha_i,\mu_0))^2\bar{\bar{u}}_i^2|\mathcal{W}_l^C)\\
&+\frac{1}{N} \frac{|I_l|(|I_l|-1)}{2} \mathbb{E}((\frac{\partial m(W_i,\alpha_i,\mu_0) }{\partial \alpha_i}- a(X_i,\alpha_i,\mu_0))\bar{\bar{u}}_i|\mathcal{W}_l^C)^2\\
&= \mathbb{E}((\frac{\partial m(W_i,\alpha_i,\mu_0) }{\partial \alpha_i}- a(X_i,\alpha_i,\mu_0))^2\bar{\bar{u}}_i^2)\\
&+ \frac{|I_l|(|I_l|-1)}{2 N}  \mathbb{E}((\frac{\partial m(W_i,\alpha_i,\mu_0) }{\partial \alpha_i}- a(X_i,\alpha_i,\mu_0))\bar{\bar{u}}_i)^2\\
&= o_p(1)+o_p(1)=o_p(1)
\end{align*}
where condition (ii) of assumption \ref{as:alpha} ensures the two terms are $o_p(1)$.

The  the second term can be shown to be $o_p(1)$:
\begin{align*}
& \frac{1}{\sqrt{N}}\sum_{i\in I_l}  (\frac{\partial m(W_i,\alpha_i,\mu_0) }{\partial \alpha_i}- a(X_i,\alpha_i,\mu_0)) \bar{\bar{X}}_i'(\beta-\tilde{\beta}_l)\\
&\leq  \sqrt{N} \frac{1}{N}\sum_{i\in I_l}  (\frac{\partial m(W_i,\alpha_i,\mu_0) }{\partial \alpha_i}- a(X_i,\alpha_i,\mu_0)) \bar{\bar{X}}_i' O_p(1/\sqrt{N})\\
&=o_p(1) 
\end{align*}
where $(\frac{\partial m(W_i,\alpha_i,\mu_0) }{\partial \alpha_i}- a(X_i,\alpha_i,\mu_0)) \bar{X}_i'$ is iid and mean zero by law of iterated expectation and the definition of $a(X_i,\alpha_i,\mu_0)=\mathbb{E}(\frac{\partial m(W_i,\alpha_i,\mu_0) }{\partial \alpha_i}|X_i,\alpha_i)$. 

For (II), using condition (i) of assumption \ref{as:mom} that $\sup_{\alpha,W_i} \frac{\partial^2 m(W_i,\check{\alpha}_i,\mu_0) }{\partial \alpha_i^2}=C_m <\infty$, we have
\begin{align*}
& \frac{1}{\sqrt{N}} \sum_{i\in I_l} \frac{\partial^2 m(W_i,\check{\alpha}_i,\mu_0) }{\partial \alpha_i^2} (\tilde{\alpha}_{il}-\alpha_i)^2\\
&\leq C_m \frac{1}{\sqrt{N}} \sum_{i\in I_l} (\tilde{\alpha}_{il}-\alpha_i)^2\\
& = O_p(\sqrt{N}/T)=o_p(1)
\end{align*}

where the last line follows from assumption \ref{as:alpha} and $\lim \frac{\sqrt{N}}{T} =0$. 

To prove (\ref{eq:omega no alphahat}), we apply lemma \ref{lm:2square} to the term (I) and (II). For (I), we need to show that $\frac{1}{N} \sum_{i\in I_l} (\frac{\partial m(W_i,\alpha_i,\mu_0) }{\partial \alpha_i}- a(X_i,\alpha_i,\mu_0))^2(\tilde{\alpha}_{il}-\alpha_i)^2 =o_p(1)$:
\begin{align*}
& \frac{1}{N} \sum_{i\in I_l} (\frac{\partial m(W_i,\alpha_i,\mu_0) }{\partial \alpha_i}- a(X_i,\alpha_i,\mu_0))^2(\tilde{\alpha}_{il}-\alpha_i)^2\\
&\leq \sqrt{\frac{1}{N} \sum_{i\in I_l} (\frac{\partial m(W_i,\alpha_i,\mu_0) }{\partial \alpha_i}- a(X_i,\alpha_i,\mu_0))^4} \sqrt{\frac{1}{N} \sum_{i\in I_l}(\tilde{\alpha}_{il}-\alpha_i)^4 }\\
&=O_p(1) o_p(1)
\end{align*}
where the first term is $O_p(1)$ by condition (iii) of assumption \ref{as:mom}, and the second term is $o_p(1)$ by condition (i) of assumption \ref{as:alpha}. 

For (II), we need to show $1/N \sum_{i\in I_l} (\frac{\partial^2 m(W_i,\check{\alpha}_i,\mu_0) }{\partial \alpha_i^2})^2 (\tilde{\alpha}_{il}-\alpha_i)^4 =o_p(1)$. Using condition (i) of assumption \ref{as:mom}, we have
\begin{align*}
& \frac{1}{N} \sum_{i\in I_l} (\frac{\partial^2 m(W_i,\check{\alpha}_i,\mu_0) }{\partial \alpha_i^2})^2 (\tilde{\alpha}_{il}-\alpha_i)^4\\
&\leq C_m^2 \frac{1}{N} \sum_{i\in I_l} (\tilde{\alpha}_{il}-\alpha_i)^4\\
& = o_p(1)
\end{align*}
where the last line is by condition (i) of assumption \ref{as:alpha}.

Therefore, (\ref{eq:omega no alphahat}) holds.

\textbf{$\tilde{\alpha}_{il}^*$ based on Empirical Bayes ideas}

The key challenge for this case is that $\tilde{\alpha}_{il}^*$ is constructed based on all the data in fold $l$, not just the data for observation $i$. Therefore, one has to account for the potential correlation introduced by the shrinkage. Note that either  $\tilde{\alpha}_{il}^{EB}$ or $\tilde{\alpha}_{il}^{SURE}$ takes the form: $(1-s_{in}) \tilde{\alpha}_{il}+s_{in} \hat{\bar{\alpha}}_{l}$ where 
\[\hat{\bar{\alpha}}_l = \frac{1}{|I_l|} \sum_{i\in I_l}  \tilde{\alpha}_{il}=\bar{\bar{Y}}-\bar{\bar{X}}'\tilde{\beta}_l\]
for $\bar{\bar{Y}}=\frac{1}{(T-1) |I_l|} \sum_{i\in I_l} \sum_{t=1}^{T-1} Y_{it}$ and $\bar{\bar{X}}=\frac{1}{(T-1) |I_l|} \sum_{i\in I_l} \sum_{t=1}^{T-1} X_{it}$

 In particular, the numerator of $s_{in}$ is an estimator of $u_i^2= \mbox{Var} (\frac{1}{T-1} \sum_{t=1}^{T-1} u_{it})$ which converges to zero, while the denominator is strictly positive given condition (iii) of assumption \ref{as:alpha}. Therefore, $s_{in}\rightarrow 0$ and asymptotically, both $\tilde{\alpha}_{il}^{EB}$ and $\tilde{\alpha}_{il}^{SURE}$ will behave like $\tilde{\alpha}_{il}$. The strategy is then to show $\frac{1}{\sqrt{N}} \sum_{i\in I_l} (\frac{\partial m(W_i,\alpha_i,\mu_0) }{\partial \alpha_i}- a(X_i,\alpha_i,\mu_0))(\tilde{\alpha}_{il}-\tilde{\alpha}_{il}^*) =o_p(1)$ for $\tilde{\alpha}_{il}^*$ where $\tilde{\alpha}_{il}^*$ equals either $\tilde{\alpha}_{il}^{EB}$ or $\tilde{\alpha}_{il}^{SURE}$. Abbreviate $a(X_i,\alpha_i,\mu_0)$ as $a_i$ we have

\begin{align*}
& \frac{1}{\sqrt{N}} \sum_{i\in I_l} (\frac{\partial m(W_i,\alpha_i,\mu_0) }{\partial \alpha_i}- a_i)(\tilde{\alpha}_{il}-\tilde{\alpha}_{il}^*)\\
& = \frac{1}{\sqrt{N}} \sum_{i\in I_l} (\frac{\partial m(W_i,\alpha_i,\mu_0) }{\partial \alpha_i}- a_i) s_{in}(\tilde{\alpha}_{il}-\hat{\bar{\alpha}}_l)\\
&= \frac{1}{\sqrt{N}} \sum_{i\in I_l} s_{in} (\frac{\partial m(W_i,\alpha_i,\mu_0) }{\partial \alpha_i}- a_i)(\bar{\bar{Y}}_i-\bar{\bar{X}}_i'\tilde{\beta}_l-\bar{\bar{Y}}-\bar{\bar{X}}'\tilde{\beta}_l)\\
&\leq \sup_{i\in I_l} s_{in}  \frac{1}{\sqrt{N}} \sum_{i\in I_l}  |(\frac{\partial m(W_i,\alpha_i,\mu_0) }{\partial \alpha_i}- a_i)(\bar{\bar{Y}_i}-\bar{\bar{X}}_i'\tilde{\beta}_l-\bar{\bar{Y}}-\bar{\bar{X}}'\tilde{\beta}_l)|\\
& = \underbrace{\sup_{i\in I_l} s_{in}  \frac{1}{\sqrt{N}} \sum_{i\in I_l}  |(\frac{\partial m(W_i,\alpha_i,\mu_0) }{\partial \alpha_i}- a_i)(\alpha_i+\bar{\bar{u}}_i-\mathbb{E}(\alpha))|}_{I'}\\
&+\underbrace{\sup_{i\in I_l} s_{in}  \frac{1}{\sqrt{N}} \sum_{i\in I_l} |(\frac{\partial m(W_i,\alpha_i,\mu_0) }{\partial \alpha_i}- a_i)\bar{\bar{X}}_i' (\beta-\tilde{\beta}_l)|}_{II'}\\
&+\underbrace{ \sup_{i\in I_l} s_{in}  \bar{\bar{X}}' \frac{1}{\sqrt{N}} \sum_{i\in I_l} |(\frac{\partial m(W_i,\alpha_i,\mu_0) }{\partial \alpha_i}- a_i) (\beta-\tilde{\beta}_l)|}_{III'}\\
&+ \underbrace{\sup_{i\in I_l} s_{in}  \frac{1}{\sqrt{N}} \sum_{i\in I_l} |(\frac{\partial m(W_i,\alpha_i,\mu_0) }{\partial \alpha_i}- a_i) \bar{\bar{u}}|}_{IV'}\\
&+ \underbrace{\sup_{i\in I_l} s_{in}  \frac{1}{\sqrt{N}} \sum_{i\in I_l} |(\frac{\partial m(W_i,\alpha_i,\mu_0) }{\partial \alpha_i}- a_i) (\bar{\alpha}_l-\mathbb{E}(\alpha))|}_{V'}
\end{align*}
where the $\bar{\alpha}_l$ in the last term denotes $\frac{1}{|I_l|} \sum_{i\in I_l} \alpha_i$. 

We will show that each term is $o_p(1)$ given the condition that $\sup_{i\in I_l} s_{in}=o_p(N^{-1/2})$:
\begin{align*}
I'&=\sup_{i\in I_l} s_{in}  \frac{1}{\sqrt{N}} \sum_{i\in I_l}  |(\frac{\partial m(W_i,\alpha_i,\mu_0) }{\partial \alpha_i}- a(X_i))(\alpha_i+\bar{\bar{u}}_i-\mathbb{E}(\alpha))|\\
&=o_p(1) \frac{1}{N} \sum_{i\in I_l} |(\frac{\partial m(W_i,\alpha_i,\mu_0) }{\partial \alpha_i}- a(X_i))(\alpha_i+\bar{\bar{u}}_i-\mathbb{E}(\alpha))|\\
& = o_p(1),
\end{align*}
given $\mathbb{E}(|(\frac{\partial m(W_i,\alpha_i,\mu_0) }{\partial \alpha_i}- a(X_i))(\alpha_i+\bar{\bar{u}}_i-\mathbb{E}(\alpha))|)<\infty$ as it is a sum of iid random variables.

\begin{align*}
II'&= \sup_{i\in I_l} s_{in}  \frac{1}{\sqrt{N}} \sum_{i\in I_l} |(\frac{\partial m(W_i,\alpha_i,\mu_0) }{\partial \alpha_i}- a(X_i))\bar{\bar{X}}_i' (\beta-\tilde{\beta}_l)|\\
&= o_p(1) \|\beta-\tilde{\beta}_l\| \frac{1}{N} \sum_{i\in I_l} \|(\frac{\partial m(W_i,\alpha_i,\mu_0) }{\partial \alpha_i}- a(X_i))\bar{\bar{X}}_i'\|\\
&= o_p(1) o_p(\frac{1}{\sqrt{N}})
\end{align*}
given $\mathbb{E}(\|(\frac{\partial m(W_i,\alpha_i,\mu_0) }{\partial \alpha_i}- a(X_i))\bar{\bar{X}}_i'\|)<\infty$. (Notice that the term inside the norm has mean zero) 

\begin{align*}
III'&=\sup_{i\in I_l} s_{in}  \bar{\bar{X}}' \frac{1}{\sqrt{N}} \sum_{i\in I_l} |(\frac{\partial m(W_i,\alpha_i,\mu_0) }{\partial \alpha_i}- a(X_i)) (\beta-\tilde{\beta}_l)|\\
&=o_p(1) \|\beta-\tilde{\beta}_l\| \bar{\bar{X}}' \frac{1}{N} \sum_{i\in I_l} |(\frac{\partial m(W_i,\alpha_i,\mu_0) }{\partial \alpha_i}- a(X_i))|\\
&=o_p(1/\sqrt{NT})
\end{align*}
given $\mathbb{E}(\|(\frac{\partial m(W_i,\alpha_i,\mu_0) }{\partial \alpha_i}- a(X_i,\alpha_i,\mu_0))\|)<\infty$.

\begin{align*}
IV'&= \sup_{i\in I_l} s_{in}  \frac{1}{\sqrt{N}} \sum_{i\in I_l} |(\frac{\partial m(W_i,\alpha_i,\mu_0) }{\partial \alpha_i}- a(X_i)) \bar{\bar{u}}|\\
&= o_p(1) |\bar{\bar{u}}| \frac{1}{N} \sum_{i\in I_l} |(\frac{\partial m(W_i,\alpha_i,\mu_0) }{\partial \alpha_i}- a(X_i))|\\
&= o_p(1) o_p(1/\sqrt{NT}) O_p(1)=o_p(1)
\end{align*}

\begin{align*}
V'&= \sup_{i\in I_l} s_{in}  \frac{1}{\sqrt{N}} \sum_{i\in I_l} |(\frac{\partial m(W_i,\alpha_i,\mu_0) }{\partial \alpha_i}- a(X_i)) (\bar{\alpha}_l-\mathbb{E}(\alpha))|\\
&=o_p(1) | (\bar{\alpha}_l-\mathbb{E}(\alpha))|\frac{1}{N} \sum_{i\in I_l} |(\frac{\partial m(W_i,\alpha_i,\mu_0) }{\partial \alpha_i}- a(X_i))|\\
&= o_p(1) O_p(1)=o_p(1)
\end{align*}

Therefore, as long as $\sup_{i\in I_l} s_{in}=o_p(N^{-1/2})$, using $\tilde{\alpha}_{il}^*$ as an estimate of $\alpha_i$ does not affect the rate of convergence of $I$.

\textbf{Conditions that guarantee $\sup_{i\in I_l} s_{in}=o_p(N^{-1/2})$}

Recall that $s_{in}=\frac{\hat{u}_i^2}{\hat{u}_i^2+ \hat{\sigma}_{\alpha}^2}$. Under condition (iii) of assumption \ref{as:alpha}, as $\hat{\sigma}_{\alpha}^2\xrightarrow{p} \mbox{Var}(\alpha_i)>0$, the denominator will be strictly positive for $\hat{\sigma}_{\alpha}^2$ that equals either the Empirical Bayes version or the SURE version. Therefore, showing that $\sup_{i\in I_l} s_{in}=o_p(N^{-1/2})$ is equivalent to showing that $\sup_{i\in I_l} \hat{u}_i^2 =o_p(N^{-1/2})$

Suppose $\mathbb{E}(T \hat{u}_i^2)^{2+\delta}<\infty$ for some $\delta>0$, then using union bound and Markov inequality we will have
\begin{align*}
& \mathbb{P}(\sup_{i\in I_l} \hat{u}_i^2 >\epsilon)\\
& \leq |I_l| \mathbb{P}(T \hat{u}_i^2 >T \epsilon)\\
& \leq |I_l| \frac{\mathbb{E}[(T \hat{u}_i^2)^{2+\delta}]}{T^{2+\delta} \epsilon^{2+\delta}}
\end{align*}

The above display can be converted to the high probability statement by defining $\iota=|I_l| \frac{\mathbb{E}[(T \hat{u}_i^2)^{2+\delta}]}{T^{2+\delta} \epsilon^{2+\delta}}$: with probability at least $1-\iota$ :
\[\sup_{i\in I_l} \hat{u}_i^2\leq  \frac{(|I_l| \mathbb{E}[(T \hat{u}_i^2)^{2+\delta}] \iota^{-1})^{1/(2+\delta)}}{T}\]
which gives the result 
\[\sup_{i\in I_l} \hat{u}_i^2 =O_p(\frac{|I_l|^{1/(2+\delta)}}{T})\]

If $\frac{|I_l|^{1/(2+\delta)}}{T}=o(N^{-1/2})$, then we have $\sup_{i\in I_l} s_{in}=o_p(N^{-1/2})$. The condition holds if $T \propto N^{g}$ and $g>1/2+1/(2+\delta)$.

\end{proof}

\textbf{Proof of proposition \ref{prop:omega}}

The proposition is a simple consequence of lemma \ref{lm:no rrhat}, lemma \ref{lm:no betahat} and lemma \ref{lm:no alphahat}:

\begin{align*}
& \frac{1}{\sqrt{N}} \sum_{l=1}^L \sum_{i\in I_l} m(W_i,\tilde{\alpha}_{il},\mu_0)+\tilde{a}_{il} (Y_{iT}-X_{iT}' \tilde{\beta}_l -\tilde{\alpha}_{il})\\
& = \frac{1}{\sqrt{N}} \sum_{l=1}^L \sum_{i\in I_l} m(W_i,\tilde{\alpha}_{il},\mu_0)+ a(X_i,\alpha_i,\mu_0) (Y_{iT}-X_{iT}' \tilde{\beta}_l -\tilde{\alpha}_{il}) +o_p(1)\\
&=\frac{1}{\sqrt{N}} \sum_{l=1}^L \sum_{i\in I_l} m(W_i,\tilde{\alpha}_{il},\mu_0)+ a(X_i,\alpha_i,\mu_0) (Y_{iT}-X_{iT}' \beta -\tilde{\alpha}_{il}) +o_p(1)\\
&= \frac{1}{\sqrt{N}} \sum_{i=1}^N m(W_i,\alpha_i,\mu_0)+ a(X_i,\alpha_i,\mu_0)(Y_{iT}-X_{iT}'\beta-\alpha_i)
\end{align*},

where the first equality is lemma \ref{lm:no rrhat}, the second equality is lemma \ref{lm:no betahat} and the last one is lemma \ref{lm:no alphahat}.

\hfill \qedsymbol

\textbf{Proof of proposition \ref{prop:omegahat}}

Recall the definition of $\Omega$:
\[\Omega\equiv \mathbb{E}(m^*(W_i,\alpha_i,Z_i,\beta_0,\mu_0,a) m^*(W_i,\alpha_i,Z_i,\beta_0,\mu_0,a)')\]
which implies that an infeasible consistent estimator is by law of large numbers:
\[\frac{1}{N} \sum_{i=1}^N m^*(W_i,\alpha_i,Z_i,\beta_0,\mu_0,a) m^*(W_i,\alpha_i,Z_i,\beta_0,\mu_0,a)'\]

We will show that 
\begin{align*}
& \frac{1}{N} \sum_{l=1}^L \sum_{i\in I_l} (m(W_i,\tilde{\alpha}_{il},\tilde{\mu}_l)+\hat{\psi}_{il})(m(W_i,\tilde{\alpha}_{il},\tilde{\beta}_l,\tilde{\mu}_l)+\hat{\psi}_{il})'\\
& = \frac{1}{N} \sum_{i=1}^N m^*(W_i,\alpha_i,Z_i,\beta_0,\mu_0,a) m^*(W_i,\alpha_i,Z_i,\beta_0,\mu_0,a)' +o_p(1)
\end{align*}
which will imply (\ref{eq:con omega}).

Using an argument in Lemma E1 of \cite{cher2022}, it suffices to show that 
\begin{equation}
\frac{1}{N} \sum_{i=1}^N \|m(W_i,\tilde{\alpha}_{il},\tilde{\mu}_l)+\hat{\psi}_{il}-m^*(W_i,\alpha_i,Z_i,\beta_0,\mu_0,a)\|^2 =o_p(1)
\label{eq:omega diff}
\end{equation}
as it implies
\begin{align*}
& \|\frac{1}{N} \sum_{l=1}^L \sum_{i\in I_l} (m(W_i,\tilde{\alpha}_{il},\tilde{\mu}_l)+\hat{\psi}_{il})(m(W_i,\tilde{\alpha}_{il},\tilde{\mu}_l)+\hat{\psi}_{il})'\\
&-\frac{1}{N} \sum_{i=1}^N m^*(W_i,\alpha_i,Z_i,\beta_0,\mu_0,a) m^*(W_i,\alpha_i,Z_i,\beta_0,\mu_0,a)'\|\\
&\leq \sum_{l=1}^L 1/N \sum_{i\in I_l} (\|m(W_i,\tilde{\alpha}_{il},\tilde{\mu}_l)+\hat{\psi}_{il}-m^*(W_i,\alpha_i,Z_i,\beta_0,\mu_0,a)\|^2 \\
& +2 \|m^*(W_i,\alpha_i,Z_i,\beta_0,\mu_0,a)\| \|m(W_i,\tilde{\alpha}_{il},\tilde{\mu}_l)+\hat{\psi}_{il}-m^*(W_i,\alpha_i,Z_i,\beta_0,\mu_0,a)\| ) \\
&\leq o_p(1) + 2 \sum_{l=1}^L \sqrt{\frac{1}{N} \sum_{i\in I_l}\|m(W_i,\tilde{\alpha}_{il},\tilde{\mu}_l)+\hat{\psi}_{il}-m^*(W_i,\alpha_i,Z_i,\beta_0,\mu_0,a)\|^2 }\\
& \sqrt{\frac{1}{N} \sum_{i\in I_l} \|m^*(W_i,\alpha_i,Z_i,\beta_0,\mu_0,a)\|^2}\\
& =o_p(1)(1 +O_p(1))=o_p(1) 
\end{align*}

Therefore, the task is to show (\ref{eq:omega diff}). In particular, we have the following decomposition:
\begin{align*}
& \frac{1}{N} \sum_{l=1}^L \sum_{i\in I_l} m(W_i,\tilde{\alpha}_{il},\tilde{\mu}_l)+\hat{\psi}_{il}-m^*(W_i,\alpha_i,Z_i,\beta_0,\mu_0,a)\\
& = \underbrace{\frac{1}{N} \sum_{l=1}^L \sum_{i\in I_l}  m(W_i,\tilde{\alpha}_{il},\tilde{\mu}_l)-m(W_i,\tilde{\alpha}_{il},\mu_0)}_{I}\\
&+ \underbrace{\frac{1}{N} \sum_{l=1}^L \sum_{i\in I_l}  (\tilde{a}_{il}-a(X_i,\alpha_i,\mu_0))(Y_{iT}-X_{iT}'\tilde{\beta}_l-\tilde{\alpha}_{il})}_{II}\\
& + \underbrace{\frac{1}{N} \sum_{l=1}^L \sum_{i\in I_l} a(X_i,\alpha_i,\mu_0) X_{iT}'(\tilde{\beta}_l-\beta)}_{III}\\
&+ \underbrace{\frac{1}{N} \sum_{l=1}^L \sum_{i\in I_l} m(W_i,\tilde{\alpha}_{il},\mu_0)-m(W_i,\alpha_i,\mu_0) + a(X_i,\alpha_i,\mu_0)(\alpha_i-\tilde{\alpha}_{il})}_{IV}
\end{align*}

We then apply lemma \ref{lm:2square} to the above decomposition, so that we need to show for the four terms if we replace the term inside the summation sign with its square, it is still $o_p(1)$.

(\ref{eq:omega no rrhat}) of lemma \ref{lm:no rrhat}  handles (II), (\ref{eq:omega no betahat} of lemma \ref{lm:no betahat} handles (III) and (\ref{eq:omega no alphahat}) of lemma \ref{lm:no alphahat} handles (IV). Therefore, it suffices to show that 
\[\frac{1}{N}  \sum_{i\in I_l}  [m(W_i,\tilde{\alpha}_{il},\tilde{\mu}_l)-m(W_i,\tilde{\alpha}_{il},\mu_0)]\\^2 =o_p(1)\]

We will use lemma \ref{lm:2square} again using the following decomposition:

\begin{align*}
& \frac{1}{N}  \sum_{i\in I_l}  [m(W_i,\tilde{\alpha}_{il},\tilde{\mu}_l)-m(W_i,\tilde{\alpha}_{il},\mu_0)]\\
&=\frac{1}{N} \sum_{i\in I_l} \frac{\partial m(W_i,\tilde{\alpha}_{il},\check{\mu}_l)}{\partial \mu'} (\tilde{\mu}_l-\mu_0)\\
&= \frac{1}{N} \sum_{i\in I_l} \frac{\partial^2 m(W_i,\alpha_i,\check{\mu}_l)}{\partial \alpha_i \partial \mu'} (\tilde{\alpha}_{il}-\alpha_i) (\tilde{\mu}_l-\mu_0)\\
&+\frac{1}{N} \sum_{i\in I_l} \frac{\partial^3 m(W_i,\check{\alpha}_i,\check{\mu}_l)}{\partial \alpha_i^2 \partial \mu'} (\tilde{\alpha}_{il}-\alpha_i)^2 (\tilde{\mu}_l-\mu_0)
\end{align*}
where $\check{\mu}_l$ is some vector that lies on the line segment between $\tilde{\mu}_l$ and $\mu_0$, and $\check{\alpha}_i$ is some scalar lying between $\tilde{\alpha}_{il}$ and $\alpha_i$.

Apply lemma \ref{lm:2square} to the above decomposition, we have
\begin{align*}
& \frac{1}{N} \sum_{i\in I_l} [\frac{\partial^2 m(W_i,\alpha_i,\check{\mu}_l)}{\partial \alpha_i \partial \mu'} (\tilde{\alpha}_{il}-\alpha_i) (\tilde{\mu}_l-\mu_0)]^2\\
&\leq \|\tilde{\mu}_l-\mu_0\|^2 \frac{1}{N} \sum_{i\in I_l} \|\frac{\partial^2 m(W_i,\alpha_i,\check{\mu}_l)}{\partial \alpha_i \partial \mu'}\|^2 (\tilde{\alpha}_{il}-\alpha_i)^2\\
&\leq \|\tilde{\mu}_l-\mu_0\|^2 \sqrt{\frac{1}{N} \sum_{i\in I_l} \|\frac{\partial^2 m(W_i,\alpha_i,\check{\mu}_l)}{\partial \alpha_i \partial \mu'}\|^4} \sqrt{\frac{1}{N} \sum_{i\in I_l} (\tilde{\alpha}_{il}-\alpha_i)^4 }\\
&=o_p(1) O_p(1) o_p(1)=o_p(1)
\end{align*}
by condition (v) of assumption \ref{as:mom omega} and condition (i) of assumption \ref{as:alpha}.

\begin{align*}
& \frac{1}{N} \sum_{i\in I_l} [\frac{\partial^3 m(W_i,\check{\alpha}_i,\check{\mu}_l)}{\partial \alpha_i^2 \partial \mu'} (\tilde{\alpha}_{il}-\alpha_i)^2 (\tilde{\mu}_l-\mu_0)]^2\\
&\leq \|\tilde{\mu}_l-\mu_0\|^2  C_m^2 \frac{1}{N} \sum_{i\in I_l} (\tilde{\alpha}_{il}-\alpha_i)^4\\
&=o_p(1)
\end{align*}
by condition (i) of assumption \ref{as:alpha} and condition (iv) of assumption \ref{as:mom omega}.

Therefore,
\[1/N  \sum_{i\in I_l}  [m(W_i,\tilde{\alpha}_{il},\tilde{\mu}_l)-m(W_i,\tilde{\alpha}_{il},\mu_0)]\\^2 =o_p(1)\]
 and the proposition is proved. 

\hfill \qedsymbol

\textbf{Proof of proposition \ref{prop:G}}

We first show that the difference in using $\alpha_i$ and using $\tilde{\alpha}_{il}$ in constructing $G$ is asymptotically negligible:
\[\frac{1}{N} \sum_{l=1}^L \sum_{i\in I_l}\frac{\partial m(W_i,\tilde{\alpha}_{il},\hat{\mu})}{\partial \mu'}-\frac{1}{N} \sum_{i=1}^N \frac{\partial m(W_i,\alpha_i,\hat{\mu})}{\partial \mu'}=o_p(1)\]

\begin{align*}
& \|\frac{1}{N} \sum_{l=1}^L \sum_{i\in I_l}\frac{\partial m(W_i,\tilde{\alpha}_{il},\hat{\mu})}{\partial \mu'}-\frac{1}{N} \sum_{i=1}^N \frac{\partial m(W_i,\alpha_i,\hat{\mu})}{\partial \mu'}\|\\
&=\|\frac{1}{N} \sum_{l=1}^L \sum_{i\in I_l} \frac{\partial^2 m(W_i,\check{\alpha}_i,\hat{\mu})}{\partial \alpha_i \partial \mu'} (\tilde{\alpha}_{il}-\alpha_i)\|\\
&\leq \sum_{l=1}^L \sqrt{\frac{1}{N} \sum_{i\in I_l} \|\frac{\partial^2 m(W_i,\check{\alpha}_i,\hat{\mu})}{\partial \alpha_i \partial \mu'}\|^2 } \sqrt{\frac{1}{N} \sum_{i\in I_l} (\tilde{\alpha}_{il}-\alpha_i)^2 }\\
&= O_p(1) o_p(1)=o_p(1)
\end{align*}
where $\check{\alpha}_i$ is some scalar lying between $\tilde{\alpha}_{il}$ and $\alpha_i$. Assumption \ref{as:G} ensures that the first term under the square root is bounded in probability and assumption \ref{as:alpha} ensures that the second square root term is $o_p(1)$.

Next we show that
\[1/N \sum_{i=1}^N \frac{\partial m(W_i,\alpha_i,\hat{\mu})}{\partial \mu}-1/N \sum_{i=1}^N \frac{\partial m(W_i,\alpha_i,\mu_0)}{\partial \mu}=o_p(1)\]

Note that applying standard law of large number yields $\frac{1}{N} \sum_{i=1}^N \frac{\partial m(W_i,\alpha_i,\mu_0)}{\partial \mu}\xrightarrow{p} G$. Hence, showing the above display is sufficient to conclude our proof. 

\begin{align*}
& \| \frac{1}{N} \sum_{i=1}^N  \frac{\partial m(W_i,\alpha_i,\hat{\mu})}{\partial \mu'}-\frac{1}{N} \sum_{i=1}^N \frac{\partial m(W_i,\alpha_i,\mu_0)}{\partial \mu'}\|\\
&\leq \frac{1}{N} \sum_{i=1}^N F_{\alpha}(W_i,\alpha_i) \|\hat{\mu}-\mu_0\|^{1/C}\\
&=O_p(1) o_p(1)=o_p(1)
\end{align*}
where the first average is $O_p(1)$ by assumption \ref{as:G} and the second term is $o_p(1)$ by the consistency of $\hat{\mu}$.

\hfill \qedsymbol

\textbf{Proof of theorem \ref{thm1}}

The theorem is proved by doing a Taylor expansion on the objective function as in standard GMM settings. Recall that 
\[\hat{\mu} = \arg_{\mu} \min \hat{m}^*(\mu)' \Upsilon_N \hat{m}^*(\mu)\]
so it satisfies the approximate first order condition:
\[\frac{\partial \hat{m}^*(\hat{\mu})}{\partial \mu'}' \Upsilon_N \hat{m}^*(\hat{\mu})=o_p(\frac{1}{\sqrt{N}}) \]

Now a Taylor expansion on $\hat{m}^*(\hat{\mu})$ yields that for some $\check{\mu}$ that lies on the line segment between $\mu_0$ and $\hat{\mu}$, we have

\[\frac{\partial \hat{m}^*(\hat{\mu})}{\partial \mu'}' \Upsilon_N  (\hat{m}^*(\mu_0)+ \frac{\partial \hat{m}^*(\check{\mu})}{\partial \mu'} (\hat{\mu}-\mu_0))=o_p(\frac{1}{\sqrt{N}})\]

Multiplying $\sqrt{N}$ on both sides, and rearranging, we have
\begin{align*}
& \sqrt{N}(\hat{\mu}-\mu_0)\\
& =(\frac{\partial \hat{m}^*(\hat{\mu})}{\partial \mu'}' \Upsilon_N \frac{\partial \hat{m}^*(\check{\mu})}{\partial \mu'})^{-1}  \frac{\partial \hat{m}^*(\hat{\mu})}{\partial \mu'}' \Upsilon_N \sqrt{N} \hat{m}^*(\mu_0) +o_p(1)
\end{align*}

Proposition \ref{prop:omega} establishes that 
\[\sqrt{N} \hat{m}^*(\mu_0) \xrightarrow{d} \mathcal{N}(0,\Omega) \]

Notice that $\hat{m}^*(\mu)$ depends on $\mu$ only through the sums of the original moment function $m(W_i,\tilde{\alpha}_{il},\mu)$, so
\begin{align*}
& \frac{\partial \hat{m}^*(\hat{\mu})}{\partial \mu'}\\
&= \frac{1}{N} \sum_{l=1}^L \sum_{i\in I_l}\frac{\partial m(W_i,\tilde{\alpha}_{il},\hat{\mu})}{\partial \mu'}\\
&\xrightarrow{p} G= \mathbb{E}(\frac{\partial m(W_i,\alpha_i,\mu_0)}{\partial \mu}) 
\end{align*}
by proposition \ref{prop:G}. As $\check{\mu}$ lies between $\hat{\mu}$ and $\mu_0$, $\check{\mu}$ is also consistent for $\mu_0$. Applying proposition \ref{prop:G} again gives the same convergence result with $\hat{\mu}$ replace by $\check{\mu}$. 

By assumption that $\Upsilon_N\xrightarrow{p} \Upsilon$, applying continuous mapping theorem, we get the asymptotic distribution in (\ref{eq:mu asym}).

The consistency of $\hat{V}$ follows from proposition \ref{prop:omegahat} and proposition \ref{prop:G}

\hfill \qedsymbol

\subsection{Consistency of Estimators}

\begin{lemma}
Suppose assumption \ref{as:rr}, assumption \ref{as:alpha} and assumption \ref{as:mom} hold. In addition,

\begin{enumerate}
	\item $\mu_0$ is the unique solution to the original moment condition $\mathbb{E}(m(W_i,\alpha_i,\mu_0))=0$
	\item The parameter space $\Theta$ for $\mu$ is compact
	\item $\Upsilon_N\xrightarrow{p} \Upsilon$ for a positive definite matrix $\Upsilon$.
	\item $\sup_{\alpha} \|\frac{\partial m(W_i,\alpha_i,\mu)}{\partial \alpha_i }\|^2\leq F(W_i)$ for all $\mu\in \Theta$ and $\mathbb{E}(F(W_i))<\infty$.
	\item $\mathbb{E}(\sup_{\mu\in \Theta}\|m(W_i,\alpha_i,\mu)\|)<\infty $
\end{enumerate}

then $\hat{\mu}\xrightarrow{p} \mu_0$.
\end{lemma}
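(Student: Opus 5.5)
The plan is the standard GMM consistency argument (Newey--McFadden Theorem 2.1). The population criterion is $Q(\mu)=\mathbb{E}[m(W_i,\alpha_i,\mu)]'\Upsilon\,\mathbb{E}[m(W_i,\alpha_i,\mu)]$, which is uniquely minimized at $\mu_0$ by conditions 1 and 3. The sample criterion is $\hat Q_N(\mu)=\hat m^*(\mu)'\Upsilon_N\hat m^*(\mu)$. Given compactness of $\Theta$ (condition 2), it suffices to show that $Q$ is continuous and that $\sup_{\mu\in\Theta}\|\hat m^*(\mu)-\mathbb{E}[m(W_i,\alpha_i,\mu)]\|=o_p(1)$. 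Uniform convergence of $\hat Q_N$ then follows from $\Upsilon_N\xrightarrow{p}\Upsilon$ and boundedness of $\mathbb{E}[m]$ on $\Theta$, the latter guaranteed by condition 5.

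To get the uniform convergence, split $\hat m^*(\mu)-\mathbb{E}[m(W_i,\alpha_i,\mu)]$ into three pieces.

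\textbf{(a) Infeasible average.} The term $\frac1N\sum_i m(W_i,\alpha_i,\mu)-\mathbb{E}[m(W_i,\alpha_i,\mu)]$ is handled by a uniform law of large numbers over compact $\Theta$, using the iid cross-section and the dominance condition 5. Continuity of $m$ in $\mu$, which gives continuity of $Q$ by dominated convergence, is implicitly required here; I would assume it or read it off the smoothness already imposed in Assumption \ref{as:mom omega}(iv).

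\textbf{(b) Plug-in error.} The term $\frac1N\sum_l\sum_{i\in I_l}[m(W_i,\tilde\alpha_{il},\mu)-m(W_i,\alpha_i,\mu)]$ is handled by a mean-value expansion in $\alpha_i$ followed by Cauchy--Schwarz:
\[
\sup_{\mu}\Big\|\tfrac1N\textstyle\sum_i(\cdot)\Big\|\le \sum_{l}\Big(\tfrac1N\textstyle\sum_{i\in I_l}F(W_i)\Big)^{1/2}\Big(\tfrac1N\textstyle\sum_{i\in I_l}(\tilde\alpha_{il}-\alpha_i)^2\Big)^{1/2}.
\]
The first factor is $O_p(1)$ by condition 4 and the LLN. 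The second is $O_p(T^{-1/2})$ by Assumption \ref{as:alpha}(i), which covers both the plain and the EB/SURE estimators. The bound in condition 4 holds uniformly in $\alpha$ and $\mu$, so the evaluation point on the segment and the supremum over $\Theta$ cause no difficulty.

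\textbf{(c) Adjustment term.} $\hat\psi$ does not depend on $\mu$, so uniformity is automatic and I only need $\hat\psi=o_p(1)$. I would use Lemma \ref{lm:no rrhat}, which swaps $\tilde a_{il}$ for $a$ at cost $o_p(N^{-1/2})$, and Lemma \ref{lm:no betahat}, which swaps $\tilde\beta_l$ for $\beta_0$. This leaves $\frac1N\sum_i a(X_i,\alpha_i,\mu_0)(u_{iT}+\alpha_i-\tilde\alpha_{il})$. The piece with $u_{iT}$ is a mean-zero iid average by sequential exogeneity, with second moment bounded via Assumption \ref{as:mom}(i), so it is $o_p(1)$. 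The piece with $\alpha_i-\tilde\alpha_{il}$ is at most $(\frac1N\sum a^2)^{1/2}O_p(T^{-1/2})$ by Cauchy--Schwarz.

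The main obstacle is the circularity in (c): $\tilde a_{il}$ targets $a(\cdot,\mu_0)$ via $\tilde\mu_l$, while Assumption \ref{as:mom omega}(iii) presupposes consistency of $\tilde\mu_l$. I would first apply the argument to $\tilde\mu_l$ itself. It solves the unorthogonalized moments with $\tilde\alpha_{ill'}$, so only pieces (a) and (b) are needed and (c) does not arise. That gives $\tilde\mu_l\xrightarrow{p}\mu_0$, after which Assumption \ref{as:rr}(ii),(iv) legitimately control $\tilde a_{il}$. A secondary gap is $\mathbb{E}[a^2]<\infty$, which I would obtain from condition 4 by Jensen: $a^2\le\mathbb{E}[F(W_i)\mid X_i,\alpha_i]$. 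Combining (a)--(c) with the identification and compactness conditions then yields $\hat\mu\xrightarrow{p}\mu_0$.
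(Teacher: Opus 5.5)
Your proposal is correct and follows essentially the same route as the paper. It first proves consistency of the preliminary $\tilde\mu_l$ from a law of large numbers on the infeasible average plus a mean-value/Cauchy--Schwarz bound on the plug-in error via condition 4. It then shows the $\mu$-free adjustment term $\hat\psi$ is $o_p(1)$ by swapping $\tilde a_{il}\to a$ and $\tilde\beta_l\to\beta_0$ through Lemmas~\ref{lm:no rrhat} and~\ref{lm:no betahat}, leaving a mean-zero iid average plus a Cauchy--Schwarz term.

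Your side remarks tighten points the paper glosses over. Continuity of $m$ in $\mu$ is implicitly needed for the uniform LLN. $\mathbb{E}[a^2]<\infty$ follows from condition 4 by conditional Jensen. And $\tilde\mu_l$ is built from $\tilde\alpha_{ill'}$ rather than $\tilde\alpha_{il}$.
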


\begin{proof}
The proof uses Theorem 2.6 in \cite{neweymcfadden}. We need to verify that the limit moment function $\mathbb{E}(m(W_i,\alpha_i,\mu))$ is continuous in $\mu$ and that the convergence is uniform in $\mu\in \Theta$.

We first show consistency of $\tilde{\mu}_l$ which does not involve the correction term:

Pointwise convergence follows from condition (iv):
\begin{align*}
& \frac{1}{N} \sum_{l=1}^L \sum_{i\in I_l} m(W_i,\tilde{\alpha}_{il},\mu)\\
& = \frac{1}{N} \sum_{l=1}^L \sum_{i\in I_l} m(W_i,\alpha_i,\mu) + \frac{1}{N} \sum_{l=1}^L \sum_{i\in I_l} \frac{\partial m(W_i,\check{\alpha}_i,\mu)}{\partial \alpha_i } (\tilde{\alpha}_{il}-\alpha_i)
\end{align*}
where $\check{\alpha}_i$ lies between $\alpha_i$ and $\tilde{\alpha}_{il}$. 

The last summation is $o_p(1)$ as
\begin{align*}
& \sup_{\mu\in \Theta}|\frac{1}{N}  \sum_{i\in I_l} \frac{\partial m(W_i,\check{\alpha}_i,\mu)}{\partial \alpha_i } (\tilde{\alpha}_{il}-\alpha_i)|\\
&\leq \sup_{\mu\in \Theta} \sqrt{\frac{1}{N}  \sum_{i\in I_l} \frac{\partial m(W_i,\check{\alpha}_i,\mu)}{\partial \alpha_i }^2 } \sqrt{\frac{1}{N}  \sum_{i\in I_l} (\tilde{\alpha}_{il}-\alpha_i)^2}\\
&\leq \sqrt{\frac{1}{N}  \sum_{i\in I_l} F(W_i) } o_p(1)\\
&=o_p(1)
\end{align*}

Standard law of large number applied to the first summation gives pointwise convergence. Condition (v) then guarantees the continuity of the limit function and uniform convergence. This establishes the consistency of $\tilde{\mu}_l$. 

To conclude consistency when the correction term $\tilde{a}_{il}(Y_{iT}-X_{iT}'\tilde{\beta}_l-\tilde{\alpha}_{il})\\$ is added, we note that the correction term does not depend on $\mu$. In particular, lemma \ref{lm:no rrhat}, \ref{lm:no betahat} implies that
\begin{align*}
& \frac{1}{N} \sum_{l=1}^L \sum_{i\in I_l} \tilde{a}_{il}(Y_{iT}-X_{iT}'\tilde{\beta}_l-\tilde{\alpha}_{il})\\
& = \frac{1}{N} \sum_{l=1}^L \sum_{i\in I_l} a(X_i,\alpha_i,\mu_0)(Y_{iT}-X_{iT}'\beta_0-\tilde{\alpha}_{il}) +o_p(1)\\
& = \frac{1}{N} \sum_{i=1}^N a(X_i,\alpha_i,\mu_0)(Y_{iT}-X_{iT}'\beta_0-\alpha_i)+\frac{1}{N}\sum_{l=1}^L \sum_{i\in I_l} a(X_i,\alpha_i,\mu_0) (\alpha_i-\tilde{\alpha}_{il})+o_p(1)\\
&= \frac{1}{N} \sum_{i=1}^N a(X_i,\alpha_i,\mu_0)(Y_{iT}-X_{iT}'\beta_0-\alpha_i)+ \sum_{l=1}^L \sqrt{\frac{1}{N} \sum_{i\in I_l} a(X_i,\alpha_i,\mu_0)^2} \sqrt{\frac{1}{N} \sum_{i\in I_l} (\alpha_i-\tilde{\alpha}_{il})^2} +o_p(1)\\
&=o_p(1)
\end{align*} 
The first summation is summing over iid mean zero variables by the orthogonal moment condition, so it will be $o_p(1)$. For the second term, the first term in the square root is $O_p(1)$ by the assumption that $a(X_i,\alpha_i,\mu_0)\in L_2(\mathbb{P}_{X,\alpha})$, and the second term under the square root is $o_p(1)$ by assumption \ref{as:alpha}. Therefore, we conclude that 

\begin{align*}
&\sup_{\mu\in \Theta}|\frac{1}{N} \sum_{l=1}^L \sum_{i\in I_l} m(W_i,\tilde{\alpha}_{il},\mu) - \hat{m}^*(\mu)|\\
& = |\frac{1}{N} \sum_{l=1}^L \sum_{i\in I_l} \tilde{a}_{il}(Y_{iT}-X_{iT}'\tilde{\beta}_l-\tilde{\alpha}_{il})|\\
&=o_p(1)
\end{align*}

In other words, $\hat{m}^*(\mu)$ has the limit uniformly in $\mu$ as $\mathbb{E}(m(W_i,\alpha_i,\mu))$ which is continuous.

\end{proof}

\section{Convergence rates for Elastic Net}

I will provide conditions that guarantee the convergence rate of $\tilde{a}_{il}$ required in assumption \ref{as:rr}. The techniques are adapted from \cite{AEN}, which proposes the adaptive elastic net estimator. Adaptive Elastic Net is expected to perform well in our setting because it has the oracle property of variable selection established in \cite{AEN} but also avoids problems faced by lasso when the covariates are highly correlated \citep{EN}. In our setup, our covariates are functions $b(X_i,\tilde{\alpha}_{ill'})$ where $X_i=(X_{i1}',\cdots,X_{iT}')'$ stacks the variables over all time periods. When the serial correlation among the covariates are strong, the performance of adaptive lasso could be undesirable. 

We are interested estimating the following conditional expectation:
\[\mathbb{E}(\frac{\partial m(W_i,\alpha_i,\mu_0)}{\partial \alpha_i} | X_i,\alpha_i)\]

We will assume that the conditional expectation takes the linear-form and the dimension of $b(X_i,\alpha_i)$ grows at the same rate as $T$ \footnote{The dimension of $b(X_i,\alpha_i)$ should not grow too fast with $N$. We can allow the transformation of $(X_i,\alpha_i)$ to be of larger order than $T$ by potentially imposing a stronger sparsity assumption on the non-zero elements of $\pi_0$. For simplicity, this extension is omitted.}
\begin{equation}
\frac{\partial m(W_i,\alpha_i,\mu_0)}{\partial \alpha_i} = b(X_i,\alpha_i)'\pi_0 + \xi_i, \quad \mathbb{E}(\xi_i|X_i)=0
\label{eq:EN}
\end{equation}

The above problem is about high-dimensional regression with the dimension of the parameter $\pi_0$ growing with $N$ at the rate $T\propto N^g$.\footnote{For dynamic panel data problems, if $Y_{i0}$ is not observed, then one has to drop one time period and work with $X_i$ that has dimension proportional to $T-1$. We will ignore this difference in the following discussion.} As standard in the literature, to learn the parameter $\pi_0$ well, I will impose sparsity assumptions on $\pi_0$ and introduce both $l_1$ and $l_2$ regularization to exploit the sparsity structure. The main challenge in my set-up is that the dependent variable $\frac{\partial m(W_i,\alpha_i,\mu_0)}{\partial \alpha_i}$ and $\alpha_i$ is not observed, and we only have an estimate of it constructed from cross-validation (See section \ref{procedure}): $\frac{\partial m(W_i,\tilde{\alpha}_{ill'},\tilde{\mu}_{l})}{\partial \alpha_i}$ and $\tilde{\alpha}_{ill'}$. Here we are learning (\ref{eq:EN}) only using samples not in fold $l$ and for $i\in l'$ estimate $\tilde{\alpha}_{ill'}$ of $\alpha_i$ is constructed from observation $i$ and other observations not in  fold $l$ or $l'$. Similarly, estimate $\tilde{\mu}_{l}$ of $\mu_0$ is constructed using only data not in fold $l$.

With some abuse of notation, in this section I will use $y_i^{EN}$ to denote $\frac{\partial m(W_i,\alpha_i,\mu_0)}{\partial \alpha_i}$ and $\tilde{y}_i^{EN}$ to denote the estimate $\frac{\partial m(W_i,\tilde{\alpha}_{ill'},\tilde{\mu}_{l})}{\partial \alpha_i}$ for individual $i$. Similarly, $B_i$ will be used to denote $b(X_i,\alpha_i)$, while $\tilde{B}_i$ will be used to denote $b(X_i,\tilde{\alpha}_{ill'})$. Our goal will be to establish condition (ii) and (iv) of assumption \ref{as:rr}. By the remark following assumption \ref{as:rr}, it suffices to establish (ii):  $\mathbb{E}((\tilde{a}_{il}-a(X_i,\alpha_i,\mu_0))^2|\mathcal{W}_l^C)=o_p(N^{-\zeta})$ which will imply (iv): $\frac{1}{|I_l|} \sum_{i\in I_l} [(\tilde{a}_{il}-a(X_i,\alpha_i,\mu_0))^2=o_p(N^{-\zeta})$.

By the assumption in (\ref{eq:EN}), we will have $a(X_i,\alpha_i,\mu_0)=B_i' \pi_0$ and $\tilde{a}_{il} = \tilde{B}_i' \hat{\pi}_l$ for a general estimator $\hat{\pi}_l$ that uses data not in fold $l$. Hence,
\begin{align*}
& \mathbb{E}((\tilde{a}_{il}-a(X_i,\alpha_i,\mu_0))^2|\mathcal{W}_l^C)\\
& = \mathbb{E}[((B_i-\tilde{B}_i)'\pi_0+\tilde{B}_i'(\pi_0-\hat{\pi}_l))^2|\mathcal{W}_l^C]\\
&\leq  2\mathbb{E}[((B_i-\tilde{B}_i)'\pi_0)^2 |\mathcal{W}_l^C] +  2 \mathbb{E}[(\tilde{B}_i'(\pi_0-\hat{\pi}_l))^2)|\mathcal{W}_l^C]\\
& \leq 2 \pi_0'  \mathbb{E}[(B_i-\tilde{B}_i)(B_i-\tilde{B}_i)'|\mathcal{W}_l^C] \pi_0  + 2 (\pi_0-\hat{\pi}_l)'\mathbb{E}(\tilde{B}_i\tilde{B}_i')(\pi_0-\hat{\pi}_l),
\end{align*}
where the last line uses independence for samples in fold $l$ and samples not in $l$. If the maximum eigenvalue of the matrix $\mathbb{E}(\tilde{B}_i\tilde{B}_i')$ is bounded, and $\pi_0'  \mathbb{E}[(B_i-\tilde{B}_i)(B_i-\tilde{B}_i)'|\mathcal{W}_l^C] \pi_0=O_p(T^{-1})$,  the problem can then be reduced to establishing $\|\pi_0-\hat{\pi}_l\|^2 = o_p(N^{-\zeta})$ for some $1-g\leq \zeta<1/2$. When there are only finitely many number of elements in $b(X_i,\alpha_i)$ that depends on $\alpha_i$ and $b(X_i,\alpha_i)$ is a smooth function of $\alpha_i$, $\pi_0'  \mathbb{E}[(B_i-\tilde{B}_i)(B_i-\tilde{B}_i)'|\mathcal{W}_l^C] \pi_0 $ will be $O(T^{-1})$ under suitable regularity conditions on $(X_{it},u_{it})$'s. 
In the following discussion, I will drop the subscript $l$ and abstract away from the use of cross-fitting. I will point out derivations that relies on cross-fitting features. For example, the number of individuals that are used to estimate (\ref{eq:EN}) will be the individuals not in fold $l$ which equals $N\frac{L-1}{L}$, for simplicity, I will just use $N$ instead. 

Now I will introduce the Elastic Net (EN) estimator for $\pi_0$, which is originally proposed by \cite{EN}. 
\begin{equation}
\hat{\pi}(EN) = (1+\lambda_2/N) \{\arg\min_{\pi} \|\tilde{\mathbf{y}}^{EN}-\tilde{\mathbf{B}}\pi\|^2+\lambda_2 \|\pi\|_2^2+\lambda_1 \|\pi\|_1\}
\label{eq:EN def}
\end{equation}
where $\tilde{\mathbf{y}}^{EN}$ stacks $\tilde{y}_i^{EN}$ into an $N\times 1$ vector and $\tilde{\mathbf{B}}$ stacks $\tilde{B}_i'$ into an over $N$. 

Ignoring the factor $(1+\lambda_2/N)$ then $\hat{\pi}(EN)$ equals the lasso estimator if $\lambda_2=0$. When $\lambda_2>0$, the introduction of $l_2$ regularization will improve prediction performance if $X_i$'s are highly correlated. In addition, the rescaling  $(1+\lambda_2/N)$ is used to undo the shrinkage introduced by the $l_2$ penalty term, so that the resulting estimator incorporates both variable selection via the lasso penalty and has desirable properties when there exists high correlation among the covariates.

The Adaptive Elastic Net estimator (AEN) is based on the first-step EN estimator. Specifically, we will introduce different penalty weights for different elements of $\pi$ by defining $\hat{w}_j=(|\hat{\pi}_j(EN)|)^{-\gamma}$, where $\gamma$ is a tuning parameter that satisfies $\gamma > \frac{2g}{1-g}$. \cite{AEN} suggests choosing $\gamma=\left\lceil \frac{2g}{1-g} \right\rceil  +1$, where $\lceil x \rceil$ denotes the smallest integer larger than $x$.  The AEN estimator introduced in \cite{AEN} is defined as 

\begin{equation}
\hat{\pi}(AEN) = (1+\lambda_2/N) \{\arg\min_{\pi} \|\tilde{\mathbf{y}}^{EN}-\mathbf{X}\pi\|^2+\lambda_2 \|\pi\|_2^2+\lambda_1^* \sum_{j=1}^T \hat{w}_j |\pi_j|\}
\label{eq:AEN def}
\end{equation}

\begin{remark}
\begin{enumerate}
	\item The penalty factors $\hat{w}_j$ are only introduced on the $l_1$ regularization part but not on the $l_2$ part.
	\item The use of the penalty factors is to enforce consistent variable selections as in adaptive lasso \citep{Alasso}. When $|\hat{\pi}_j(EN)|$ is small, then AEN will impose a larger penalty on $\pi_j$ and further shrinks it towards zero. Conversely, when $|\hat{\pi}_j(EN)|$ is relatively large, smaller regularization is imposed to reduce potential bias of estimating $\pi_0$.
	\item To avoid the problem of dividing a small number close to zero in finite sample, one can take $\hat{w}_j=(|\hat{\pi}_j(EN)|+\frac{1}{N})^{-\gamma}$ as suggested in \cite{AEN}.\footnote{I used the above penalty factor in the simulations.} 
\end{enumerate}
\end{remark}

The following assumption and lemmas are adapted from \cite{AEN}.
\begin{assumption}\label{as:AEN}
Let $\lambda_{min}(M)$ and $\lambda_{max}(M)$ denote the minimum and maximum eigenvalues of a positive-definite matrix $M$. Suppose
\begin{enumerate}
	\item $d\leq \lambda_{min}(\frac{1}{N} \tilde{\mathbf{B}}'\tilde{\mathbf{B}})\leq \lambda_{max}(\frac{1}{N} \tilde{\mathbf{B}}'\tilde{\mathbf{B}})\leq D$ for $0<d\leq D<\infty$.
	\item $\lim_{N\rightarrow \infty} \frac{\max_{i=1,\cdots,N} \sum_{j=1}^T B_{ij}^2}{N} =0$, $\pi_0'  \mathbb{E}[(B_i-\tilde{B}_i)(B_i-\tilde{B}_i)'|\mathcal{W}_l^C] \pi_0=O_p(T^{-1})$
	\item $\mathbb{E}(|\xi|^{2}|\mathbf{X})<\sigma_{\xi}^2$ for some $\sigma_{\xi}^2>0$
	\item $\lim_{N\rightarrow \infty} \frac{T}{N^g} =c $ for some $0<c<\infty$ and $1/2<g<1$
	\item $\lim_{N\rightarrow \infty} \frac{\lambda_2}{N}=0$, $\lim_{N\rightarrow \infty} \frac{\lambda_1}{\sqrt{N}}=0$
	\item $\lim_{N\rightarrow \infty} \frac{\lambda_2}{\sqrt{N}}\|\pi_0\|_2=0$, $\lim_{N\rightarrow \infty} \lambda_1^{*} N^{-1-(g-1)(1+\gamma)/2}=\infty$,
	$\lim_{N\rightarrow \infty} \frac{\lambda_1^*}{\sqrt{N}}=0$
	\[\lim_{N\rightarrow \infty}\min(\frac{N}{\lambda_1 \sqrt{T}}, (\frac{\sqrt{N}}{\sqrt{T} \lambda_1^*})^{1/\gamma} )   \min_{j:|(\pi_0)_j|>0} |(\pi_0)_j| \rightarrow \infty, \quad \min_{j:|(\pi_0)_j|>0} |(\pi_0)_j| > 0 \]
	\item For $F(W_i)$ defined in assumption \ref{as:G}, $\mathbb{E}( F(W_i) \|\tilde{\mu}_{l}-\mu_0\|^2 |\mathcal{W}_{ll'}^C)=O_p(\frac{1}{N})$
	\item $\sup_{\alpha} \|\frac{\partial^2 m(W_i,\alpha_i,\mu)}{\partial \alpha_i^2}\|^2\leq \tilde{F}(W_i) $ for all $\mu$ within a neighborhood $\mathcal{N}(\mu_0)$ of $\mu_0$ with $\tilde{F}(W_i)$ satisfying $\lim_{N \rightarrow \infty} \mathbb{E}(\tilde{F}(W_i) \|\bar{X}_i\|^2)<\infty$ and $\lim_{N\rightarrow \infty}  T\mathbb{E}(\tilde{F}(W_i) \bar{u}_i^2)< \infty$. 
	\item The number of non-zero elements of $\pi_0$ denoted as $|\mathcal{A}|$ grows slower than $T$: $\lim_{N\rightarrow \infty} \frac{|\mathcal{A}|}{T}=0$.
\end{enumerate}
\end{assumption}

\begin{remark}
\begin{enumerate}
	\item The first assumption can also be modified to be imposed upon the (unobserved) matrix $\frac{1}{N} \mathbf{B}'\mathbf{B}$ under suitable assumptions on $b(X_i,\alpha_i)$. For example, when only finitely many $K$ elements of $b(X_i,\alpha_i)$ depend on $\alpha_i$ with $B_i-\tilde{B}_i = O_p(T^{-1/2})$, then $\lambda_{max}(\frac{1}{N} \mathbf{B}'\mathbf{B}-\frac{1}{N} \tilde{\mathbf{B}}'\tilde{\mathbf{B}}) = O_p(T^{-1})$, and by theorem A.46 in \cite{spectral}, one can then bound $|\lambda_{max}(\frac{1}{N} \mathbf{B}'\mathbf{B})-\lambda_{max}(\frac{1}{N} \tilde{\mathbf{B}}'\tilde{\mathbf{B}})|\leq \lambda_{max}(\frac{1}{N} \mathbf{B}'\mathbf{B}-\frac{1}{N} \tilde{\mathbf{B}}'\tilde{\mathbf{B}}) = O_p(T^{-1})$. One can then convert it to a high probability statement as an upper bound of $\lambda_{max}(\frac{1}{N} \tilde{\mathbf{B}}'\tilde{\mathbf{B}})$, as $O_p(T^{-1})=o_p(N^{-\zeta})$, the following analysis can be done conditioning on the event that $\lambda_{max}(\frac{1}{N} \tilde{\mathbf{B}}'\tilde{\mathbf{B}})$ is bounded.
	
	\item The last three conditions are new in my context compared with \cite{AEN}. These are needed to control the error of not observing $\frac{\partial m(W_i,\alpha_i,\mu_0)}{\partial \alpha_i}$ and $B_i$. The conditions are standard regularity conditions that guarantees the existence of finite moments. For example, $\lim_{N\rightarrow \infty} T \mathbb{E}(\tilde{F}(W_i) \bar{u}_i^2)< \infty$ only requires the existence of moments for the random variable $\tilde{F}(W_i)(\frac{1}{\sqrt{T}} \sum_{t=1}^T u_{it})^2$ and typically a CLT for weakly dependent data yields $\frac{1}{\sqrt{T}} \sum_{t=1}^T u_{it}=O_p(1)$. One can also give more primitive mixing conditions to guarantee finite moments. For brevity, I will not give more detailed results here. 
\end{enumerate}
\end{remark}

In establishing the properties of EN estimator, a crucial step is to show that the difference between $y_i^{EN}$ and $\tilde{y}_i^{EN}$ will not affect the asymptotic properties of $\hat{\pi}(AEN)$. To do this we need to bound the error: $\mathbb{E}((\mathbf{y}^{EN}-\tilde{\mathbf{y}}^{EN})\tilde{\mathbf{B}} \tilde{\mathbf{B}}'(\mathbf{y}^{EN}-\tilde{\mathbf{y}}^{EN}))$

\begin{lemma}\label{lm:EN error}
Under assumption \ref{as:G}   and \ref{as:AEN},
\[\mathbb{E}((\mathbf{y}^{EN}-\tilde{\mathbf{y}}^{EN})'\tilde{\mathbf{B}} \tilde{\mathbf{B}}'(\mathbf{y}^{EN}-\tilde{\mathbf{y}}^{EN}))\\=o_p(NT)\]
\end{lemma}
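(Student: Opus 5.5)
The plan is to reduce the quadratic form to a sum of per-observation squared errors and then bound those errors with a Taylor expansion in $(\alpha_i,\mu)$.

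Write $e_i=y_i^{EN}-\tilde{y}_i^{EN}$ and $\mathbf{e}$ for the stacked vector. The first step is to pull out the eigenvalue bound. Since $\mathbf{e}'\tilde{\mathbf{B}}\tilde{\mathbf{B}}'\mathbf{e}\le \lambda_{max}(\tilde{\mathbf{B}}'\tilde{\mathbf{B}})\|\mathbf{e}\|^2$ (the nonzero eigenvalues of $\tilde{\mathbf{B}}\tilde{\mathbf{B}}'$ and $\tilde{\mathbf{B}}'\tilde{\mathbf{B}}$ coincide), condition 1 of Assumption \ref{as:AEN} gives
\[
\mathbf{e}'\tilde{\mathbf{B}}\tilde{\mathbf{B}}'\mathbf{e}\le ND\sum_{i}e_i^2 .
\]
It therefore suffices to show $\sum_i \mathbb{E}(e_i^2)=o(T)$. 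Since the terms are identically distributed, this is the same as $N\,\mathbb{E}(e_i^2)=o(T)$, that is $\mathbb{E}(e_i^2)=o(T/N)=o(N^{g-1})$. I will work conditionally on $\mathcal{W}_{ll'}^C$ and convert to unconditional statements with Lemma 6.1 of \cite{dml}, as was done in the proof of Lemma \ref{lm:no rrhat}.

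The second step decomposes $e_i$ into two pieces:
\[
e_i=\Big[\tfrac{\partial m(W_i,\alpha_i,\mu_0)}{\partial\alpha_i}-\tfrac{\partial m(W_i,\alpha_i,\tilde\mu_l)}{\partial\alpha_i}\Big]+\Big[\tfrac{\partial m(W_i,\alpha_i,\tilde\mu_l)}{\partial\alpha_i}-\tfrac{\partial m(W_i,\tilde\alpha_{ill'},\tilde\mu_l)}{\partial\alpha_i}\Big].
\]
By the mean value theorem, the first bracket is bounded by $\sup_\alpha\|\partial^2 m/\partial\alpha\partial\mu'\|\,\|\tilde\mu_l-\mu_0\|$. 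Assumption \ref{as:G} bounds its square by $F(W_i)\|\tilde\mu_l-\mu_0\|^2$, and condition 7 of Assumption \ref{as:AEN} then makes its expectation $O_p(1/N)$. Because $g>0$, $1/N=o(N^{g-1})$, so this piece is fine.

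The second bracket is bounded by $\sup_\alpha|\partial^2 m/\partial\alpha^2|\,|\tilde\alpha_{ill'}-\alpha_i|$. Its square is at most $\tilde F(W_i)(\tilde\alpha_{ill'}-\alpha_i)^2$ by condition 8, using $\tilde\mu_l\in\mathcal{N}(\mu_0)$ with probability approaching one by Assumption \ref{as:mom omega}(iii). Next, write $\tilde\alpha_{ill'}-\alpha_i=\bar{\bar u}_i+\bar{\bar X}_i'(\beta-\tilde\beta_{ll'})$ and apply $(a+b)^2\le2a^2+2b^2$. The bound on $T\,\mathbb{E}(\tilde F\bar u_i^2)$ in condition 8 gives $O(1/T)$ for the $\bar{\bar u}_i$ part. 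For the other part, $\mathbb{E}(\tilde F\|\bar X_i\|^2)<\infty$ together with Assumption \ref{as:rr}(i) gives $O_p(1/(NT))$. Here $\tilde\beta_{ll'}$ is fixed given $\mathcal{W}_{ll'}^C$, so the factorization is legitimate. Summing both pieces, $\mathbb{E}(e_i^2\mid\mathcal{W}_{ll'}^C)=O_p(1/T+1/N)$.

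The final step is the rate comparison. We need $1/T=o(T/N)$, i.e. $N/T^2\to0$, which holds because $g>1/2$ by condition 4. Hence $N\,\mathbb{E}(e_i^2)=O(N/T)=o(T)$, and multiplying by $D$ yields $o_p(NT)$.

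The main obstacle is bookkeeping the cross-fitting dependence, which enters in two places:
\begin{itemize}
\item \textbf{Dependence through $\tilde\mu_l$.} The observations $i$ used in the elastic-net regression were themselves used to build $\tilde\mu_l$, so $\tilde\mu_l$ is not independent of $i$. I would avoid needing independence by using the uniform-in-$\alpha$ bound: $\tilde\mu_l$ enters only through the product $F(W_i)\|\tilde\mu_l-\mu_0\|^2$, which condition 7 controls directly.
\item \textbf{The $\bar{\bar u}_i$ versus $\bar u_i$ mismatch.} Condition 8 is stated for $\bar u_i$ but the estimator involves $\bar{\bar u}_i$. Since $\bar{\bar u}_i$ averages over $T-1$ periods instead of $T$, the mismatch only changes constants.
\end{itemize}
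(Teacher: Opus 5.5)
Your proposal is correct and is essentially the paper's argument: bound the quadratic form by $\lambda_{\max}(\tilde{\mathbf{B}}'\tilde{\mathbf{B}})\,\|\mathbf{y}^{EN}-\tilde{\mathbf{y}}^{EN}\|^2\le DN\|\mathbf{y}^{EN}-\tilde{\mathbf{y}}^{EN}\|^2$, then mean-value expand $\partial m/\partial\alpha_i$ in $\mu$ and $\alpha_i$. The $\mu$-increment is controlled by $F(W_i)\|\tilde{\mu}_l-\mu_0\|^2$ and condition 7, the $\alpha$-increment by $\tilde{F}(W_i)$, $\tilde{\alpha}_{ill'}-\alpha_i=\bar{\bar{u}}_i+\bar{\bar{X}}_i'(\beta_0-\tilde{\beta}_{ll'})$ and condition 8, and $O_p(N^2/T)=o_p(NT)$ follows from $g>1/2$. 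The only differences are cosmetic: you apply the eigenvalue bound to the whole matrix rather than fold by fold, and you telescope the $\mu$- and $\alpha$-increments instead of using a single joint mean-value point, both of which are if anything slightly cleaner than the paper's write-up.
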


\begin{proof}

\begin{align*}
&  \mathbb{E}((\mathbf{y}^{EN}-\tilde{\mathbf{y}}^{EN})\tilde{\mathbf{B}} \tilde{\mathbf{B}}'(\mathbf{y}^{EN}-\tilde{\mathbf{y}}^{EN}))\\
&\leq  \sum_{l'\neq l} \mathbb{E}[\mathbb{E}(\sum_{i\in I_{l'}}\|\tilde{B}_i'(y_i^{EN}-\tilde{y}_i^{EN})\|^2|\mathcal{W}_{ll'}^C)]
\end{align*}

Let $\tilde{\mathbf{B}}_{l'}$ stack the $\tilde{B}_i$ in rows for samples only in $l'$ and similarly for $\mathbf{y}_{l'}^{EN}$ and $\tilde{\mathbf{y}}_{l'}^{EN}$
By the variational representation of eigenvalues, we have
\begin{align*}
& \mathbb{E}[\mathbb{E}(\sum_{i\in I_{l'}}\|\tilde{B}_i'(y_i^{EN}-\tilde{y}_i^{EN})\|^2|\mathcal{W}_{ll'}^C)]\\
& =\mathbb{E}((\mathbf{y}_{l'}^{EN}-\tilde{\mathbf{y}}_{l'}^{EN})\tilde{\mathbf{B}}_{l'} \tilde{\mathbf{B}}_{l'}'(\mathbf{y}_{l'}^{EN}-\tilde{\mathbf{y}}_{l'}^{EN})|\mathcal{W}_{ll'}^C)\\
&\leq \mathbb{E}(\lambda_{max}(\tilde{\mathbf{B}}_{l'} \tilde{\mathbf{B}}_{l'}') \|\mathbf{y}_{l'}^{EN}-\tilde{\mathbf{y}}_{l'}^{EN}\|^2)\\
&\leq \lambda_{max}(\tilde{\mathbf{B}}_{l'}'\tilde{\mathbf{B}}_{l'}) \mathbb{E}(\|\mathbf{y}_l^{EN}-\tilde{\mathbf{y}}_l^{EN}\|^2)\\
&\leq D N \mathbb{E}(\|\mathbf{y}_{l'}^{EN}-\tilde{\mathbf{y}}_{l'}^{EN}\|^2)\\
& \leq  D N^2 \mathbb{E}(y_i^{EN}-\tilde{y}_i^{EN})^2
\end{align*}
here I used the property that $\tilde{\mathbf{B}}_{l'} \tilde{\mathbf{B}}_{l'}'$ and $\tilde{\mathbf{B}}_{l'}'\tilde{\mathbf{B}}_{l'}$ shares the same  bounds on eigenvalues of $\frac{1}{N} \tilde{\mathbf{B}}'\tilde{\mathbf{B}}$.

Recall that $y_i^{EN}=\frac{\partial m(W_i,\alpha_i,\mu_0)}{\partial \alpha_i}$ and $\tilde{y}_i^{EN}=\frac{\partial m(W_i,\tilde{\alpha}_{ill'},\tilde{\mu}_{l})}{\partial \alpha_i}$, a first-order Mean-Value expansion gives that for some $\check{\mu}_{l}$ lying between $\tilde{\mu}_{l}$ and $\mu_0$ and $\check{\alpha}_{ill'}$ lying between $\alpha_i$ and $\tilde{\alpha}_{ill'}$

\begin{align*}
& (y_i^{EN}-\tilde{y}_i^{EN})^2\\
& = (\frac{\partial m(W_i,\alpha_i,\mu_0)}{\partial \alpha_i}-\frac{\partial m(W_i,\tilde{\alpha}_{ill'},\tilde{\mu}_{l})}{\partial \alpha_i})^2\\
&=(\frac{\partial^2 m(W_i,\check{\alpha}_{ill'},\check{\mu}_{l})}{\partial \alpha_i^2}(\tilde{\alpha}_{ill'}-\alpha)+\frac{\partial^2 m(W_i,\check{\alpha}_{ill'},\check{\mu}_{l})}{\partial \alpha_i \partial \mu'} (\tilde{\mu}_{l}-\mu_0))^2\\
&\leq 2 (\frac{\partial^2 m(W_i,\check{\alpha}_{ill'},\check{\mu}_{l})}{\partial \alpha_i^2})^2(\tilde{\alpha}_{ill'}-\alpha)^2\\
&+2\|\frac{\partial^2 m(W_i,\check{\alpha}_{ill'},\check{\mu}_{l})}{\partial \alpha_i \partial \mu'}\|^2 \|\tilde{\mu}_{l}-\mu_0\|^2
\end{align*}

We immediately see that under assumption \ref{as:G}, we have
\begin{align*}
& \mathbb{E}(\|\frac{\partial^2 m(W_i,\check{\alpha}_{ill'},\check{\mu}_{l})}{\partial \alpha_i \partial \mu'} (\tilde{\mu}_{l}-\mu_0)]\|^2|\mathcal{W}_{ll'}^C)\\
&\leq   \mathbb{E}( F(W_i) \|\tilde{\mu}_{l}-\mu_0\|^2 |\mathcal{W}_{ll'}^C)\\
&= O_p(\frac{1}{N}) 
\end{align*}

For the quantity $(\tilde{\alpha}_{ill'}-\alpha)^2$, I will only consider the case where $\tilde{\alpha}_{ill'}$ is defined as the fixed effect averages $\bar{Y}_i-\bar{X}_i'\tilde{\beta}_{ll'}$. When $\tilde{\alpha}_{ill'}$ incorporates EB or SURE corrections, similar techniques used to prove lemma \ref{lm:no alphahat} can be used to establish the same rates. For simplicity, formal derivations are omitted. 

\begin{align*}
& (\tilde{\alpha}_{ill'}-\alpha_i)^2\\
&=(\bar{\bar{X}}_i'(\tilde{\beta}_{ll'}-\beta_0)+\bar{\bar{u}}_i)^2\\
&\leq 2 \|\bar{\bar{X}}_i\|_2^2 \|\tilde{\beta}_{ll'}-\beta_0\|_2^2+2 \bar{\bar{u}}_i^2
\end{align*}

Hence,
\begin{align*}
& \mathbb{E}(\| \frac{\partial^2 m(W_i,\check{\alpha}_{ill'},\check{\mu}_{l})}{\partial \alpha_i^2} (\tilde{\alpha}_{ill'}-\alpha)\|^2|\mathcal{W}_{ll'}^C)\\
&\leq 2  \|\tilde{\beta}_{ll'}-\beta_0\|_2^2 \mathbb{E}(  \|\bar{\bar{X}}_i\|_2^2 \tilde{F}(W_i) |\mathcal{W}_{ll'}^C)\\
&+2  \mathbb{E}(  \bar{\bar{u}}_i^2 \tilde{F}(W_i) |\mathcal{W}_{ll'}^C)\\
&\leq O_p(1/N)+O_p(1/T)
\end{align*}

Finally, combining the above results, we see that 
\begin{align*}
& \mathbb{E}((\mathbf{y}^{EN}-\tilde{\mathbf{y}}^{EN})\tilde{\mathbf{B}} \tilde{\mathbf{B}}'(\mathbf{y}^{EN}-\tilde{\mathbf{y}}^{EN}))\\
&\leq 2 L^2 D N^2 \mathbb{E}\{\mathbb{E}(\|\frac{\partial^2 m(W_i,\check{\alpha}_{ill'},\check{\mu}_{l})}{\partial \alpha_i \partial \mu'} (\tilde{\mu}_{ill'}-\mu_0)]\|^2|\mathcal{W}_{ll'}^C)\\
&+ \mathbb{E}(\| \frac{\partial^2 m(W_i,\check{\alpha}_{ill'},\check{\mu}_{l})}{\partial \alpha_i^2} (\tilde{\alpha}_{ill'}-\alpha)\|^2|\mathcal{W}_{ll'}^C)\}\\
&=N^2 (O_p(1/N)+O_p(1/T))\\
&=O_p(N^2/T)
\end{align*}

Now use the fact that $T\propto N^g$ for some $g>1/2$ to conclude that the above display is $o_p(NT)$

\end{proof}

\begin{lemma}\label{lm:EN b error}
Under assumption \ref{as:AEN},
\[\mathbb{E}( \pi_0' (\mathbf{B}-\tilde{\mathbf{B}})' \tilde{\mathbf{B}} \tilde{\mathbf{B}}' (\mathbf{B}-\tilde{\mathbf{B}}) \pi_0)=o_p(NT)\]

Similarly,
\[\mathbb{E}( \pi_0' (\mathbf{B}-\tilde{\mathbf{B}})' \tilde{\mathbf{B}}_{\mathcal{A}} \tilde{\mathbf{B}}_{\mathcal{A}}' (\mathbf{B}-\tilde{\mathbf{B}}) \pi_0)=o_p(NT)\]
where the subscript in $\tilde{\mathbf{B}}_{\mathcal{A}}$ for $\mathcal{A}\subset \{1,\cdots,T\}$ means taking submatrix with columns of $\tilde{\mathbf{B}}$ that in the set $\mathcal{A}$
\end{lemma}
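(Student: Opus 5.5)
I would follow the template of the proof of Lemma \ref{lm:EN error}, with the vector $\mathbf{y}^{EN}-\tilde{\mathbf{y}}^{EN}$ replaced by the $N\times 1$ vector $\mathbf{e}=(\mathbf{B}-\tilde{\mathbf{B}})\pi_0$. Its $i$-th entry is $e_i=(B_i-\tilde{B}_i)'\pi_0$. Then
\[
\pi_0'(\mathbf{B}-\tilde{\mathbf{B}})'\tilde{\mathbf{B}}\tilde{\mathbf{B}}'(\mathbf{B}-\tilde{\mathbf{B}})\pi_0=\mathbf{e}'\tilde{\mathbf{B}}\tilde{\mathbf{B}}'\mathbf{e}=\|\tilde{\mathbf{B}}'\mathbf{e}\|^2 .
\]

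\textbf{Step 1: split by folds and bound by the largest eigenvalue.} As in Lemma \ref{lm:EN error}, I split the sum over individuals into the folds $l'\neq l$. The cross-fold expansion only costs a factor depending on $L$ through $(\sum_{l'}a_{l'})^2\le (L-1)\sum_{l'}a_{l'}^2$. Within each fold I condition on $\mathcal{W}_{ll'}^C$ and use the variational characterization of eigenvalues:
\[
\mathbf{e}_{l'}'\tilde{\mathbf{B}}_{l'}\tilde{\mathbf{B}}_{l'}'\mathbf{e}_{l'}\le \lambda_{max}(\tilde{\mathbf{B}}_{l'}'\tilde{\mathbf{B}}_{l'})\|\mathbf{e}_{l'}\|^2\le DN\|\mathbf{e}_{l'}\|^2 .
\]
Here $\tilde{\mathbf{B}}_{l'}\tilde{\mathbf{B}}_{l'}'$ and $\tilde{\mathbf{B}}_{l'}'\tilde{\mathbf{B}}_{l'}$ share nonzero eigenvalues, and the latter is bounded by $ND$ by condition 1 of Assumption \ref{as:AEN}, since a submatrix of rows gives a smaller Gram matrix in the psd order.

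\textbf{Step 2: control $\|\mathbf{e}\|^2$.} I take conditional expectations. Within fold $l'$, conditional on $\mathcal{W}_{ll'}^C$, the $e_i$ are identically distributed. Hence
\[
\mathbb{E}(\|\mathbf{e}_{l'}\|^2|\mathcal{W}_{ll'}^C)=|I_{l'}|\,\pi_0'\mathbb{E}[(B_i-\tilde{B}_i)(B_i-\tilde{B}_i)'|\mathcal{W}_{ll'}^C]\pi_0=O_p(N/T)
\]
by the second part of condition 2 of Assumption \ref{as:AEN}. That condition is stated conditional on $\mathcal{W}_l^C$, and I would read it as applying equally to the nuisance estimates built on $\mathcal{W}_{ll'}^C$. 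Combining with Step 1 gives a total bound $O_p(N^2/T)$. Since $T\propto N^g$ with $g>1/2$, we have $N^2/T=N^{2-g}$ and $NT=N^{1+g}$, and $2-g<1+g$. Therefore $O_p(N^2/T)=o_p(NT)$.

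\textbf{Step 3: the restricted version.} The second claim, with $\tilde{\mathbf{B}}_{\mathcal{A}}$, follows identically. $\tilde{\mathbf{B}}_{\mathcal{A}}'\tilde{\mathbf{B}}_{\mathcal{A}}$ is a principal submatrix of $\tilde{\mathbf{B}}'\tilde{\mathbf{B}}$, so by Cauchy interlacing its largest eigenvalue is at most $ND$. The remaining steps are unchanged.

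\textbf{Main obstacle.} The delicate step is the interplay between the outer expectation and the $O_p$ statements, which are conditional rates. I would handle it as in Lemma \ref{lm:EN error}: treat $\lambda_{max}$ as bounded on the high-probability event of condition 1, and convert conditional to unconditional convergence via Lemma 6.1 of \cite{dml}. A secondary concern is that $\tilde{B}_i$ and $e_i$ are correlated through $\tilde{\alpha}_{ill'}$. The eigenvalue bound avoids any independence requirement, so this correlation causes no difficulty.
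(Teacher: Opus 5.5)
Your proposal is correct and follows essentially the same argument as the paper. You bound the quadratic form by $\lambda_{\max}(\tilde{\mathbf{B}}\tilde{\mathbf{B}}')\,\|(\mathbf{B}-\tilde{\mathbf{B}})\pi_0\|^2\le ND\,\|(\mathbf{B}-\tilde{\mathbf{B}})\pi_0\|^2$, use condition 2 of Assumption \ref{as:AEN} to get $N\cdot O(T^{-1})$ for the expected squared norm, conclude $O(N^2/T)=o(NT)$ from $g>1/2$, and handle $\tilde{\mathbf{B}}_{\mathcal{A}}$ through $\lambda_{\max}(\tilde{\mathbf{B}}_{\mathcal{A}}\tilde{\mathbf{B}}_{\mathcal{A}}')\le\lambda_{\max}(\tilde{\mathbf{B}}\tilde{\mathbf{B}}')$. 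The only difference is your fold-by-fold split, which is harmless but unnecessary here: the paper applies the eigenvalue bound to the full matrix directly, while your split mirrors Lemma \ref{lm:EN error} and makes the conditioning on $\mathcal{W}_{ll'}^C$ explicit.
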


\begin{proof}
\begin{align*}
& \mathbb{E}( \pi_0' (\mathbf{B}-\tilde{\mathbf{B}})' \tilde{\mathbf{B}} \tilde{\mathbf{B}}' (\mathbf{B}-\tilde{\mathbf{B}}) \pi_0)\\
&\leq \mathbb{E}( \lambda_{max}( \tilde{\mathbf{B}} \tilde{\mathbf{B}}') \pi_0' (\mathbf{B}-\tilde{\mathbf{B}})' (\mathbf{B}-\tilde{\mathbf{B}}) \pi_0))\\
&\leq N D \mathbb{E}(\pi_0' (\mathbf{B}-\tilde{\mathbf{B}})' (\mathbf{B}-\tilde{\mathbf{B}}) \pi_0)\\
&=N^2 D \mathbb{E}(\pi_0' (B_i-\tilde{B}_i)'(B_i-\tilde{B}_i) \pi_0)\\
&= N^2  D O(T^{-1})\\
&=o(NT)
\end{align*}
where the second last line uses condition (ii) of assumption \ref{as:AEN} and the last line uses $T$ grow faster than $N^{1/2}$.

The second result follows the same proof by noticing that $\lambda_{max}( \tilde{\mathbf{B}}_{\mathcal{A}} \tilde{\mathbf{B}}_{\mathcal{A}}')\leq \lambda_{max}( \tilde{\mathbf{B}} \tilde{\mathbf{B}}')$
\end{proof}

\begin{lemma}\label{lm:EN rate}
Under assumption \ref{as:rr} and \ref{as:AEN}, for a generic penalty factor $\hat{w}_j$ that may depend on $\mathbf{X}$ and $\tilde{\mathbf{y}}^{EN}$, define

\[\hat{\pi}_{\hat{w}}(\lambda_2,\lambda_1) = \arg\min_{\pi} \|\tilde{\mathbf{y}}^{EN}-\tilde{\mathbf{B}}\pi\|^2+\lambda_2 \|\pi\|_2^2+\lambda_1 \sum_{j=1}^T \hat{w}_j |\pi_j|\]
then,
\begin{equation}
\mathbb{E}(\|\hat{\pi}_{\hat{w}}(\lambda_2,\lambda_1)-\pi_0\|_2^2)\leq \frac{2 \lambda_1^2 \mathbb{E}(\sum_{j=1}^T \hat{w}_j^2) +8 \lambda_2^2 \|\pi_0\|_2^2 + 8 D N T \sigma_{\xi}^2 +o(NT)}{(N d+\lambda_2)^{-2}}
\label{eq:AEN 1step}
\end{equation}
\end{lemma}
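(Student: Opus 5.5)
The bound has the same shape as Theorem 3.1 in \cite{AEN}, and the plan is to follow that argument, treating $\tilde{\mathbf{y}}^{EN}$ as a contaminated version of a correctly specified response. Write $\hat{\pi}=\hat{\pi}_{\hat{w}}(\lambda_2,\lambda_1)$. The first-order (KKT) conditions give
\[
-2\tilde{\mathbf{B}}'(\tilde{\mathbf{y}}^{EN}-\tilde{\mathbf{B}}\hat{\pi})+2\lambda_2\hat{\pi}+\lambda_1\hat{w}\circ s=0,
\]
for a subgradient vector $s$ with $|s_j|\le 1$.

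\textbf{Step 1: decompose the response.} By (\ref{eq:EN}),
\[
\tilde{\mathbf{y}}^{EN}=\tilde{\mathbf{B}}\pi_0+\boldsymbol{\xi}+(\mathbf{B}-\tilde{\mathbf{B}})\pi_0+(\tilde{\mathbf{y}}^{EN}-\mathbf{y}^{EN}).
\]
Let $\mathbf{e}=\boldsymbol{\xi}+(\mathbf{B}-\tilde{\mathbf{B}})\pi_0+(\tilde{\mathbf{y}}^{EN}-\mathbf{y}^{EN})$. Substituting into the KKT conditions and rearranging gives
\[
(\tilde{\mathbf{B}}'\tilde{\mathbf{B}}+\lambda_2 I)(\hat{\pi}-\pi_0)=\tilde{\mathbf{B}}'\mathbf{e}-\lambda_2\pi_0-\tfrac{\lambda_1}{2}\hat{w}\circ s.
\]

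\textbf{Step 2: invert the Gram matrix.} By condition 1 of Assumption \ref{as:AEN}, $\lambda_{min}(\tilde{\mathbf{B}}'\tilde{\mathbf{B}}+\lambda_2 I)\ge Nd+\lambda_2$. Hence
\[
\|\hat{\pi}-\pi_0\|_2^2\le (Nd+\lambda_2)^{-2}\|\tilde{\mathbf{B}}'\mathbf{e}-\lambda_2\pi_0-\tfrac{\lambda_1}{2}\hat{w}\circ s\|_2^2 .
\]
Applying $(a+b+c)^2\le$ const$\cdot(a^2+b^2+c^2)$ splits the right side into three pieces:
\begin{itemize}
\item the penalty piece, bounded by $\lambda_1^2\sum_j\hat{w}_j^2$ because $|s_j|\le1$;
\item the ridge-bias piece $\lambda_2^2\|\pi_0\|_2^2$;
\item the noise piece $\|\tilde{\mathbf{B}}'\mathbf{e}\|^2$.
\end{itemize}
Taking expectations gives the constants in (\ref{eq:AEN 1step}). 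The denominator in (\ref{eq:AEN 1step}) should be read as the factor $(Nd+\lambda_2)^{-2}$ multiplying the numerator.

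\textbf{Step 3: bound the noise term.} This is the main obstacle. Split $\|\tilde{\mathbf{B}}'\mathbf{e}\|^2$ again into its three contributions.
\begin{itemize}
\item The contaminations $(\mathbf{B}-\tilde{\mathbf{B}})\pi_0$ and $\tilde{\mathbf{y}}^{EN}-\mathbf{y}^{EN}$ are exactly the quantities handled by Lemma \ref{lm:EN b error} and Lemma \ref{lm:EN error}, each of which is $o(NT)$ in expectation.
\item For the structural error, $\mathbb{E}(\boldsymbol{\xi}'\tilde{\mathbf{B}}\tilde{\mathbf{B}}'\boldsymbol{\xi})=\mathbb{E}\,\mathrm{tr}(\tilde{\mathbf{B}}'\mathbb{E}(\boldsymbol{\xi}\boldsymbol{\xi}'|\cdot)\tilde{\mathbf{B}})$, which I would bound by $\sigma_\xi^2\,\mathbb{E}\,\mathrm{tr}(\tilde{\mathbf{B}}'\tilde{\mathbf{B}})\le \sigma_\xi^2 DNT$ using conditions 1 and 3 of Assumption \ref{as:AEN}.
\end{itemize}
The delicate point in the structural term is conditioning. $\tilde{\mathbf{B}}$ contains $\tilde{\alpha}_{ill'}$ rather than $\alpha_i$, so $\mathbb{E}(\xi_i|\tilde{B}_i)=0$ need not hold exactly. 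Cross-sectional independence of the $\xi_i$ conditional on $\mathcal{W}_{ll'}^C$, which the cross-fitting provides, kills the off-diagonal terms. For the diagonal terms I would rewrite $\tilde{B}_i=B_i+(\tilde{B}_i-B_i)$ and absorb the cross piece into the $o(NT)$ remainder via condition 2.

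\textbf{Step 4: collect terms.} The cross products among the three contributions are handled by the elementary inequality; this costs only constants, which produce the factors $2$ and $8$. Collecting the pieces yields (\ref{eq:AEN 1step}).
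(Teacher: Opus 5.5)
Your argument is correct at the paper's level of rigor, but you handle the $\ell_1$ penalty differently.

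\textbf{How the two routes compare.} The paper, following \cite{AEN}, introduces the ridge solution $\hat{\pi}(\lambda_2,0)$ and writes $\Delta=\hat{\pi}_{\hat{w}}-\hat{\pi}(\lambda_2,0)$. It then splits
$\|\hat{\pi}_{\hat{w}}-\pi_0\|_2^2\le 2\|\Delta\|_2^2+2\|\hat{\pi}(\lambda_2,0)-\pi_0\|_2^2$.
\begin{itemize}
\item It bounds $\|\Delta\|_2$ by comparing objective values. The ridge part exceeds its minimum by exactly $\Delta'(\tilde{\mathbf{B}}'\tilde{\mathbf{B}}+\lambda_2 I)\Delta$, while the weighted $\ell_1$ terms differ by at most $\lambda_1\sqrt{\sum_j\hat{w}_j^2}\,\|\Delta\|_2$.
\item It bounds the ridge error from its closed form. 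Here it uses the same four-way decomposition of $\tilde{\mathbf{y}}^{EN}$, and the same appeal to Lemmas \ref{lm:EN error} and \ref{lm:EN b error}, that you use.
\end{itemize}
Your subgradient identity does both steps at once and is slightly sharper. It gives $\Delta=-\tfrac{\lambda_1}{2}(\tilde{\mathbf{B}}'\tilde{\mathbf{B}}+\lambda_2 I)^{-1}(\hat{w}\circ s)$, a factor of two better than the paper's comparison bound. The paper's route avoids subdifferentials and isolates $\|\Delta\|_2$ as a stand-alone bound, which it reuses for $\hat{\pi}_{\mathcal{A}}$ in Lemma \ref{lm:EN oracle} and the final rate lemma. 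Your identity supplies that bound as well.

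\textbf{Constants.} A nested three-by-three split puts a $9$ on the $\lambda_2$ and $\boldsymbol{\xi}$ terms, which exceeds the stated $8$. To stay within the statement, first use $2\|\tfrac{\lambda_1}{2}\hat{w}\circ s\|_2^2+2\|\text{rest}\|_2^2$, then a four-term split of the rest. This gives $\tfrac12\lambda_1^2\sum_j\hat{w}_j^2+8\lambda_2^2\|\pi_0\|_2^2+8\|\tilde{\mathbf{B}}'\boldsymbol{\xi}\|_2^2+o(NT)$. Your reading of the displayed $(Nd+\lambda_2)^{-2}$ as a typo is right: the paper's own proof ends with $(Nd+\lambda_2)^{2}$ in the denominator.

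\textbf{A caution on Step 3.} The bound $\mathbb{E}(\boldsymbol{\xi}'\tilde{\mathbf{B}}\tilde{\mathbf{B}}'\boldsymbol{\xi})\le\sigma_\xi^2\,\mathbb{E}\,\mathrm{tr}(\tilde{\mathbf{B}}'\tilde{\mathbf{B}})$ is exactly what the paper writes, without comment, so your main line matches it. The side argument you sketch to justify it does not work as stated, for three reasons.
\begin{itemize}
\item Conditional independence given $\mathcal{W}_{ll'}^C$ only factors an off-diagonal term into $\mathbb{E}(\xi_i\tilde{B}_i\mid\mathcal{W}_{ll'}^C)'\,\mathbb{E}(\xi_j\tilde{B}_j\mid\mathcal{W}_{ll'}^C)$. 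This vanishes only if $\mathbb{E}(\xi_i\tilde{B}_i\mid\mathcal{W}_{ll'}^C)=0$, which is the very property you said may fail.
\item Pairs from different folds are not covered by a single such conditioning. For $j\in I_{l''}$, $\tilde{B}_j$ depends on fold $l'$ through $\tilde{\beta}_{ll''}$.
\item Condition 2 of Assumption \ref{as:AEN} controls only the projection $\pi_0'(B_i-\tilde{B}_i)$. It says nothing about $\xi_i(\tilde{B}_i-B_i)$ or $\xi_i^2\|\tilde{B}_i-B_i\|_2^2$.
\end{itemize}
What the step actually uses is a conditional second-moment bound for $\boldsymbol{\xi}$ given $\tilde{\mathbf{B}}$, that is, condition 3 read as conditioning on $\tilde{\mathbf{B}}$. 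Within a fold this holds when $\tilde{\alpha}_{ill'}$ is a function of $X_i$ and $\tilde{\beta}_{ll'}$ only. That is the case in the dynamic specification, or when lagged outcomes are added to $X_i$ (the device the paper uses for Assumption \ref{as:alpha}(ii)). Otherwise you need explicit rate conditions on those two moments, which Assumption \ref{as:AEN} does not supply.
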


\begin{remark}
\begin{enumerate}
\item For EN estimator $\hat{w}_j=1$ for all $j$, and one can obtain a similar bound by replacing $\mathbb{E}(\sum_{j=1}^T \hat{w}_j^2)$ with $T$ as noticed in \cite{AEN}.
\item The lemma suggests that the convergence rate for $\hat{\pi}_{\hat{w}}(\lambda_2,\lambda_1)$ is $O_p(T/N)$ and is consistent for $\pi_0$ if $T$ grows slower than $N$. 
\end{enumerate}
\end{remark}

\begin{proof}
We will also consider a ridge regression estimator defined as
\[\hat{\pi}(\lambda_2,0)=\arg\min_{\pi} \|\tilde{\mathbf{y}}^{EN}-\tilde{\mathbf{B}}\pi\|^2+\lambda_2 \|\pi\|_2^2 \]

We are going to use 
\[\|\hat{\pi}_{\hat{w}}(\lambda_2,\lambda_1)-\pi_0\|_2^2\leq 2 \underbrace{\|\hat{\pi}_{\hat{w}}(\lambda_2,\lambda_1)-\hat{\pi}(\lambda_2,0)\|_2^2}_{I}+2 \underbrace{\|\hat{\pi}(\lambda_2,0)-\pi_0\|_2^2}_{II}\]
and bound the terms on the RHS separately.

\textbf{Bounding term (I)}

By the definition of $\hat{\pi}(\lambda_2,0)$ and $\hat{\pi}_{\hat{w}}(\lambda_2,\lambda_1)$, we have
\begin{align*}
& \|\tilde{\mathbf{y}}^{EN}-\tilde{\mathbf{B}}\hat{\pi}(\lambda_2,0)\|^2+\lambda_2 \|\hat{\pi}(\lambda_2,0)\|_2^2+\lambda_1 \sum_{j=1}^T \hat{w}_j |\hat{\pi}_j(\lambda_2,0)|\\
&\geq  \|\tilde{\mathbf{y}}^{EN}-\tilde{\mathbf{B}}\hat{\pi}_{\hat{w}}(\lambda_2,\lambda_1)\|^2+\lambda_2 \|\hat{\pi}_{\hat{w}}(\lambda_2,\lambda_1)\|_2^2+\lambda_1 \sum_{j=1}^T \hat{w}_j |\hat{\pi}_{\hat{w}}(\lambda_2,\lambda_1)_j|
\end{align*}

and
\[\|\tilde{\mathbf{y}}^{EN}-\tilde{\mathbf{B}}\hat{\pi}_{\hat{w}}(\lambda_2,\lambda_1)\|^2+\lambda_2 \|\hat{\pi}_{\hat{w}}(\lambda_2,\lambda_1)\|_2^2\geq \|\tilde{\mathbf{y}}^{EN}-\tilde{\mathbf{B}}\hat{\pi}(\lambda_2,0)\|^2+\lambda_2 \|\hat{\pi}(\lambda_2,0)\|_2^2 \]

The first one implies
\begin{align*}
& \lambda_1 \sum_{j=1}^T \hat{w}_j (|\hat{\pi}_j(\lambda_2,0)|-|\hat{\pi}_{\hat{w}}(\lambda_2,\lambda_1)_j|)\\
&\geq (\|\tilde{\mathbf{y}}^{EN}-\tilde{\mathbf{B}}\hat{\pi}_{\hat{w}}(\lambda_2,\lambda_1)\|^2+\lambda_2 \|\hat{\pi}_{\hat{w}}(\lambda_2,\lambda_1)\|_2^2)\\
& -(\|\tilde{\mathbf{y}}^{EN}-\tilde{\mathbf{B}}\hat{\pi}(\lambda_2,0)\|^2+\lambda_2 \|\hat{\pi}(\lambda_2,0)\|_2^2)
\end{align*}

The RHS of the above display can be simplified by using the closed-form solution of $\hat{\pi}(\lambda_2,0)=(\tilde{\mathbf{B}}'\tilde{\mathbf{B}}+\lambda_2 I)^{-1} \tilde{\mathbf{B}}' \tilde{\mathbf{y}}^{EN}$:
\begin{align*}
& (\|\tilde{\mathbf{y}}^{EN}-\tilde{\mathbf{B}}\hat{\pi}_{\hat{w}}(\lambda_2,\lambda_1)\|^2+\lambda_2 \|\hat{\pi}_{\hat{w}}(\lambda_2,\lambda_1)\|_2^2)\\
& -(\|\tilde{\mathbf{y}}^{EN}-\tilde{\mathbf{B}}\hat{\pi}(\lambda_2,0)\|^2+\lambda_2 \|\hat{\pi}(\lambda_2,0)\|_2^2)\\
&=(\hat{\pi}_{\hat{w}}(\lambda_2,\lambda_1)-\hat{\pi}(\lambda_2,0))(\tilde{\mathbf{B}}'\tilde{\mathbf{B}}+\lambda_2 I)(\hat{\pi}_{\hat{w}}(\lambda_2,\lambda_1)-\hat{\pi}(\lambda_2,0))
\end{align*}

We can further upper bound the LHS using Cauchy-Schwartz:
\[\sum_{j=1}^T \hat{w}_j (|\hat{\pi}_j(\lambda_2,0)|-|\hat{\pi}_{\hat{w}}(\lambda_2,\lambda_1)_j|)\leq \sqrt{\sum_{j=1}^T \hat{w}_j^2} \|\hat{\pi}_j(\lambda_2,0)-\hat{\pi}_{\hat{w}}(\lambda_2,\lambda_1)\|_2\]

Finally, use the property that $\lambda_{min}(\tilde{\mathbf{B}}'\tilde{\mathbf{B}}+\lambda_2 I)=\lambda_{min}(\tilde{\mathbf{B}}'\tilde{\mathbf{B}})+\lambda_2$, we have
\begin{align*}
& \lambda_1 \sqrt{\sum_{j=1}^T \hat{w}_j^2} \|\hat{\pi}_j(\lambda_2,0)-\hat{\pi}_{\hat{w}}(\lambda_2,\lambda_1)\|_2\\
&\geq \lambda_1 \sum_{j=1}^T \hat{w}_j (|\hat{\pi}_j(\lambda_2,0)|-|\hat{\pi}_{\hat{w}}(\lambda_2,\lambda_1)_j|)\\
&\geq (\hat{\pi}_{\hat{w}}(\lambda_2,\lambda_1)-\hat{\pi}(\lambda_2,0))(\tilde{\mathbf{B}}'\tilde{\mathbf{B}}+\lambda_2 I)(\hat{\pi}_{\hat{w}}(\lambda_2,\lambda_1)-\hat{\pi}(\lambda_2,0))\\
&\geq (\lambda_{min}(\tilde{\mathbf{B}}'\tilde{\mathbf{B}})+\lambda_2)\|\hat{\pi}_{\hat{w}}(\lambda_2,\lambda_1)-\hat{\pi}(\lambda_2,0)\|_2^2
\end{align*}

Rearranging the term yields
\[\|\hat{\pi}_{\hat{w}}(\lambda_2,\lambda_1)-\hat{\pi}(\lambda_2,0)\|_2 \leq \frac{\lambda_1 \sqrt{\sum_{j=1}^T \hat{w}_j^2}}{\lambda_{min}(\tilde{\mathbf{B}}'\tilde{\mathbf{B}})+\lambda_2}\]

\textbf{Bounding term (II)}

I will have to deal with the fact that $\tilde{y}_i^{EN}$ and $y_i^{EN}$ differ and that $B_i$ and $\tilde{B}_i$ differ for the term (II). 

Note 
\begin{align*}
& \hat{\pi}(\lambda_2,0)-\pi_0\\
&=(\tilde{\mathbf{B}}'\tilde{\mathbf{B}}+\lambda_2 I)^{-1} \tilde{\mathbf{B}}'(\mathbf{B} \pi_0+\xi+\tilde{\mathbf{y}}^{EN}-\mathbf{y}^{EN})-\pi_0\\
&= -\lambda_2 (\tilde{\mathbf{B}}'\tilde{\mathbf{B}}+\lambda_2 I)^{-1} \pi_0+(\tilde{\mathbf{B}}'\tilde{\mathbf{B}}+\lambda_2 I)^{-1} \tilde{\mathbf{B}}'\xi\\
&+(\tilde{\mathbf{B}}'\tilde{\mathbf{B}}+\lambda_2 I)^{-1} \tilde{\mathbf{B}}'(\tilde{\mathbf{y}}^{EN}-\mathbf{y}^{EN})+ (\tilde{\mathbf{B}}'\tilde{\mathbf{B}}+\lambda_2 I)^{-1} \tilde{\mathbf{B}}' (\mathbf{B}-\tilde{\mathbf{B}}) \pi_0
\end{align*}
Here the last two terms are new compared with results in \cite{AEN}.

\begin{align*}
& \mathbb{E}(\|\hat{\pi}(\lambda_2,0)-\pi_0\|_2^2)\\
&\leq 4 \lambda_2^2 (\lambda_{min}(\tilde{\mathbf{B}}'\tilde{\mathbf{B}})+\lambda_2)^{-2} \|\pi_0\|_2^2+4\mathbb{E}(\|\tilde{\mathbf{B}}'\tilde{\mathbf{B}}+\lambda_2 I)^{-1} \tilde{\mathbf{B}}'\xi\|_2^2)\\
&+ 4 \mathbb{E}(\|(\tilde{\mathbf{B}}'\tilde{\mathbf{B}}+\lambda_2 I)^{-1} \tilde{\mathbf{B}}'(\tilde{\mathbf{y}}^{EN}-\mathbf{y}^{EN})\|_2^2)\\
&+ 4 \mathbb{E}(\|(\tilde{\mathbf{B}}'\tilde{\mathbf{B}}+\lambda_2 I)^{-1}  \tilde{\mathbf{B}}' (\mathbf{B}-\tilde{\mathbf{B}}) \pi_0 \|_2^2 )\\
&\leq 4 \lambda_2^2 (\lambda_{min}(\tilde{\mathbf{B}}'\tilde{\mathbf{B}})+\lambda_2)^{-2} \|\pi_0\|_2^2\\
&+4 (\lambda_{min}(\tilde{\mathbf{B}}'\tilde{\mathbf{B}})+\lambda_2)^{-2} \mathbb{E}(\xi'\tilde{\mathbf{B}} \tilde{\mathbf{B}}'\xi)\\
&+4 (\lambda_{min}(\tilde{\mathbf{B}}'\tilde{\mathbf{B}})+\lambda_2)^{-2} \mathbb{E}(\|\tilde{\mathbf{B}}'(\tilde{\mathbf{y}}^{EN}-\mathbf{y}^{EN}\|_2^2)\\
&+ 4 (\lambda_{min}(\tilde{\mathbf{B}}'\tilde{\mathbf{B}})+\lambda_2)^{-2} \mathbb{E}( \pi_0' (\mathbf{B}-\tilde{\mathbf{B}})' \tilde{\mathbf{B}} \tilde{\mathbf{B}}' (\mathbf{B}-\tilde{\mathbf{B}}) \pi_0)\\
&\leq 4  (\lambda_{min}(\tilde{\mathbf{B}}'\tilde{\mathbf{B}})+\lambda_2)^{-2}(\lambda_2^2 \|\pi_0\|_2^2 +Tr(\tilde{\mathbf{B}}'\tilde{\mathbf{B}}) \sigma_{\xi}^2 +o(NT))\\
&\leq 4 (\lambda_{min}(\mathbf{X}'\mathbf{X})+\lambda_2)^{-2}(\lambda_2^2 \|\pi_0\|_2^2 + \lambda_{max}(\tilde{\mathbf{B}}'\tilde{\mathbf{B}}) T \sigma_{\xi}^2 +o(NT))
\end{align*}
where lemma \ref{lm:EN error} is used to bound $\mathbb{E}(\|\tilde{\mathbf{B}}'(\tilde{\mathbf{y}}^{EN}-\mathbf{y}^{EN}\|_2^2)$ and lemma \ref{lm:EN b error} to bound $\mathbb{E}( \pi_0' (\mathbf{B}-\tilde{\mathbf{B}})' \tilde{\mathbf{B}} \tilde{\mathbf{B}}' (\mathbf{B}-\tilde{\mathbf{B}}) \pi_0)$.

Now combining the bounds on the two terms, we have 
\begin{align*}
& \mathbb{E}(\|\hat{\pi}_{\hat{w}}(\lambda_2,\lambda_1)-\pi_0\|_2^2)\\
&\leq \frac{2 \lambda_1^2 \mathbb{E}(\sum_{j=1}^T \hat{w}_j^2) +8 \lambda_2^2 \|\pi_0\|_2^2 + 8 \lambda_{max}(\mathbf{X}'\mathbf{X}) T \sigma_{\xi}^2 +o(NT)}{(N d+\lambda_2)^{2}}\\
&\leq \frac{2 \lambda_1^2 \mathbb{E}(\sum_{j=1}^T \hat{w}_j^2) +8 \lambda_2^2 \|\pi_0\|_2^2 + 8 D N T \sigma_{\xi}^2 +o(NT)}{(N d+\lambda_2)^{2}}
\end{align*}
where the leading term in the numerator is $8 D N T \sigma_{\xi}^2$ which is of order $NT$.

\end{proof}

Let $\mathcal{A}$ be the set of coordinates of $\pi_0$ that is different from zero. Formally,
\[\mathcal{A}=\{j:(\pi_0)_j \neq 0, j=1,\cdots,T \}\]
Let $\tilde{\mathbf{B}}_{\mathcal{A}}$ denote the submatrix of $\tilde{\mathbf{B}}$ that only contains columns whose index is in $\mathcal{A}$, and similarly, let $\pi_{\mathcal{A}}$ denote the subvector of $\pi$ that contains indexes in $\mathcal{A}$.

Now I will state an oracle property of the AEN estimator:
\begin{lemma}\label{lm:EN oracle}
Let $\pi_0=((\pi_0)_{\mathcal{A}},0)$, and define
\[\hat{\pi}_{\mathcal{A}} = \arg\min_{\pi} \{\|\tilde{\mathbf{y}}^{EN}-\tilde{\mathbf{B}}_{\mathcal{A}} \pi\|_2^2 + \lambda_2 \|\pi\|_2^2 + \lambda_1^* \sum_{j=1}^T \hat{w}_j |\pi_j|\}\]
Then, under assumption \ref{as:AEN} and assumption \ref{as:G}, with probability going to 1, $((1+\lambda_2/N)\hat{\pi}_{\mathcal{A}},0)$ is going to be equal to $\hat{\pi}(AEN)$.
\end{lemma}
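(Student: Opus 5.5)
This is the adaptive elastic net oracle argument of \cite{AEN} (their Theorem 3.2), adapted to the fact that the response $\tilde{\mathbf{y}}^{EN}$ and the design $\tilde{\mathbf{B}}$ are generated. I would verify the Karush--Kuhn--Tucker (KKT) conditions of the full AEN problem (\ref{eq:AEN def}) at the candidate $(\hat{\pi}_{\mathcal{A}},0)$. Because the objective is strictly convex ($\lambda_2>0$), it suffices to show two things with probability tending to one. First, $\hat{\pi}_{\mathcal{A}}$ has all coordinates nonzero and of the correct sign, so the restricted problem's stationarity condition coincides with the $\mathcal{A}$-block of the full KKT system. Second, for every $j\notin\mathcal{A}$,
\[
\bigl|2\tilde{\mathbf{B}}_j'(\tilde{\mathbf{y}}^{EN}-\tilde{\mathbf{B}}_{\mathcal{A}}\hat{\pi}_{\mathcal{A}})\bigr|\le \lambda_1^*\hat{w}_j .
\]
Rescaling by $(1+\lambda_2/N)$ then gives $\hat{\pi}(AEN)=((1+\lambda_2/N)\hat{\pi}_{\mathcal{A}},0)$.

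\textbf{Step 1 (sign consistency on $\mathcal{A}$).} Apply Lemma \ref{lm:EN rate} to the restricted problem, with $\tilde{\mathbf{B}}_{\mathcal{A}}$ in place of $\tilde{\mathbf{B}}$. This gives a bound on $\mathbb{E}\|\hat{\pi}_{\mathcal{A}}-(\pi_0)_{\mathcal{A}}\|^2$ of order $\bigl(\lambda_2^2\|\pi_0\|^2+NT+\lambda_1^{*2}\mathbb{E}\sum_{j\in\mathcal{A}}\hat{w}_j^2\bigr)/N^2$. The second part of Lemma \ref{lm:EN b error} handles the generated-design term, and Lemma \ref{lm:EN error} handles the generated-response term. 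To control $\hat{w}_j$ for $j\in\mathcal{A}$, apply Lemma \ref{lm:EN rate} with $\hat{w}_j=1$ to get $\|\hat{\pi}(EN)-\pi_0\|=O_p(\sqrt{T/N}+\lambda_1\sqrt{T}/N)$. The condition $\min(N/(\lambda_1\sqrt{T}),\ldots)\,\min_j|(\pi_0)_j|\to\infty$ together with $T/N\to0$ then keeps $|\hat{\pi}_j(EN)|$ bounded below by a constant multiple of $\min_{j\in\mathcal{A}}|(\pi_0)_j|$. Hence $\hat{w}_j=O_p(\eta^{-\gamma})$ on $\mathcal{A}$, where $\eta=\min_{j\in\mathcal{A}}|(\pi_0)_j|$. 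The $(\sqrt{N}/(\sqrt{T}\lambda_1^*))^{1/\gamma}$ part of the rate condition makes the resulting bound $o(\eta^2)$. Therefore $\max_{j\in\mathcal{A}}|\hat{\pi}_{\mathcal{A},j}-(\pi_0)_j|<\eta$ with probability tending to one, and the signs agree.

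\textbf{Step 2 (inactive coordinates).} Decompose the residual as
\[
\tilde{\mathbf{y}}^{EN}-\tilde{\mathbf{B}}_{\mathcal{A}}\hat{\pi}_{\mathcal{A}}=\xi+(\tilde{\mathbf{y}}^{EN}-\mathbf{y}^{EN})+(\mathbf{B}-\tilde{\mathbf{B}})\pi_0+\tilde{\mathbf{B}}_{\mathcal{A}}\bigl((\pi_0)_{\mathcal{A}}-\hat{\pi}_{\mathcal{A}}\bigr).
\]
I bound $\max_{j\notin\mathcal{A}}$ of $|\tilde{\mathbf{B}}_j'(\cdot)|$ term by term. The noise term is $O_p(\sqrt{NT})$ by a union bound using condition (iii) and the eigenvalue bound in condition (i) of Assumption \ref{as:AEN}. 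The two generated-regressor terms are $o_p(\sqrt{NT})$ by Lemmas \ref{lm:EN error} and \ref{lm:EN b error} plus Markov's inequality. The last term is at most $ND\|\hat{\pi}_{\mathcal{A}}-(\pi_0)_{\mathcal{A}}\|$, which Step 1 controls; roughly $O_p(\sqrt{NT}+\lambda_1^*\eta^{-\gamma}\sqrt{|\mathcal{A}|})$. On the other side, for $j\notin\mathcal{A}$, $|\hat{\pi}_j(EN)|=O_p(\sqrt{T/N})$ by the EN rate, so $\hat{w}_j\gtrsim (N/T)^{\gamma/2}=N^{(1-g)\gamma/2}$ uniformly. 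The KKT inequality therefore holds once $\lambda_1^*N^{(1-g)\gamma/2}\gg \sqrt{NT}=N^{(1+g)/2}$. This is exactly the requirement $\lambda_1^*N^{-1-(g-1)(1+\gamma)/2}\to\infty$ combined with $\gamma>2g/(1-g)$.

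\textbf{Main obstacle.} The delicate step is Step 2's requirement that the rate bounds hold uniformly over $j\notin\mathcal{A}$. Lemmas \ref{lm:EN rate}, \ref{lm:EN error} and \ref{lm:EN b error} give only expectation or $\ell_2$ bounds, not max-coordinate bounds. I would first try to bound the maximum by the $\ell_2$ norm, e.g.\ $\max_j|\tilde{\mathbf{B}}_j'v|\le\|\tilde{\mathbf{B}}'v\|$, which costs at most the $\sqrt{T}$ factor already built into $\sqrt{NT}$. I would also use the uniform lower bound $\min_{j\notin\mathcal{A}}\hat{w}_j\ge(\|\hat{\pi}(EN)-\pi_0\|+1/N)^{-\gamma}$, which avoids any coordinatewise analysis. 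Some care is needed because the generated $\tilde{y}_i^{EN}$ are not independent of $\tilde{\mathbf{B}}$ across the inner folds. I would handle this by conditioning on $\mathcal{W}_{ll'}^C$ fold by fold, as in the proof of Lemma \ref{lm:EN error}, using that $L$ is fixed.
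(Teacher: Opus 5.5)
Your proposal is correct and has the same architecture as the paper. Both arguments check the KKT conditions of (\ref{eq:AEN def}) at $(\hat{\pi}_{\mathcal{A}},0)$, split the residual $\tilde{\mathbf{y}}^{EN}-\tilde{\mathbf{B}}_{\mathcal{A}}\hat{\pi}_{\mathcal{A}}$ into the same four pieces, and handle the two generated-regressor pieces with Lemmas \ref{lm:EN error} and \ref{lm:EN b error}.

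Where you differ is the probabilistic bookkeeping on $\mathcal{A}^C$. The paper, following \cite{AEN}, never bounds a maximum over $j$. It fixes $M=(\lambda_1^*/N)^{1/(1+\gamma)}$ and splits each $j\in\mathcal{A}^C$ by whether $|\hat{\pi}(EN)_j|<M$ (so $\hat{w}_j\geq M^{-\gamma}$). It then applies a union bound plus Chebyshev, which reduces everything to the sums $\sum_{j\in\mathcal{A}^C}|\tilde{\mathbf{B}}_j'(\cdot)|^2$ and $\sum_{j\in\mathcal{A}^C}|\hat{\pi}(EN)_j|^2\leq\|\hat{\pi}(EN)-\pi_0\|_2^2$. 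The choice of $M$ balances both at order $\frac{T}{N}(N/\lambda_1^*)^{2/(1+\gamma)}$. You instead use $\max_{j\notin\mathcal{A}}|\tilde{\mathbf{B}}_j'v|\leq\|\tilde{\mathbf{B}}'v\|_2$ together with the uniform bound $\min_{j\notin\mathcal{A}}\hat{w}_j\geq(\|\hat{\pi}(EN)-\pi_0\|_2+1/N)^{-\gamma}$, so no threshold is needed. The two routes cost exactly the same: $\lambda_1^*(N/T)^{\gamma/2}\gg\sqrt{NT}$ is algebraically identical to $\lambda_1^{*}N^{-1-(g-1)(1+\gamma)/2}\to\infty$. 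Your extra term $\lambda_1^*\eta^{-\gamma}\sqrt{|\mathcal{A}|}$ is absorbed because $|\mathcal{A}|\leq T$ and $\gamma>g/(1-g)$. The paper's version stays in expectations, matching how Lemmas \ref{lm:EN error}, \ref{lm:EN b error} and \ref{lm:EN rate} are stated. Yours makes clearer which rate drives the result and uses those lemmas only through Markov's inequality.

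\textbf{Step 1 is not needed.} The $\mathcal{A}$-block of the full KKT system at $(\hat{\pi}_{\mathcal{A}},0)$ is exactly the subgradient condition of the restricted problem. That condition holds whatever the signs of $\hat{\pi}_{\mathcal{A}}$; a zero coordinate simply satisfies it in inequality form. So, given uniqueness, only the inactive block needs checking, which is all the paper verifies. Sign consistency would only be needed for the further claim that the selected support equals $\mathcal{A}$.

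\textbf{How you bound $\|\hat{\pi}_{\mathcal{A}}-(\pi_0)_{\mathcal{A}}\|_2$ needs fixing.} Step 2 needs this bound, but you cannot get it by plugging $\hat{w}_j=O_p(\eta^{-\gamma})$ into the term $\lambda_1^{*2}\mathbb{E}\sum_{j\in\mathcal{A}}\hat{w}_j^2$ of Lemma \ref{lm:EN rate}. An $O_p$ bound does not control that expectation. The active weights can be of order $N^{\gamma}$ when some $|\hat{\pi}(EN)_j|$, $j\in\mathcal{A}$, is near zero, and that event has probability only $O(T/(N\eta^2))$. Instead, take the deterministic inequality $\|\hat{\pi}_{\mathcal{A}}-\hat{\pi}_{\mathcal{A}}(\lambda_2,0)\|_2\leq\lambda_1^*\sqrt{\sum_{j\in\mathcal{A}}\hat{w}_j^2}/(dN+\lambda_2)$ from the proof of Lemma \ref{lm:EN rate}. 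Restrict it to the event $\{\hat{\eta}>\eta/2\}$, where $\hat{\eta}=\min_{j\in\mathcal{A}}|\hat{\pi}(EN)_j|$. Then bound the weight-free restricted ridge error $\|\hat{\pi}_{\mathcal{A}}(\lambda_2,0)-(\pi_0)_{\mathcal{A}}\|_2$ in expectation. This is what the paper does.

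As in the paper, you also implicitly need $\eta$ bounded away from zero so that $\sqrt{T/N}=o(\eta)$.
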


The above lemma shows that the EN estimator successfully find out the elements of $\pi_0$ that is zero under the stated assumptions. This is equivalent to theorem 3.2 in \cite{AEN}.

\begin{proof}
The proof is to establish that the Kuhn-Tucker condition for the AEN estimator is satisfied. More specifically, we need to bound the probability that AEN will set $|\hat{\pi}(AEN)_j|$ to be greater than zero for all $j\in \mathcal{A}^C$. Here, $\mathcal{A}^C$ is understood as $\{1,\cdots,T\}-\mathcal{A}$.

Formally, we need to show that 
\[\mathbb{P}(\exists j\in \mathcal{A}^C: |-2 \tilde{\mathbf{B}}_j'(\tilde{\mathbf{y}}^{EN}-\tilde{\mathbf{B}}_{\mathcal{A}} \hat{\pi}_{\mathcal{A}})| >\lambda_1^* \hat{w}_j )\rightarrow 0\]

We further define $\eta = \min_{j\in \mathcal{A}} |(\pi_0)_j|$ and $\hat{\eta}=\min_{j\in \mathcal{A}} |(\hat{\pi}(EN))_j|$. Notice
\begin{align}
& \mathbb{P}(\hat{\eta}\leq \eta/2) \nonumber \\
&\leq \mathbb{P}(\|\hat{\pi}(EN)-\pi_0\|\geq \eta/2)\nonumber \\
&\leq \frac{4\mathbb{E}(\|\hat{\pi}(EN)-\pi_0\|_2^2)}{\eta^2} \nonumber \\
&\leq 4\frac{2 \lambda_1^2 T +8 \lambda_2^2 \|\pi_0\|_2^2 + 8 D N T \sigma_{\xi}^2 +o(NT)}{(N d+\lambda_2)^{2} \eta^2 },
\label{eq:EN eta}
\end{align}
where the first inequality is because the norm of $\hat{\pi}(EN)-\pi_0$ must be greater than the difference in the value of smallest elements. The following steps use Markov inequality and results from lemma \ref{lm:EN rate}.

We now apply union bounds to have
\begin{align*}
& \mathbb{P}(\exists j\in \mathcal{A}^C: |-2 \tilde{\mathbf{B}}_j'(\tilde{\mathbf{y}}^{EN}-\tilde{\mathbf{B}}_{\mathcal{A}} \hat{\pi}_{\mathcal{A}})| >\lambda_1^* \hat{w}_j )\\
&\leq \sum_{j\in \mathcal{A}^C} \mathbb{P}(|-2 \tilde{\mathbf{B}}_j'(\tilde{\mathbf{y}}^{EN}-\tilde{\mathbf{B}}_{\mathcal{A}} \hat{\pi}_{\mathcal{A}})| >\lambda_1^* \hat{w}_j, \hat{\eta}>\eta/2)+\underbrace{\mathbb{P}(\hat{\eta}\leq \eta/2)}_{I}
\end{align*}
We already have a bound on term (I) from (\ref{eq:EN eta}). To further bound the summation in the second line, let $M=(\frac{\lambda_1^*}{N})^{1/(1+\gamma)}$ (recall $\gamma$ is used to construct $\hat{w}_j = (\hat{\pi}(EN)_j)^{-\gamma}$.

\begin{align*}
& \sum_{j\in \mathcal{A}^C} \mathbb{P}(|-2 \tilde{\mathbf{B}}_j'(\tilde{\mathbf{y}}^{EN}-\tilde{\mathbf{B}}_{\mathcal{A}} \hat{\pi}_{\mathcal{A}})| >\lambda_1^* \hat{w}_j, \hat{\eta}>\eta/2)\\
&\leq \sum_{j\in \mathcal{A}^C} \mathbb{P}(|-2 \tilde{\mathbf{B}}_j'(\tilde{\mathbf{y}}^{EN}-\tilde{\mathbf{B}}_{\mathcal{A}} \hat{\pi}_{\mathcal{A}})| >\lambda_1^* \hat{w}_j, \hat{\eta}>\eta/2, |\hat{\pi}(EN)_j|<M)\\
&+ \sum_{j\in \mathcal{A}^C}  \mathbb{P}(|\hat{\pi}(EN)_j|\geq M)\\
&\leq \sum_{j\in \mathcal{A}^C} \mathbb{P}(|-2 \tilde{\mathbf{B}}_j'(\tilde{\mathbf{y}}^{EN}- \tilde{\mathbf{B}}_{\mathcal{A}} \hat{\pi}_{\mathcal{A}})| >\lambda_1^* M^{-\gamma}, \hat{\eta}>\eta/2)+\frac{\mathbb{E}(\sum_{j\in \mathcal{A}^C} |\hat{\pi}(EN)_j|^2)}{M^2}\\
&\leq \frac{4 M^{2\gamma}}{\lambda_1^{*2}} \mathbb{E}(\sum_{j\in \mathcal{A}^C} |\tilde{\mathbf{B}}_j'(\tilde{\mathbf{y}}^{EN}-\tilde{\mathbf{B}}_{\mathcal{A}} \hat{\pi}_{\mathcal{A}})|^2 \mathds{1}(\hat{\eta}>\eta/2))\\
&+ \underbrace{\frac{\mathbb{E}(\|\hat{\pi}(EN)_j-\pi_0\|_2^2)}{M^2}}_{II}
\end{align*}

Using lemma \ref{lm:EN rate}, we have a bound on the term (II) as 
\[(II)\leq \frac{2 \lambda_1^2 T +8 \lambda_2^2 \|\pi_0\|_2^2 + 8 D N T \sigma_{\xi}^2 +o(NT)}{(N d+\lambda_2)^{2} M^2}\]

We now deal with $\sum_{j\in \mathcal{A}^C} |\tilde{B}_j'(\tilde{\mathbf{y}}^{EN}- \tilde{\mathbf{B}}_{\mathcal{A}} \hat{\pi}_{\mathcal{A}})|^2$:

\begin{align*}
&\sum_{j\in \mathcal{A}^C} |\tilde{\mathbf{B}}_j'(\tilde{\mathbf{y}}^{EN}-\tilde{\mathbf{B}}_{\mathcal{A}} \hat{\pi}_{\mathcal{A}})|^2\\
&= \sum_{j\in \mathcal{A}^C} |\tilde{\mathbf{B}}_j'(\tilde{\mathbf{B}}_{\mathcal{A}} [(\pi_0)_{\mathcal{A}}-\hat{\pi}_{\mathcal{A}}]+ (\mathbf{B}_{\mathcal{A}}-\tilde{\mathbf{B}}_{\mathcal{A}})(\pi_0)_{\mathcal{A}}+\xi+\tilde{\mathbf{y}}^{EN}-\mathbf{y}^{EN})|^2\\
&\leq 4 \sum_{j\in \mathcal{A}^C} |\tilde{\mathbf{B}}_j' (\tilde{\mathbf{B}}_{\mathcal{A}} [(\pi_0)_{\mathcal{A}}-\hat{\pi}_{\mathcal{A}}])|^2 + 4 \sum_{j\in \mathcal{A}^C}|\tilde{\mathbf{B}}_j' (\mathbf{B}_{\mathcal{A}}-\tilde{\mathbf{B}}_{\mathcal{A}})(\pi_0)_{\mathcal{A}} |^2 \\
& + 4 \sum_{j\in \mathcal{A}^C} |\tilde{\mathbf{B}}_j'\xi|^2 + 4 \sum_{j\in \mathcal{A}^C} |\tilde{\mathbf{B}}_j'(\tilde{\mathbf{y}}^{EN}-\mathbf{y}^{EN})|^2\\
&\leq 4 D^2 N^2 \|(\pi_0)_{\mathcal{A}}-\hat{\pi}_{\mathcal{A}})\|_2^2 +4 (\pi_0)_{\mathcal{A}}'(\mathbf{B}_{\mathcal{A}}-\tilde{\mathbf{B}}_{\mathcal{A}})' \tilde{\mathbf{B}} \tilde{\mathbf{B}}' (\mathbf{B}_{\mathcal{A}}-\tilde{\mathbf{B}}_{\mathcal{A}}) (\pi_0)_{\mathcal{A}} \\
&+ 4  \sum_{j\in \mathcal{A}^C} |\tilde{B}_j'\xi|^2 + 4 \|\tilde{\mathbf{B}} (\tilde{\mathbf{y}}^{EN}-\mathbf{y}^{EN})\|_2^2
\end{align*}

Taking expectations, we have
\begin{align*}
& \mathbb{E}(\sum_{j\in \mathcal{A}^C} |\tilde{\mathbf{B}}_j'(\tilde{\mathbf{y}}^{EN}-\tilde{\mathbf{B}} \hat{\pi}_{\mathcal{A}})|^2 \mathds{1}(\hat{\eta}>\eta/2))\\
&\leq 4 D^2 N^2 \mathbb{E}(\|(\pi_0)_{\mathcal{A}}-\hat{\pi}_{\mathcal{A}})\|_2^2\mathds{1}(\hat{\eta}>\eta/2))+4 D N T \sigma_{\xi}^2+ o(NT)
\end{align*}

We now bound $\|(\pi_0)_{\mathcal{A}}-\hat{\pi}_{\mathcal{A}})\|_2^2$: repeating a similar argument as in lemma \ref{lm:EN rate}, we have
\[\|\hat{\pi}_{\mathcal{A}}-\hat{\pi}_{\mathcal{A}}(\lambda_2,0)\|_2\leq \frac{\lambda_1^* \sqrt{\sum_{j\in \mathcal{A} }\hat{w}_j}}{\lambda_{min}(\tilde{\mathbf{B}}_{\mathcal{A}}'\tilde{\mathbf{B}}_{\mathcal{A}})+\lambda_2}\leq \frac{\lambda_1^* \hat{\eta}^{-\gamma} \sqrt{T}}{d N+\lambda_2}\]
where the last step uses the definition of $\hat{\eta}$ and the definition of $\hat{w}_j$. In addition, $dN \leq \lambda_{min}(\tilde{\mathbf{X}}'\tilde{\mathbf{B}})\leq \lambda_{min}(\tilde{\mathbf{B}}_{\mathcal{A}}'\tilde{\mathbf{B}}_{\mathcal{A}})$ and $\lambda_{max}(\tilde{\mathbf{B}}_{\mathcal{A}}'\tilde{\mathbf{B}}_{\mathcal{A}})\leq \lambda_{max}(\tilde{\mathbf{B}}'\tilde{\mathbf{B}})\leq D N$

Continuing the argument of lemma \ref{lm:EN rate}, and note that $\|\tilde{\mathbf{B}}_{\mathcal{A}}'\xi\|_2^2\leq \|\tilde{\mathbf{B}}'\xi\|_2^2$ and $\|\tilde{\mathbf{B}}_{\mathcal{A}}(\tilde{\mathbf{y}}^{EN}-\mathbf{y}^{EN})\|_2^2\leq \|\tilde{\mathbf{B}}(\tilde{\mathbf{y}}^{EN}-\mathbf{y}^{EN})\|_2^2$
we have

\[\|\hat{\pi}_{\mathcal{A}}(\lambda_2,0)-(\pi_0)_{\mathcal{A}}\|_2^2\leq 4 \frac{\lambda_2^2 \|\pi_0\|_2^2 + \lambda_{max}(\tilde{\mathbf{B}}'\tilde{\mathbf{B}}) |\mathcal{A}| \sigma_{\xi}^2 +o(NT)}{(\lambda_{min}(\tilde{\mathbf{B}}'\tilde{\mathbf{B}})+\lambda_2)^2}\]

Therefore,
\begin{align*}
& \mathbb{E}(\|(\pi_0)_{\mathcal{A}}-\hat{\pi}_{\mathcal{A}})\|_2^2\mathds{1}(\hat{\eta}>\eta/2))\\
&\leq  8 \frac{ \lambda_1^{*2} (\eta/2)^{-2\gamma} T +\lambda_2^2 \|\pi_0\|_2^2 + \lambda_{max}(\tilde{\mathbf{B}}'\tilde{\mathbf{B}}) |\mathcal{A}| \sigma_{\xi}^2 +o(NT)}{(\lambda_{min}(\tilde{\mathbf{B}}'\tilde{\mathbf{B}})+\lambda_2)^2}\\
&\leq 8 \frac{ \lambda_1^{*2} (\eta/2)^{-2\gamma} T +\lambda_2^2 \|\pi_0\|_2^2 + \lambda_{max}(\tilde{\mathbf{B}}'\tilde{\mathbf{B}}) T \sigma_{\xi}^2 +o(NT)}{(d N+\lambda_2)^2}
\end{align*}

Combining all the derivations above, we have
\begin{align*}
& \mathbb{P}(\exists j\in \mathcal{A}^C| |-2 \tilde{\mathbf{B}}_j'(\tilde{\mathbf{y}}^{EN}-\tilde{\mathbf{B}} \hat{\pi}_{\mathcal{A}})| >\lambda_1 \hat{w}_j )\\
&\leq \underbrace{4\frac{2 \lambda_1^2 T +8 \lambda_2^2 \|\pi_0\|_2^2 + 8 D N T \sigma_{\xi}^2 +o(NT)}{(N d+\lambda_2)^{2} \eta^2 }}_{I}\\
&+ \underbrace{\frac{2 \lambda_1^2 T +8 \lambda_2^2 \|\pi_0\|_2^2 + 8 B N T \sigma_{\xi}^2 +o(NT)}{(N d+\lambda_2)^{2} M^2}}_{II}\\
&+ \underbrace{\frac{4 M^{2\gamma}}{\lambda_1^{*2}} \{32 D^2 N^2  \frac{ \lambda_1^{*2} (\eta/2)^{-2\gamma} T +\lambda_2^2 \|\pi_0\|_2^2 + \lambda_{max}(\tilde{\mathbf{B}}'\tilde{\mathbf{B}}) T \sigma_{\xi}^2 +o(NT)}{(d N+\lambda_2)^2} +4 D N T \sigma_{\xi}^2+ o(NT)\}}_{III}
\end{align*}

\[I = O(\frac{T}{N \eta^2})=O((\lambda_1^* \sqrt{T/N}\eta^{-\gamma})^{2/\gamma} (\frac{T}{N} (\frac{N}{\lambda_1^*})^{2/(1+\gamma)})^{\frac{1+\gamma}{\gamma}} T^{-2/\gamma}) =o(1)\]

\[II=O(\frac{T}{N} (\frac{N}{\lambda_1^*})^{2/(1+\gamma)})=o(1)\]
because $\lim_{N\rightarrow \infty} \lambda_1^{*} N^{-1-(g-1)(1+\gamma)/2}=\infty$ implies $\lim_{N\rightarrow \infty}(\frac{N}{\lambda_1^*})^{2/(1+\gamma)} N^{g-1}=0$

\[III = O(\lambda_1^{*\frac{-2}{1+\gamma}} N^{\frac{2}{1+\gamma}+g-1})+O((\frac{\lambda_1^*}{N})^{\frac{2\gamma}{1+\gamma}} \eta^{-2\gamma} T)=o(1)\]

\end{proof}

We are now ready to prove the result that $\|\hat{\pi}(AEN)-\pi_0\|_2^2 =o_p(N^{g-1})$ which establishes the claim that $\|\hat{\pi}(AEN)-\pi_0\|_2^2 =o_p(N^{-\zeta})$ for $\zeta=1-g$. 

\begin{lemma}
Under assumption \ref{as:AEN}, \ref{as:rr} and \ref{as:G}, we have 
\[\|\hat{\pi}(AEN)-\pi_0\|_2^2 =o_p(N^{g-1})\]
\end{lemma}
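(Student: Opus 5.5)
The plan is to use the oracle property in Lemma \ref{lm:EN oracle}. With probability tending to one, $\hat{\pi}(AEN)=((1+\lambda_2/N)\hat{\pi}_{\mathcal{A}},0)$. On that event the error reduces to the error on the active set:
\[\|\hat{\pi}(AEN)-\pi_0\|_2^2=\|(1+\lambda_2/N)\hat{\pi}_{\mathcal{A}}-(\pi_0)_{\mathcal{A}}\|_2^2 .\]
Since an event of vanishing probability does not affect an $o_p$ statement, it suffices to bound this quantity. I would split it as
\[\|(1+\lambda_2/N)\hat{\pi}_{\mathcal{A}}-(\pi_0)_{\mathcal{A}}\|_2^2\leq 2(1+\lambda_2/N)^2\|\hat{\pi}_{\mathcal{A}}-(\pi_0)_{\mathcal{A}}\|_2^2+2(\lambda_2/N)^2\|\pi_0\|_2^2 .\]
The second term is $o(N^{-1})$ because $\frac{\lambda_2}{\sqrt{N}}\|\pi_0\|_2\to 0$ by Assumption \ref{as:AEN}, and $o(N^{-1})=o(N^{g-1})$. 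Also $(1+\lambda_2/N)\to 1$.

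For the first term I would reuse the bound derived inside the proof of Lemma \ref{lm:EN oracle}. Restricted to the event $\{\hat{\eta}>\eta/2\}$, whose complement has vanishing probability by (\ref{eq:EN eta}), it gives
\[\mathbb{E}(\|(\pi_0)_{\mathcal{A}}-\hat{\pi}_{\mathcal{A}}\|_2^2\mathds{1}(\hat{\eta}>\eta/2))\leq 8\,\frac{\lambda_1^{*2}(\eta/2)^{-2\gamma}|\mathcal{A}|+\lambda_2^2\|\pi_0\|_2^2+DN|\mathcal{A}|\sigma_\xi^2+o(NT)}{(dN+\lambda_2)^2}.\]
I would keep $|\mathcal{A}|$ rather than replace it by $T$ wherever the derivation allows this. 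I would then check each piece against $N^{g-1}\asymp T/N$:
\begin{itemize}
\item The noise piece is $O(|\mathcal{A}|/N)=o(T/N)$ by the condition $|\mathcal{A}|/T\to 0$.
\item The ridge piece is $O(\lambda_2^2\|\pi_0\|^2/N^2)=o(1/N)$.
\item The $\lambda_1^*$ piece is $O(\lambda_1^{*2}\eta^{-2\gamma}T/N^2)$. It is $o(T/N)$ because $\lambda_1^*/\sqrt{N}\to0$ and, from the last condition in Assumption \ref{as:AEN}, $\eta^{-\gamma}\sqrt{T}\lambda_1^*/\sqrt{N}\to 0$.
\end{itemize}
Markov's inequality then turns the expectation bound into an $o_p$ statement. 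I would add the two vanishing-probability events (oracle failure and $\hat{\eta}\le\eta/2$) to finish.

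The main obstacle is the remainder $o(NT)/(dN)^2=o(T/N)$. It comes from Lemmas \ref{lm:EN error} and \ref{lm:EN b error}, which only deliver $o(NT)$, and dividing by $N^2$ gives exactly $o(N^{g-1})$, so that piece is tight but suffices. If I need slack I would note that those lemmas actually give $O(N^2/T)$. After division this is $O(1/T)=O(N^{-g})$, which is $o(N^{g-1})$ since $g>1/2$. A second delicate point is that the $|\mathcal{A}|$ noise bound uses $\lambda_{max}(\tilde{\mathbf{B}}_{\mathcal{A}}'\tilde{\mathbf{B}}_{\mathcal{A}})$ times the trace over $|\mathcal{A}|$ columns, and this should be stated explicitly. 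If instead one only has $T$ there, the rate degrades to $O_p(T/N)$. In that case I would fall back on the sparsity condition to recover the little-$o$.
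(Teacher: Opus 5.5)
Your proposal is correct and follows essentially the same route as the paper. Like the paper, you restrict to the oracle event of Lemma \ref{lm:EN oracle}, split off the $\lambda_2$ shrinkage bias using $\lambda_2\|\pi_0\|_2/\sqrt{N}\to 0$, and control the active-set error through the AEN-versus-ridge and ridge-versus-truth bounds, invoking the $\lambda_1^*$--$\eta$ rate condition, $|\mathcal{A}|/T\to 0$, and the $o(NT)$ remainders. The only differences are cosmetic: you reuse the combined bound on $\{\hat{\eta}>\eta/2\}$ together with Markov's inequality where the paper bounds that piece pathwise via $\hat{\eta}/\eta\xrightarrow{p}1$, and you keep $|\mathcal{A}|$ instead of $T$ in the $\lambda_1^*$ term.
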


\begin{proof}
It suffices to show that conditional on the event $\hat{\pi}(AEN)=(1+\lambda_2/N) (\hat{\pi}_{\mathcal{A}},0)$ for $\hat{\pi}_{\mathcal{A}}$ defined in lemma \ref{lm:EN oracle}, we have $\|\hat{\pi}(AEN)-\pi_0\|_2^2 =o_p(N^{g-1})$. This is because we have
\begin{align*}
& \mathbb{P}(\|\hat{\pi}(AEN)-\pi_0\|_2^2 >C)\\
&\leq \mathbb{P}(\|\hat{\pi}(AEN)-\pi_0\|_2^2 >C, \hat{\pi}(AEN)=(1+\lambda_2/N) (\hat{\pi}_{\mathcal{A}},0))+\mathbb{P}(\hat{\pi}(AEN)\neq (1+\lambda_2/N) (\hat{\pi}_{\mathcal{A}},0))
\end{align*}

By lemma \ref{lm:EN oracle}, the second probability converges to zero, so for any $\delta>0$, we can choose $N$ sufficiently large such that the second probability is smaller than $\delta/2$ and then choose $C$ large enough so that the first probability is also smaller than $\delta/2$. Hence, I will now assume 
\[\hat{\pi}(AEN)=(1+\lambda_2/N) (\hat{\pi}_{\mathcal{A}},0)\]

Recall that $1+\lambda_2/N\rightarrow 1$, so equivalently, it suffices to establish that 
\[\|\hat{\pi}_{\mathcal{A}} -\frac{N}{N+\lambda_2}(\pi_0)_{\mathcal{A}}\|_2^2=o_p(N^{g-1})\]

Note the following inequality:
\begin{align*}
& \|\hat{\pi}_{\mathcal{A}} -\frac{N}{N+\lambda_2}(\pi_0)_{\mathcal{A}}\|_2^2\\
&\leq \|\frac{\lambda_2}{N+\lambda_2}(\pi_0)_{\mathcal{A}}+ (\hat{\pi}_{\mathcal{A}}-\hat{\pi}_{\mathcal{A}}(\lambda_2,0)) + (\hat{\pi}_{\mathcal{A}}(\lambda_2,0)-(\pi_0)_{\mathcal{A}})\|_2^2\\
&\leq 3 \underbrace{\|\frac{\lambda_2}{N+\lambda_2}(\pi_0)_{\mathcal{A}} \|_2^2}_{I} + 3 \underbrace{\|\hat{\pi}_{\mathcal{A}}-\hat{\pi}_{\mathcal{A}}(\lambda_2,0)\|_2^2}_{II}\\
&+3 \underbrace{\|\hat{\pi}_{\mathcal{A}}(\lambda_2,0)-(\pi_0)_{\mathcal{A}}\|_2^2}_{III}
\end{align*}

It suffices to bound the three terms in the brackets.

\textbf{Bounding term (I)}

\begin{align*}
& N^{1-g }\|\frac{\lambda_2}{N+\lambda_2}(\pi_0)_{\mathcal{A}} \|_2^2\\
&\leq N^{1-g} \frac{\lambda_2^2}{(N+\lambda_2)^2}\|\pi_0 \|_2^2\\
&\leq N^{1-g} o(N^{-1})\\
&=o(1)
\end{align*}
where the third line uses $\lim_{N\rightarrow \infty} \frac{\lambda_2}{\sqrt{N}}\|\pi_0\|_2=0$.

\textbf{Bounding term (II)}

Recall that in lemma \ref{lm:EN oracle}, we have established that 
\[\|\hat{\pi}_{\mathcal{A}}-\hat{\pi}_{\mathcal{A}}(\lambda_2,0)\|_2\leq \frac{\lambda_1^{*} \hat{\eta}^{-\gamma} \sqrt{T}}{d N+\lambda_2}\]

We will try to replace $\hat{\eta}$ by $\eta$ in the above expression. This is done by noticing that by the definition of $\eta$ and $\hat{\eta}$,

\begin{align*}
& (\hat{\eta}-\eta)^2\\
&\leq \|\hat{\pi}(\lambda_2,\lambda_1)-\pi_0\|_2^2\\
&\leq \frac{2 \lambda_1^2 T +8 \lambda_2^2 \|\pi_0\|_2^2 + 8 D N T \sigma_{\xi}^2 +o(NT)}{(N d+\lambda_2)^{2}}\\
&=o_p(1)
\end{align*}

using lemma \ref{lm:EN rate}. Therefore, for $\eta$ bounded away from zero, $\frac{\hat{\eta}}{\eta}\xrightarrow{p} 1$. 

Consequently,
\begin{align*}
& N^{1-g} \|\hat{\pi}_{\mathcal{A}}-\hat{\pi}_{\mathcal{A}}(\lambda_2,0)\|_2^2\\
&\leq N^{1-g} \frac{\lambda_1^{*2} \hat{\eta}^{-2\gamma} T}{(b N+\lambda_2)^2}\\
&= N^{1-g} \frac{\lambda_1^{*2} \eta^{-2\gamma} T}{(b N+\lambda_2)^2} (\frac{\hat{\eta}}{\eta})^{-2\gamma}\\
&= N^{1-g} o(1/N) O_p(1)\\
&= o_p(N^{-g})=o_p(1)
\end{align*}
where I used that $(\frac{\sqrt{N}}{\sqrt{T} \lambda_1^*})^{1/\gamma}   \min_{j:|(\pi_0)_j|>0} |(\pi_0)_j| \rightarrow \infty$

\textbf{Boundng term (III)}

\begin{align*}
& \|\hat{\pi}_{\mathcal{A}}(\lambda_2,0)-(\pi_0)_{\mathcal{A}}\|_2^2\\
&\leq 4 \|\lambda_2 (\tilde{\mathbf{B}}_{\mathcal{A}}'\tilde{\mathbf{B}}_{\mathcal{A}}+\lambda_2 I)^{-1} (\pi_0)_{\mathcal{A}}\|_2^2\\
&+ 4 \|(\tilde{\mathbf{B}}_{\mathcal{A}}'\tilde{\mathbf{B}}_{\mathcal{A}}+\lambda_2 I)^{-1} \tilde{\mathbf{B}}_{\mathcal{A}}' \xi\|_2^2\\
&+4 \|(\tilde{\mathbf{B}}_{\mathcal{A}}'\tilde{\mathbf{B}}_{\mathcal{A}}+\lambda_2 I)^{-1} \tilde{\mathbf{B}}_{\mathcal{A}}' (\tilde{\mathbf{y}}^{EN}-\mathbf{y}^{EN})\|_2^2\\
&+4 \|(\tilde{\mathbf{B}}_{\mathcal{A}}'\tilde{\mathbf{B}}_{\mathcal{A}}+\lambda_2 I)^{-1} \tilde{\mathbf{B}}_{\mathcal{A}}' (\mathbf{B}_{\mathcal{A}}-\tilde{\mathbf{B}}_{\mathcal{A}}) (\pi_0)_{\mathcal{A}} \|_2^2 \\
&\leq 4 \frac{\lambda_2^2 \|\pi_0\|_2^2}{(d N+ \lambda_2)^2}+4 \frac{Tr(\tilde{\mathbf{B}}_{\mathcal{A}}\tilde{\mathbf{B}}_{\mathcal{A}}' ) \sigma_{\xi}^2}{(d N+ \lambda_2)^2}\\
&+ 4 \frac{(\mathbf{y}^{EN}-\tilde{\mathbf{y}}^{EN})'\tilde{\mathbf{B}} \tilde{\mathbf{B}}'(\mathbf{y}^{EN}-\tilde{\mathbf{y}}^{EN})}{(d N+ \lambda_2)^2}\\
&+4 \frac{(\pi_0)_{\mathcal{A}}'(\mathbf{B}_{\mathcal{A}}-\tilde{\mathbf{B}}_{\mathcal{A}}) \tilde{\mathbf{B}}_{\mathcal{A}} \tilde{\mathbf{B}}_{\mathcal{A}}' (\mathbf{B}_{\mathcal{A}}-\tilde{\mathbf{B}}_{\mathcal{A}}) (\pi_0)_{\mathcal{A}}}{(d N+ \lambda_2)^2}\\
&\leq 4 o(N^{-1}) +4 \frac{ |\mathcal{A}| N B }{(d N+ \lambda_2)^2} + 4 o(\frac{NT}{N^2})\\
&\leq 3 o(N^{-1+g}) +4 O(\frac{|\mathcal{A}|}{N}) + 4 o(\frac{T}{N})
\end{align*}
All three terms will be $o(N^{-1+g})$ if $|\mathcal{A}|/T \rightarrow 0$.

\end{proof}

\section{Supplemental information for empirical application}

\subsection{Missing values}

Although we restricted the time period to 1997-2012, during which observations have much fewer missing values, we still encounter missing values for some counties in the data. Most of these are counties whose administrative divisions have changed during the sample period, or their names have changed and hence dropped out of the dataset. We drop these counties, as they are not eligible to be selected as experimental sites. In addition, we do not observe grain production level for all counties in Sichuan Province for the year 2012. We used 2-nearest neighbor average to impute the grain production per rural employment for year 2012. 
For the rest of missing values that are scattered across the whole dataset, we simply replace it with the province-by-time average. 

\subsection{Regularity conditions in section \ref{as theory}}

In this section, we verify that higher-order derivatives of the moment function satisfy the uniform bound assumed in section \ref{as theory} for the logit model in our empirical application. Before proceeding to the details, we first establish a useful lemma for the results.

\begin{lemma}\label{lm:exp}
Let $x,y\in \mathbb{R}$, consider the function $g(x,y) = \frac{x^s \exp(t x+y)}{(1+\exp(t x+y))^2}$ for $s\in \mathbb{Z}^+$, $s\geq 1$ and $t\neq 0$. It is uniformly bounded over $\{(x,y): (x,y)\in \mathbb{R}\times K\}$ for some compact set $K$. 

\end{lemma}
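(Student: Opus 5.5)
Set $z = tx+y$. Since $\frac{e^{z}}{(1+e^{z})^2} = \Lambda(z)(1-\Lambda(z)) \le \min(1/4,\, e^{-|z|})$, we get
\[
|g(x,y)| \le |x|^s \min\{1/4,\, e^{-|tx+y|}\}.
\]
The plan is to show that the exponential decay in $|x|$ beats the polynomial growth $|x|^s$, uniformly in $y\in K$. Compactness gives a constant $c_K=\sup_{y\in K}|y|<\infty$.

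\textbf{Step 1 (decay bound).} By the reverse triangle inequality, $|tx+y|\ge |t||x|-c_K$ for all $y\in K$. Hence for every $(x,y)\in\mathbb{R}\times K$,
\[
|g(x,y)| \le |x|^s e^{c_K} e^{-|t||x|}.
\]

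\textbf{Step 2 (maximize the one-variable bound).} The function $h(r)=r^s e^{-|t| r}$ on $r\ge 0$ is maximized at $r=s/|t|$, which uses $t\neq 0$. Its maximum value is $(s/(|t|e))^s$. Therefore
\[
\sup_{\mathbb{R}\times K}|g| \le e^{c_K}\left(\frac{s}{|t|e}\right)^s < \infty.
\]

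\textbf{Alternative route.} One could instead split into $|x|\le R$, where $g$ is continuous on the compact set $[-R,R]\times K$, and $|x|>R$, where Step 1 applies. The direct bound above avoids this split.

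\textbf{Main obstacle.} The only delicate point is that the bound must not depend on $y$. This is exactly why compactness of $K$ is needed: it controls the shift $y$ in the exponent through the single constant $c_K$. The condition $t\neq 0$ is what guarantees exponential decay in $x$; without it $g$ would grow like $|x|^s$ and be unbounded.
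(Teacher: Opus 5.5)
Your proof is correct, but it takes a different and more direct route than the paper.

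The paper splits $\mathbb{R}\times K$ into the compact piece $\{|x|\le M\}\times K$, where boundedness follows from continuity, and the two tails. On the tails it shows $g(x,y)\to 0$ as $x\to\pm\infty$ by comparing with $x^s/(1+e^{tx+y})$ or $x^s e^{tx+y}$. It handles the sign of $t$ and the parity of $s$ case by case, and it uses monotonicity of these comparison functions in $y$ to make the tail limits uniform over $K$.

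Your single inequality $e^{z}/(1+e^{z})^2\le e^{-|z|}$ covers both tails and both signs of $t$ at once. The reverse triangle inequality $|tx+y|\ge |t||x|-c_K$ then replaces the paper's monotonicity-in-$y$ argument. What this buys is an explicit global constant $e^{c_K}(s/(|t|e))^s$ with no case split. It also shows that only boundedness of $K$, not closedness, is needed.

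The explicit constant is convenient downstream as well. In Proposition \ref{prop:EA} the bound must hold uniformly over $\mu$ in a neighborhood of $\mu_0$. Your constant is visibly uniform over $t$ in any compact set avoiding $0$, and over shifts $y=W_i'\mu_{-}$ with $W_i$ and $\mu$ in compacts. The paper's fixed-$t$ limit argument does not deliver that uniformity directly.

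Your ``alternative route'' is essentially the paper's proof.
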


\begin{proof}

Consider the set $\mathbb{S}=\{(x,y): |x|\leq M, y\in K\}$. The set $\mathbb{S}$ is compact, and $g(x,y)$ is continuous over $\mathbb{S}$, so $g(x,y)$ is uniformly bounded over $\mathbb{S}$. 

Suppose $t>0$, 
For fixed $y\in K$, consider the function defined by $f_y(x)=g(x,y)$. Applying L'Hopital's law and observing $0<\frac{\exp(t x+y)}{(1+\exp(t x+y))}<1$ and $0<\frac{1}{(1+\exp(t x+y))}<1$, we have

\begin{align*}
& \lim_{x\rightarrow \infty} f_y(x) \\
&= \lim_{x\rightarrow \infty} \frac{x^s \exp(t x+y)}{(1+\exp(t x+y))^2} \\
&\leq \lim_{x\rightarrow \infty} \frac{x^s}{1+\exp(t x+y)}\\
&= 0. 
\end{align*}
As $f_y(x)>0$ for $x>0$, the limit $\lim_{x\rightarrow \infty} f_y(x)$ equals 0. 
In addition, the compactness of $K$ implies that the above can be made uniform in $y\in K$. For instance, there exists a lower bound $-M_y$ such that $y\geq -M_y$ for all $y\in K$. Then simply choose $x_0$ such that for all $x>x_0$, $\frac{x^s}{1+\exp(t x-M_y)}<\epsilon$, then by the monotonicity of $\frac{x^s}{1+\exp(t x+y)}$ in $y$ for fixed $x$, for all $x>x_0$, $f(x,y)<\epsilon$.

Next consider the limit as $x\rightarrow -\infty$. When $s$ is odd, note that
\begin{align*}
& \lim_{x\rightarrow -\infty} f_y(x) \\
&= \lim_{x\rightarrow -\infty} \frac{x^s \exp(t x+y)}{(1+\exp(t x+y))^2} \\
&\geq  \lim_{x\rightarrow -\infty} x^s \exp(t x+y)\\
& =0.
\end{align*} 
As $f_y(x)<0$ for $x<0$ and $s$ odd, the limit $\lim_{x\rightarrow -\infty} f_y(x)$ also equals 0. 

When $s$ is even, we can upper bound $f_y(x)$ by $x^s \exp(t x+y)$ which also has limit zero. 
Using the same reasoning, the convergence can be made uniform for all $y\in K$ using the monotonicity of $x^s \exp(t x+y)$ in $y$ for fixed $x$. 

The case for $t<0$ is completely symmetric, where we upper bound $f_y(x)$ by $x^s \exp(t x+y)$ for $x\rightarrow \infty$  and by $\frac{x^s}{1+\exp(t x+y)}$ when $x\rightarrow -\infty$ and $s$ even and lower bound $f_y(x)$ by $\frac{x^s}{1+\exp(t x+y)}$ for $x\rightarrow -\infty$ and $s$ odd.

Therefore, by taking $M$ that defines $\mathbb{S}$ large enough, we will have $g(x,y)$ being uniformly bounded over $\mathbb{R}\times K$.

\end{proof}

Recall the moment condition for logit model (\ref{eq:mom logit}):

\begin{align*}
\mathbb{E} \begin{bmatrix}  (\text{Experiment}_i- \Lambda(W_i'\mu_{0,-}+\alpha_i \mu_{0,1}) ) \alpha_i \\
(\text{Experiment}_i- \Lambda(W_i'\mu_{0,-}+\alpha_i \mu_{0,1}) ) W_i \end{bmatrix} =\begin{bmatrix}0  \\ \mathbf{0}  \end{bmatrix},
\end{align*}

Let $\Delta_{W_i,\alpha_i}=W_i'\mu_{0,-}+\alpha_i \mu_{0,1}$. Recall the short hand $\Lambda_i= \Lambda(\Delta_{W_i,\alpha_i})$ for the function $\Lambda(x)=\frac{\exp(x)}{1+\exp(x)}$, we get 
\begin{align*}
\frac{\partial m(W_i,\text{Experiment}_i,\alpha_i,\mu_0)}{\partial \alpha_i}=\begin{bmatrix} 
-\mu_{01} \Lambda_i (1-\Lambda_i) \alpha_i + \text{Experiment}_i- \Lambda_i \\
-\mu_{01} \Lambda_i (1-\Lambda_i) W_i \end{bmatrix}
\end{align*}

\begin{align}
\frac{\partial m(W_i,\text{Experiment}_i,\alpha_i,\mu_0)}{\partial \mu} =- \Lambda_i (1-\Lambda_i)  \begin{bmatrix}\alpha_i \alpha_i & W_i' \alpha_i  \\ \alpha_i W_i & W_i W_i'  \end{bmatrix}
\label{eq:ap mu}
\end{align}

\begin{align}
\frac{\partial^2 m(W_i,\text{Experiment}_i,\alpha_i,\mu_0)}{\partial \alpha_i^2} &= \begin{bmatrix}
- 2\mu_{01} \Lambda_i (1-\Lambda_i)- \alpha_i \mu_{01}^2 \Lambda_i (1-\Lambda_i) \frac{1-\exp(\Delta_{W_i,\alpha_i})}{1+\exp(\Delta_{W_i,\alpha_i})}\\
 -\mu_{01}^2 \Lambda_i (1-\Lambda_i) \frac{1-\exp(\Delta_{W_i,\alpha_i})}{1+\exp(\Delta_{w,\alpha})} W_i 
\end{bmatrix}
\label{eq:ap alpha2}
\end{align}

\begin{align}
\frac{\partial^2 m(W_i,\text{Experiment}_i,\alpha_i,\mu_0)}{\partial \alpha_i \partial \mu'}&= -\mu_{01} \Lambda_i (1-\Lambda_i) \frac{1-\exp(\Delta_{W_i,\alpha_i})}{1+\exp(\Delta_{W_i,\alpha_i})} \begin{bmatrix}\alpha_i \alpha_i & W_i' \alpha_i  \\ \alpha_i W_i & W_i W_i'  \end{bmatrix} \nonumber\\
&-\Lambda_i (1-\Lambda_i) \begin{bmatrix}  2 \alpha_i & W_i' \\ W_i & 0 \end{bmatrix} 
\label{eq:ap alphamu}
\end{align}

\begin{align}
& \frac{\partial^3 m(W_i,\text{Experiment}_i,\alpha_i,\mu_0)}{\partial \alpha_i^2 \partial \mu'} \nonumber\\
&=-\mu_{01}^2 \frac{\exp(\Delta_{W_i,\alpha_i})[\exp(2\Delta_{W_i,\alpha_i})-4\exp(\Delta_{W_i,\alpha_i})+1]}{[1+\exp(\Delta_{W_i,\alpha_i})]^4} \begin{bmatrix}\alpha_i \alpha_i & W_i' \alpha_i  \\ \alpha_i W_i & W_i W_i' \end{bmatrix}\nonumber \\
& - 2 \mu_{01}  \frac{\exp(\Delta_{W_i,\alpha_i})[1-\exp(\Delta_{W_i,\alpha_i})]}{[1+\exp(\Delta_{W_i,\alpha_i})]^3}\begin{bmatrix}  \alpha_i & W_i' \\ 0 & \mathbf{0}   \end{bmatrix} \nonumber \\
&-\begin{bmatrix} 2\Lambda_i (1-\Lambda_i) + 2\alpha_i \mu_{01} \Lambda_i (1-\Lambda_i) \frac{1-\exp(\Delta_{W_i,\alpha_i})}{1+\exp(\Delta_{W_i,\alpha_i})}    & \mathbf{0} \\ 2 \mu_{01}\Lambda_i (1-\Lambda_i) \frac{1-\exp(\Delta_{W_i,\alpha_i})}{1+\exp(\Delta_{W_i,\alpha_i})} W_i    & \mathbf{0} \end{bmatrix}
\label{eq:ap alpha2mu}
\end{align}

With the expressions (\ref{eq:ap mu})-(\ref{eq:ap alpha2mu}), we can now verify the assumptions imposed on the moment function in section \ref{as theory} for the logit model. 

\begin{proposition}\label{prop:EA}
Suppose that the random vector $W_i$ has support over a compact set $K_W$, then the following results hold:
\begin{enumerate}
	\item $\sup_{\alpha_i,W_i} \frac{\partial^2 m(W_i,\text{Experiment}_i,\alpha_i,\mu_0) }{\partial \alpha_i^2}=C_m <\infty$
	\item $\sup_{\alpha_i,W_i} \sup_{\mu\in \mathcal{N}(\mu_0)} \frac{\partial^3 m(W_i,\text{Experiment}_i,\alpha_i,\mu_0)}{\partial \alpha_i^2 \partial \mu}<C_m'<\infty$
	\item $\sup_{\alpha} \|\frac{\partial^2 m(W_i,\text{Experiment}_i,\alpha_i,\mu)}{\partial \alpha_i \partial \mu}\|^2\leq F(W_i)$ for all $\mu$ within a neighborhood $\mathcal{N}(\mu_0)$ of $\mu_0$ and $\mathbb{E}(F(W_i))<\infty$. 
	\item For $\mu$ lying in a compact parameter space, 
	\[  \|\frac{\partial m(W_i,\alpha_i,\mu)}{\partial \mu}-\frac{\partial m(W_i,\alpha_i,\mu_0)}{\partial \mu}\|\leq F_{\alpha}(W_i,\alpha_i) \|\mu-\mu_0\|^{1/C}  \quad  \mathbb{E}(F_{\alpha}(W_i,\alpha_i))<\infty\]
	if the following moment conditions hold:
	\begin{enumerate}
		\item $\mathbb{E}(\|W_i\|^4)<\infty$
		\item $\mathbb{E}(\|W_i\|^3 |\alpha_i|)<\infty$
		\item $\mathbb{E}(\|W_i\|^2 \alpha_i^2)<\infty$
		\item $\mathbb{E}(\|W_i\| |\alpha_i|^3)<\infty$
		\item $\mathbb{E}(\alpha_i^4)<\infty$
	\end{enumerate}
\end{enumerate}

\end{proposition}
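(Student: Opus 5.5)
The plan is to reduce every claim to Lemma \ref{lm:exp}, using compactness of $K_W$ and of the relevant $\mu$-sets. Write $\Delta=W_i'\mu_{-}+\alpha_i\mu_{1}$. Every derivative in (\ref{eq:ap mu})--(\ref{eq:ap alpha2mu}) is a finite sum of terms of the form
\[
(\text{polynomial in }W_i\text{ and }\mu)\times \alpha_i^{s}\times h(\Delta),
\]
with $s\in\{0,1,2\}$. Here $h$ is one of $\Lambda(1-\Lambda)=e^\Delta/(1+e^\Delta)^2$, $\Lambda(1-\Lambda)\tfrac{1-e^\Delta}{1+e^\Delta}$, or the third-order factors $e^\Delta(1-e^\Delta)/(1+e^\Delta)^3$ and $e^\Delta(e^{2\Delta}-4e^\Delta+1)/(1+e^\Delta)^4$. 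Each such $h$ is bounded by a constant times $e^{\Delta}/(1+e^{\Delta})^2$, because the extra factors $\tfrac{1-e^\Delta}{1+e^\Delta}$, $\tfrac{1}{1+e^\Delta}$ and $\tfrac{e^\Delta}{1+e^\Delta}$ are all bounded by $1$ in absolute value.

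\textbf{Claims 1 and 2.} Set $x=\alpha_i$, $t=\mu_{01}$ (assumed $\neq 0$; if $\mu_{01}=0$ the $\alpha$-dependence disappears and the bound is trivial) and $y=W_i'\mu_{0,-}$. The set of possible $y$ is compact because $K_W$ is compact. Then each term with $s\ge1$ is bounded by a polynomial-in-$W$ constant times $|g(x,y)|$ from Lemma \ref{lm:exp}, which is uniformly bounded on $\mathbb{R}\times K$. Terms with $s=0$ are bounded because $e^\Delta/(1+e^\Delta)^2\le 1/4$. The polynomial coefficients in $W_i$ are bounded on $K_W$.

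For claim 2 I additionally take $\sup$ over $\mu$ in a compact neighborhood $\mathcal{N}(\mu_0)$ bounded away from $\mu_1=0$. This requires Lemma \ref{lm:exp} to hold uniformly in $t$ over a compact set excluding $0$. I would re-run its proof with the constants chosen uniformly in $t$: the tail bounds $x^s/(1+e^{tx+y})$ and $x^s e^{tx+y}$ are monotone in $|t|$ on each tail, so the worst case is $|t|=\min|\mu_1|$. This extension is the main technical obstacle. Once it is in place, both claims follow, with $y$ ranging over the compact image of $K_W\times\mathcal{N}(\mu_0)$.

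\textbf{Claim 3.} In (\ref{eq:ap alphamu}) the entries $\alpha_i^2 h$ and $\alpha_i h$ (times bounded $W$-factors) are bounded uniformly in $\alpha$ by the same argument, and the $W_iW_i'$ entries are bounded by $\|W_i\|^2$. Hence I take
\[
F(W_i)=C(1+\|W_i\|^2)^2,
\]
which is even bounded on $K_W$, so $\mathbb{E}F<\infty$ trivially.

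\textbf{Claim 4.} By the mean value theorem in $\mu$, the difference $\partial_\mu m(\mu)-\partial_\mu m(\mu_0)$ equals $\partial^2_{\mu\mu}m(\check\mu)\,(\mu-\mu_0)$, so $C=1$ works. Differentiating (\ref{eq:ap mu}) in $\mu$ gives $-\Lambda(1-\Lambda)\tfrac{1-e^\Delta}{1+e^\Delta}\,(\alpha_i,W_i')'\otimes$ the matrix of products. Since that scalar factor is bounded by $1/4$, I bound the norm by $C\|(\alpha_i,W_i)\|^3$. I then take $F_\alpha=C(|\alpha_i|+\|W_i\|)^3$ times the diameter of the parameter space. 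Its expectation is finite by the listed moment conditions (b)--(e) via expanding the cube, and (a) covers the $\|W\|^4$-type cross terms if one uses the cruder bound $(|\alpha|+\|W\|)^4$. No uniformity in $\alpha$ is needed here, which is why moment conditions replace Lemma \ref{lm:exp}.
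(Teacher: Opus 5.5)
Your proposal is correct and follows the paper's own route. Claims 1--3 are reduced to Lemma \ref{lm:exp} via compactness of $K_W$, and claim 4 is a mean-value (Lipschitz) bound on $\Lambda(1-\Lambda)$ in $\mu$ with $C=1$, integrated using the listed moments. The one place you go further is the uniform-in-$t$ version of Lemma \ref{lm:exp} for $\mu_1$ in a compact set bounded away from $0$, which the fixed-$t$ lemma does not deliver but the $\sup_{\mu\in\mathcal{N}(\mu_0)}$ in claims 2--3 actually requires; the paper only asserts that ``the uniformity over $\mu$ does not have any impact on the proof'', and your tail-monotonicity-in-$|t|$ argument supplies that step.
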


\begin{proof}

\textbf{Proof of (i)}: Recall that

\begin{align*}
\frac{\partial^2 m(W_i,\text{Experiment}_i,\alpha_i,\mu_0)}{\partial \alpha_i^2} &= \begin{bmatrix}
- 2\mu_{01} \Lambda_i (1-\Lambda_i)- \alpha_i \mu_{01}^2 \Lambda_i (1-\Lambda_i) \frac{1-\exp(\Delta_{W_i,\alpha_i})}{1+\exp(\Delta_{W_i,\alpha_i})}\\
 -\mu_{01}^2 \Lambda_i (1-\Lambda_i) \frac{1-\exp(\Delta_{W_i,\alpha_i})}{1+\exp(\Delta_{w,\alpha})} W_i 
\end{bmatrix}
\end{align*}
We will verify that each term is uniformly bounded in $\alpha_i$ and $W_i$. First, observe that $\Lambda_i$ and $(1-\Lambda_i)$ lies between 0 and 1, so $|-2 \mu_{01} \Lambda_i (1-\Lambda_i)|\leq 2\mu_{01}$ for all $\alpha_i$ and $W_i$. 

Next we consider $\mu_{01}^2 \Lambda_i (1-\Lambda_i) \frac{1-\exp(\Delta_{W_i,\alpha_i})}{1+\exp(\Delta_{w,\alpha})} W_i $. As $W_i$ has a compact support, it suffices to verify that the term $\Lambda_i (1-\Lambda_i) \frac{1-\exp(\Delta_{W_i,\alpha_i})}{1+\exp(\Delta_{w,\alpha})}$ is uniformly bounded. In particular,
\begin{align*}
&\Lambda_i (1-\Lambda_i) \frac{1-\exp(\Delta_{W_i,\alpha_i})}{1+\exp(\Delta_{w,\alpha})}\\
& = \frac{\exp(\Delta_{W_i,\alpha_i})}{(1+\exp(\Delta_{W_i,\alpha_i}))^3}-\frac{\exp(\Delta_{W_i,\alpha_i})^2}{(1+\exp(\Delta_{W_i,\alpha_i}))^3},
\end{align*}
where the two summands are products of $\Lambda_i$ and $1-\Lambda_i$, so both of them are bounded by 1. Therefore, $\mu_{01}^2 \Lambda_i (1-\Lambda_i) \frac{1-\exp(\Delta_{W_i,\alpha_i})}{1+\exp(\Delta_{w,\alpha})} W_i $ is also uniformly bounded.

Finally, we verify $\alpha_i \mu_{01}^2 \Lambda_i (1-\Lambda_i) \frac{1-\exp(\Delta_{W_i,\alpha_i})}{1+\exp(\Delta_{W_i,\alpha_i})}$ is uniformly bounded. In addition, recall that $\Delta_{W_i,\alpha_i}=W_i'\mu_{0,-}+\alpha_i \mu_{0,1}$. We apply lemma \ref{lm:exp} with $s=1$, $x=\alpha_i$, $t=\mu_{01}$ and $y=W_i'\mu_{0,-}$ which lies in a compact set. Therefore, $\alpha_i \mu_{01}^2 \frac{\exp(\Delta_{W_i,\alpha_i})}{(1+\exp(\Delta_{W_i,\alpha_i}))^2}$ is also uniformly bounded.
Now we further note that $\Lambda_i=\frac{\exp(\Delta_{W_i,\alpha_i})}{1+\exp(\Delta_{W_i,\alpha_i})}$ and $1-\Lambda_i=\frac{1}{1+\exp(\Delta_{W_i,\alpha_i})}$ lie between zero and one. Therefore, any uniformly bounded function multiplied by them will still be unformly bounded. Therefore, we have
\begin{align*}
& \alpha_i \mu_{01}^2 \Lambda_i (1-\Lambda_i) \frac{1-\exp(\Delta_{W_i,\alpha_i})}{1+\exp(\Delta_{W_i,\alpha_i})}\\
&=\alpha_i \mu_{01}^2 \frac{\exp(\Delta_{W_i,\alpha_i})}{(1+\exp(\Delta_{W_i,\alpha_i}))^3}- \alpha_i \mu_{01}^2 \frac{\exp(\Delta_{W_i,\alpha_i})^2}{(1+\exp(\Delta_{W_i,\alpha_i}))^3}
\end{align*}
being the difference of two uniformly bounded functions, so it is also uniformly bounded.

\bigskip

\textbf{Proof of (ii)}: Recall that 

\begin{align*}
& \frac{\partial^3 m(W_i,\text{Experiment}_i,\alpha_i,\mu_0)}{\partial \alpha_i^2 \partial \mu}\\
&=-\mu_{01}^2 \frac{\exp(\Delta_{W_i,\alpha_i})[\exp(2\Delta_{W_i,\alpha_i})-4\exp(\Delta_{W_i,\alpha_i})+1]}{[1+\exp(\Delta_{W_i,\alpha_i})]^4} \begin{bmatrix}\alpha_i \alpha_i & W_i' \alpha_i  \\ \alpha_i W_i & W_i W_i' \end{bmatrix}\\
& - 2 \mu_{01}  \frac{\exp(\Delta_{W_i,\alpha_i})[1-\exp(\Delta_{W_i,\alpha_i})]}{[1+\exp(\Delta_{W_i,\alpha_i})]^3}\begin{bmatrix}  \alpha_i & W_i \\ 0 & \mathbf{0}   \end{bmatrix} \\
&-\begin{bmatrix} 2\Lambda_i (1-\Lambda_i) + 2\alpha_i \mu_{01} \Lambda_i (1-\Lambda_i) \frac{1-\exp(\Delta_{W_i,\alpha_i})}{1+\exp(\Delta_{W_i,\alpha_i})}    & \mathbf{0} \\ 2 \mu_{01}\Lambda_i (1-\Lambda_i) \frac{1-\exp(\Delta_{W_i,\alpha_i})}{1+\exp(\Delta_{W_i,\alpha_i})} W_i    & \mathbf{0} \end{bmatrix}
\end{align*}

Note that each function is continuously differentiable in $\mu$ for $\mu$ in a neighborhood of $\mu_0$, so without loss of generality, we can take the neighborhood to be compact. Therefore, it suffices to prove uniformity in $\alpha_i$ given $W_i$ has compact support.

A similar reasoning as in the proof of (i) shows that the two matrices in the second last line and the last line of the above display are uniformly bounded. It suffices to work with the first matrices. We apply lemma \ref{lm:exp} again, with $s\in \{1,2\}$, $x=\alpha_i$, $t=\mu_{01}$ and $y=W_i'\mu_{0,-}$ again to establish that $\mu_{1}^2 \alpha_i^2 \frac{\exp(\Delta_{W_i,\alpha_i})}{(1+\exp(\Delta_{W_i,\alpha_i}))^2}$ is uniformly bounded. Therefore, for the upper-left corner of the matrix, noticing that 
\begin{align*}
& \mu_{01}^2 \alpha_i^2 \frac{\exp(\Delta_{W_i,\alpha_i})[\exp(2\Delta_{W_i,\alpha_i})-4\exp(\Delta_{W_i,\alpha_i})+1]}{[1+\exp(\Delta_{W_i,\alpha_i})]^4}\\
&=\mu_{01}^2 \alpha_i^2 \frac{\exp(\Delta_{W_i,\alpha_i})^3}{[1+\exp(\Delta_{W_i,\alpha_i})]^4}\\
&-4 \mu_{01}^2 \alpha_i^2 \frac{\exp(\Delta_{W_i,\alpha_i})^2}{[1+\exp(\Delta_{W_i,\alpha_i})]^4}\\
&+\mu_{01}^2 \alpha_i^2 \frac{\exp(\Delta_{W_i,\alpha_i})}{[1+\exp(\Delta_{W_i,\alpha_i})]^4},
\end{align*}
we can write each term as a product of $\mu_{1}^2 \alpha_i^2 \frac{\exp(\Delta_{W_i,\alpha_i})}{(1+\exp(\Delta_{W_i,\alpha_i}))^2}$ and $\Lambda_i$'s and $1-\Lambda_i$'s which are bounded between 0 and 1.

Therefore, $\mu_{1}^2 \alpha_i^2 \frac{\exp(\Delta_{W_i,\alpha_i})}{(1+\exp(\Delta_{W_i,\alpha_i}))^2}$ is uniformly bounded. The reasoning for the other elements of the matrix is exactly the same.

\bigskip

\textbf{Proof of (iii)}: We shall prove that the element-wise squared norm of $\frac{\partial^2 m(W_i,\text{Experiment}_i,\alpha_i,\mu_0)}{\partial \alpha_i \partial \mu}$ is uniformly bounded over a neighborhood of $\mu_0$, $\alpha_i$ and $W_i$. 

\begin{align*}
\frac{\partial^2 m(W_i,\text{Experiment}_i,\alpha_i,\mu_0)}{\partial \alpha_i \partial \mu}&= -\mu_{01} \Lambda_i (1-\Lambda_i) \frac{1-\exp(\Delta_{W_i,\alpha_i})}{1+\exp(\Delta_{W_i,\alpha_i})} \begin{bmatrix}\alpha_i \alpha_i & W_i' \alpha_i  \\ \alpha_i W_i & W_i W_i'  \end{bmatrix} \nonumber\\
&-\Lambda_i (1-\Lambda_i) \begin{bmatrix}  2 \alpha_i & W_i' \\ W_i & 0 \end{bmatrix} 
\end{align*}

It suffices to show that both summands has element-wise squared norm uniformly bounded by an integrable function of $W_i$ over $\mu\in \mathcal{N}(\mu_0)$ and $\alpha_i$. We start with the second matrix. By the boundedness of $\Lambda_i$ and $1-\Lambda_i$ and the compact support of $W_i$, it suffices to show that $\Lambda_i^2 (1-\Lambda_i)^2\alpha_i^2$ is uniformly bounded:
\begin{align*}
& \Lambda_i^2 (1-\Lambda_i)^2\alpha_i^2\\
&=\alpha_i^2 \frac{\exp(\Delta_{W_i,\alpha_i})^2}{(1+\exp(\Delta_{W_i,\alpha_i}))^4}
\end{align*}
The above display is a product of $\alpha_i^2 \frac{\exp(\Delta_{W_i,\alpha_i})}{(1+\exp(\Delta_{W_i,\alpha_i}))^2}$ and $\Lambda_i (1-\Lambda_i)$, so it is uniformly bounded by results in the proof of (ii).

We can also take the neighborhood of $\mu_0$ to be compact so that the uniformity over $\mu$ does not have any impact on the proof. Now we turn to the first matrix. We show that the square of $\mu_{01} \alpha_i^2 \Lambda_i (1-\Lambda_i) \frac{1-\exp(\Delta_{W_i,\alpha_i})}{1+\exp(\Delta_{W_i,\alpha_i})}$ is also uniformly bounded. The reasoning for the other entries are similar and omitted.

Note that 
\begin{align*}
& (\mu_{01} \alpha_i^2 \Lambda_i (1-\Lambda_i) \frac{1-\exp(\Delta_{W_i,\alpha_i})}{1+\exp(\Delta_{W_i,\alpha_i})})^2\\
&= \mu_{01}^2 \alpha_i^4 \frac{\exp(\Delta_{W_i,\alpha_i})^2 }{(1+\exp(\Delta_{W_i,\alpha_i}))^6}\\
&-\mu_{01}^2 \alpha_i^4 \frac{\exp(\Delta_{W_i,\alpha_i})^4 }{(1+\exp(\Delta_{W_i,\alpha_i}))^6}\\
\end{align*}

Here, we apply lemma \ref{lm:exp} again with $s=4$, $x=\alpha_i$, $t=\mu_{1}$ and $y= W_i'\mu_{-}$ to obtain that $\mu_{01}^2 \alpha_i^4 \frac{\exp(\Delta_{W_i,\alpha_i}) }{(1+\exp(\Delta_{W_i,\alpha_i}))^2}$. The two terms are products of the above with $\Lambda_i$'s and $(1-\Lambda_i)$'s which are still uniformly bounded.

\bigskip

\textbf{Proof of (iv):}

We will prove the result with $C=1$. 
It suffices to show that for each element of $\frac{\partial m(W_i,\alpha_i,\mu)}{\partial \mu}$ is satisfies the Lipschitz condition. \[\|\frac{\partial m(W_i,\alpha_i,\mu)}{\partial \mu}-\frac{\partial m(W_i,\alpha_i,\mu_0)}{\partial \mu}\|\leq F_{\alpha}(W_i,\alpha_i) \|\mu-\mu_0\|\] 

Recall that \begin{align*}
\frac{\partial m(W_i,\text{Experiment}_i,\alpha_i,\mu_0)}{\partial \mu} =- \Lambda_i (1-\Lambda_i)  \begin{bmatrix}\alpha_i \alpha_i & W_i' \alpha_i  \\ \alpha_i W_i & W_i W_i'  \end{bmatrix}
\end{align*}

So it suffices to show that the function
\[h(\mu;W_i,\alpha_i)=\frac{\exp(W_i'\mu_{-}+\alpha_i \mu_{1})}{[1+\exp(W_i'\mu_{-}+\alpha_i \mu_{1})]^2 }\] 
is Lipschitz in $\mu$. 

Taking derivative with respect to $\mu$, we have
\[\frac{\partial h(\mu;W_i,\alpha_i)}{\partial \mu'}= \{\frac{\exp(W_i'\mu_{-}+\alpha_i \mu_{1})}{[1+\exp(W_i'\mu_{-}+\alpha_i \mu_{1})]^3}-\frac{\exp(W_i'\mu_{-}+\alpha_i \mu_{1})^2}{[1+\exp(W_i'\mu_{-}+\alpha_i \mu_{1})]^3} \} \begin{bmatrix} \alpha_i &  W_i' \end{bmatrix}\]

Recall that $0<\frac{\exp(W_i'\mu_{-}+\alpha_i \mu_{1})}{[1+\exp(W_i'\mu_{-}+\alpha_i \mu_{1})]^3}<1$ and $0<\frac{\exp(W_i'\mu_{-}+\alpha_i \mu_{1})^2}{[1+\exp(W_i'\mu_{-}+\alpha_i \mu_{1})]^3}<1$. Therefore, the multiplier before the above matrix is bounded in absolute value by 2. Therefore, we can conclude that 

\[ |h(\mu;W_i,\alpha_i)-h(\mu_0;W_i,\alpha_i)|\leq 2\|(\alpha_i,W_i')\| \|\mu-\mu_0\|. \]

Therefore, we have
\begin{align*}
& \| \frac{\partial m(W_i,\alpha_i,\mu)}{\partial \mu}-\frac{\partial m(W_i,\alpha_i,\mu_0)}{\partial \mu}\|^2\\
&\leq 2  F_{\alpha}(W_i,\alpha_i)     \|\mu-\mu_0\|^2,
\end{align*}
where $F_{\alpha}(W_i,\alpha_i)$ is the Frobenius norm of the matrix 
\[\begin{bmatrix} \alpha_i^4+ \|W_i\| ^2 \alpha_i^2+2 |\alpha_i^3| \|W_i\| & \alpha_i^3 W_i' + |\alpha_i| \|W_i\|^2 W_i' +2 \alpha_i^2 \|W_i\|W_i'\\
|\alpha_i^3| W_i + |\alpha_i| \|W_i\|^2 W_i +2 \alpha_i^2 \|W_i\|W_i &  \alpha_i^2 W_i W_i'+\|W_i\|^2 W_i W_i' + 2 |\alpha_i| \|W_i\| W_i W_i'     \end{bmatrix}\]

The moment condition then ensures that the function $F_{\alpha}(W_i,\alpha_i)$ has finite expectation.

\end{proof}

\end{document}